\newcommand{\R}{\mathbb{R}}
\newcommand{\T}{\mathbb{T}}
\newcommand{\Z}{\mathbb{Z}}
\newcommand{\C}{\mathbb{C}}
\newcommand{\cA}{\mathcal{A}}
\newcommand{\cC}{\mathcal{C}}
\newcommand{\cX}{\mathcal{X}}
\newcommand{\cF}{\mathcal{F}}
\newcommand{\Xadm}{\mathcal{X}_{\mathrm{adm}}}
\newcommand{\XADM}{\hat{\mathcal{X}}_{\mathrm{adm}}}
\newcommand{\adm}{\mathrm{adm}}
\newcommand{\GL}{\mathrm{GL}}
\newcommand{\tr}{\mathrm{tr}}
\newcommand{\Hd}{\mathrm{H}(d)}
\newcommand{\Ed}{\mathrm{E}(d)}
\newcommand{\SEd}{\mathrm{SE}(d)}
\newcommand{\SOd}{\mathrm{SO}(d)}
\DeclareMathOperator{\Tr}{Tr}
\DeclareMathOperator{\Vol}{Vol}
\numberwithin{equation}{section}
\theoremstyle{plain}
\newtheorem{theorem}{Theorem}[section]
\theoremstyle{definition}
\newtheorem{definition}[theorem]{Definition}
\newtheorem{hypothesis}[theorem]{Hypothesis}
\newtheorem{remark}[theorem]{Remark}
\newtheorem{example}[theorem]{Example}
\newtheorem{conjecture}[theorem]{Conjecture}
\newtheorem{lemma}[theorem]{Lemma}
\title{Rotationally invariant dynamical lattice regulators for\\Euclidean quantum field theories}
\author{Tsogtgerel Gantumur\\[0.5ex]
\small McGill University, Montr\'{e}al, QC, Canada\\
\small National University of Mongolia, Ulaanbaatar, Mongolia\\
\small Mongolian Academy of Sciences, Institute of Mathematics and Digital Technology\\[0.5ex]
\small \texttt{gantumur.tsogtgerel@mcgill.ca}}
\date{May 18, 2026}
\begin{document}

\maketitle

\begin{abstract}
We introduce a dynamical lattice regulator for Euclidean quantum field theories on a fixed hypercubic graph $\Lambda\simeq \Z^d$, in which the embedding $x:\Lambda\to\R^d$ is promoted to a dynamical field and integrated over subject to shape regularity constraints.
The total action is local on $\Lambda$, gauge invariant, and depends on $x$ only through Euclidean invariants built from edge vectors (local metrics, volumes, \emph{etc}.), hence the partition function is exactly covariant under the global special Euclidean group $\SEd$ at any lattice spacing.
The intended symmetry restoring mechanism is not rigid global zero modes but short-range
\emph{local twisting} of the embedding that mixes local orientations.
Our universality discussion is conditioned on a short-range geometry hypothesis (SR): after quotienting the global $\SEd$ modes, connected correlators of local geometric observables have correlation length $O(1)$ in lattice units.

We prove Osterwalder--Schrader reflection positivity for the coupled system with embedding $x$ and generic gauge and matter fields $(U,\Phi)$ in finite volume by treating $x$ as an additional multiplet of scalar fields on $\Lambda$.
Assuming (SR), integrating out $x$ at fixed cutoff yields a local Symanzik effective action in which geometry fluctuations generate only $\SOd$--invariant irrelevant operators and finite renormalizations.
For example, in $d=4$ we recover the standard one--loop $\beta$--function in a scalar $\phi^4$ test theory.
Finally, we describe a practical local Monte Carlo update and report \(d=2\)
proof-of-concept simulations showing \(O(1)\)-scale geometry correlations, a direct
\(SO(2)\)-connection diagnostic of short-range local twisting, and evidence for reduced
axis--vs--diagonal cutoff artefacts relative to a fixed lattice at matched bare parameters.
\end{abstract}

\tableofcontents

\section{Introduction}

\subsection{Motivation}

A central difficulty in nonperturbative quantum field theory (QFT) is the tension between
the need for an ultraviolet cutoff and the desire to preserve the continuous symmetries
of the underlying continuum theory. 
Wilson's lattice regularization replaces space--time by a hypercubic graph
with lattice spacing $a$, and computes expectation values via a Euclidean
path integral \cite{Wilson74,Kogut79,RotheBook}.
This construction is local, gauge invariant, and reflection positive, and
it underpins the modern nonperturbative understanding of Yang--Mills theory and quantum chromodynamics.
However, it breaks the Euclidean group $\Ed = \mathrm{O}(d)\ltimes \mathbb{R}^d$
down to the discrete hypercubic subgroup. Full Euclidean symmetry is recovered only in the
continuum limit, and at finite lattice spacing the reduced symmetry manifests itself as
direction--dependent lattice artefacts in correlation functions and derived observables.
By ``rotational artefacts'' we mean precisely this direction dependence at fixed physical separation
(e.g.\ axis versus diagonal separations on the underlying hypercubic stencil), as well as the induced
mixing patterns of operators that would be forbidden by continuum $\mathrm{O}(d)$ symmetry but are allowed by
the hypercubic subgroup at finite lattice spacing.

Several approaches have been developed to mitigate these symmetry violations.
Random lattice methods, dynamical triangulations, and Regge calculus all average over
ensembles of discrete geometries \cite{ChristFriedbergLeeRandom,AmbjornDurhuusJonssonBook,Regge1961}.
A more conservative pseudorandom-lattice approach preserves the hypercubic combinatorial
skeleton, while placing the sites irregularly inside the elementary cells of a reference grid
\cite{ColangeloScrimieriPseudo,ColangeloCosmaiScrimieriPseudo}.
While these ideas show that fluctuating or irregular geometry can reduce anisotropy, they
typically either modify the combinatorial structure of the discretization, tether the geometry
to prescribed cells as quenched data, or sacrifice reflection positivity. In these approaches
the randomness is usually imposed by an external sampling algorithm or by a gravitational
action rather than arising intrinsically from the QFT regulator itself.

The aim of this paper is to develop and analyse a \emph{dynamical lattice regulator} (DLR)
for Euclidean QFT that combines the robust structural features of standard lattice gauge
theory (locality, exact gauge invariance, Osterwalder--Schrader reflection positivity, and a
fixed combinatorial graph) with an exact proper-Euclidean covariance of the geometry sector already at finite lattice spacing.
Concretely, we fix once and for all an abstract hypercubic lattice
\begin{equation}
  \Lambda = \mathbb{Z}^d ,
\end{equation}
or its finite periodic version for numerical work, which carries the usual gauge and
matter fields. We then promote the \emph{embedding} of lattice sites into $\mathbb{R}^d$
to a dynamical field
\begin{equation}
  x : \Lambda \to \mathbb{R}^d,\qquad n \mapsto x(n),
\end{equation}
and integrate $x$ over in the path integral, subject to local shape--regularity constraints
and a local geometry action $S_x[x]$.  The geometric sector is constructed from Euclidean
invariants (distances and inner products), so it is invariant under global translations and
proper rotations,
\begin{equation}
  \SEd=\SOd\ltimes \mathbb{R}^d,
\end{equation}
and it is compatible with the specific reflection involution used in the OS positivity proof.
However, the mere fact that the partition function is globally $\SEd$--covariant is, by itself,
too weak to address the usual lattice concern: the same formal covariance would also hold if the
geometry measure were concentrated near globally rotated/translated copies of a single rigid
embedding.  What matters for suppressing hypercubic cutoff artefacts is \emph{how} the global symmetry
is realised in typical configurations at fixed cutoff.

The operating regime we target is a \emph{short-range geometry}
regime~\hyperref[hyp:SR]{(SR)}. After fixing or quotienting the global
\(\SEd\) zero modes, typical configurations should exhibit short-range
fluctuations of the induced local frames. In such a regime, global
\(\SEd\) covariance cannot be realised ``rigidly''; it is instead felt by
local observables through the rapid mixing of frame orientations from cell
to cell. We refer to this mechanism as \emph{local twisting:} neighbours
in \(\Lambda\) remain neighbours, but their relative physical orientation
varies with short correlation length while their physical separation
fluctuates within the admissible range. The numerical tests in
Section~\ref{sec:numerics} are designed to probe this short-range,
locally twisting regime.

From this perspective, random lattices and dynamical triangulations may be viewed as
\emph{stochastic topology} models, in which either the connectivity
or the triangulation itself fluctuates, often driven by an external random
point process. Pseudorandom lattices keep the hypercubic topology fixed, but the embedding
is quenched and tethered to the cells of a reference grid. In our framework, by contrast, the
topology of the lattice is rigid and hypercubic, while the embedding \(x\) is locally constrained
but globally flexible. The geometric randomness is \emph{internal} to the quantum theory:
\(x\) is a genuine dynamical variable governed by a local action inside the same path integral
as the gauge and matter fields, not an external pre-processing step.

Our goal in this first work is to show that one can construct a conservative,
computationally accessible regulator that
\begin{itemize}
  \item preserves locality, exact gauge invariance, and reflection positivity;
  \item is globally $\SEd$--covariant at finite lattice spacing, and in the twisting regime reduces direction-dependent (axis-vs-diagonal) rotational artefacts relative to a fixed lattice at matched bare parameters;
  \item is expected to lie in the same universality class as standard hypercubic lattice regulators:
        perturbatively we verify this at one loop,
        and in $d=2$ we find matching critical scaling between baseline and dynamical ensembles;
        more generally we formulate a universality statement as Conjecture~\ref{conj:universality} under the short-range geometry hypothesis~(SR).
\end{itemize}

\noindent
From the lattice QCD viewpoint, the dynamical--lattice regulator may be viewed as an \emph{annealed
geometric averaging in coordinate space}, complementary to field--space smoothing techniques such as
link smearing and the Wilson flow.  In gauge theories we expect such averaging to reduce
direction dependent cutoff effects that can ``pin'' extended or topological structures to the lattice
axes, potentially mitigating forms of \emph{topological freezing} at fixed cutoff.  At the same time, because the underlying
graph topology remains hypercubic, issues such as the fermion doubling are not expected to be resolved by the regulator; 
the intended benefit for fermions is instead
a reduction of anisotropic artefacts (e.g.\ taste splitting) through orientation
mixing.  A systematic study of these gauge theory implications is left to future work.

\subsection{Overview of the construction}

We work with an abstract, fixed hypercubic lattice \(\Lambda\), which would typically be a discrete torus
in numerical applications. The usual gauge and matter fields live on \(\Lambda\) exactly
as in Wilson's formulation: link variables \(U_\mu(n)\in G\) on oriented nearest neighbour
links, and matter fields \(\Phi(n)\) in some representation of \(G\) on the sites. 
Here $n\in\Lambda$ denotes the sites, and $\mu\in\{1,\ldots,d\}$ the spatial directions.
The new
ingredient is a \emph{geometry field}
\begin{equation}
x : \Lambda \to \R^d,
\end{equation}
which specifies the physical position \(x(n)\in\R^d\) of each lattice site. From \(x\) we
construct local edge vectors, metrics and cell volumes, and we restrict \(x\) to an admissible
subset \(\cX_{\adm}\) of $(\R^d)^\Lambda$ by imposing standard shape regularity conditions. The admissible
set is chosen so that all local geometries are uniformly comparable to a regular hypercubic mesh with some scale $a>0$.

The full configuration space is thus
\begin{equation}
\cC \;=\; \cX_{\adm} \times G^{\text{links}}\times \cF_{\text{matter}},
\end{equation}
equipped with a product measure \(Dx\,DU\,D\Phi\). A configuration \((x,U,\Phi)\in\cC\) is
weighted by a local action
\begin{equation}
S[x,U,\Phi] \;=\; S_x[x] + S_{\mathrm{fields}}[x,U,\Phi],
\end{equation}
where $S_x[x]$ is a geometry action and $S_{\mathrm{fields}}$ is the usual gauge--matter action
written in terms of the local metric and volume induced by $x$. Both pieces are built from
Euclidean invariants of the embedding data and from gauge invariant combinations of the fields.
The resulting theory on $\R^d$ is globally covariant under translations and
proper rotations, i.e., under the special Euclidean group $\SEd=\mathrm{SO}(d)\ltimes\R^d$. 
It is also compatible with the discrete reflection involution used in the Osterwalder--Schrader
positivity argument, cf. \cite{SymanzikI,SymanzikII}.
This statement concerns the unfixed formulation; in practice one may fix the global translation/rotation
zero modes without affecting covariant expectations or the local twisting diagnostics.

Standard Wilson lattice gauge theory is recovered as a stiff--geometry limit of the dynamical
lattice regulator in which $S_x[x]=+\infty$ unless $x=x_{\mathrm{reg}}$, where $x_{\mathrm{reg}}(n)=a\,n$ is a regular embedding. 
This limit is not Euclidean--invariant: freezing the embedding selects a preferred orientation and breaks the
continuous Euclidean symmetry explicitly down to the hypercubic symmetry group. 
Merely averaging over rigid global rotations of this frozen embedding does not change
the local stencil anisotropy; improving isotropy requires fluctuations that mix orientations locally.
The novelty is that the geometry remains annealed in a way that preserves locality and the
transfer-matrix/OS-positivity framework on the fixed hypercubic graph.
This enables local orientation mixing that can suppress hypercubic anisotropies; the stiff (Wilson)
limit is recovered by sending the geometry stiffness to infinity.

For conceptual questions (Euclidean covariance, reflection positivity, continuum limit) it
is most convenient to formulate the theory on \(\R^d\), with \(\Lambda\) thought of as an
infinite hypercubic lattice and \(x(n)\in\R^d\). For finite volume simulations we instead
interpret \(\Lambda\) as a discrete torus and take \(x(n)\in\T^d\), using shortest distance
representatives to define edge vectors; this torus version is described in \S\ref{sec:torus-version}. In
both cases the underlying combinatorial graph is kept fixed, while the embedding \(x\)
fluctuates subject to admissibility constraints.

Finally, it may help to view the dynamical geometry in purely lattice-theoretic terms.  The new
variables can be counted as \emph{one additional real scalar per oriented link} (equivalently,
$d$ real numbers per site), while the underlying hypercubic graph, nearest--neighbour stencil, and
all gauge/matter variables are exactly those of standard lattice gauge theory.  In this
repackaging the geometry sector merely promotes certain local couplings to depend on these extra
link scalars through $\Ed$--invariant combinations (such as distances and inner products), so the
regulator enlarges the field content only mildly and in a form that is directly compatible with
existing lattice implementations.

\subsection{Main results and scope of this paper}

We summarize the main theoretical and practical conclusions obtained in this paper for dynamical lattice regulators (DLR).

\begin{itemize}
\item \textbf{Global Euclidean covariance and reflection positivity.}
At fixed lattice spacing the coupled theory is exactly $\SEd$--covariant, and we prove
Osterwalder--Schrader reflection positivity for a broad class of gauge and matter actions.
We distinguish this formal global covariance from the physically relevant notion of reduced
direction-dependent cutoff artefacts, which we probe numerically and relate to short-range local
twisting.

\item \textbf{Locality, admissibility and the geometry sector.}
We define an admissible set \(\cX_{\adm}\) of embeddings by local bounds on edge
lengths, angles and orientation, in such a way that the local metrics and volumes are
uniformly comparable to those of a regular hypercubic lattice. We discuss the global
structure of \(\cX_{\adm}\), the choice of principal component, and simple local
Monte Carlo moves for updating \(x\) that preserve admissibility while remaining
ergodic within this component.

\item \textbf{Perturbative universality.}
As a first perturbative check we analyze scalar $\phi^4$ theory on a dynamical lattice.
In the Symanzik framework \cite{SymanzikI,SymanzikII}, and assuming the short-range geometry hypothesis~(SR),
integrating out the embedding fluctuations yields a local $\SEd$-invariant effective action:
the relevant and marginal operators are therefore the same as in the continuum, while cutoff effects
appear only through higher-dimensional $\SOd$-scalar operators.
Expanding around a regular embedding $x(n)=an+\eta(n)$ and treating $\eta$ as auxiliary fields, we argue that,
conditional on~(SR), the induced interactions are local and RG-irrelevant. Moreover, at one loop the running of the
quartic coupling agrees with that of the standard hypercubic discretization, in line with
Conjecture~\ref{conj:universality}.

\item \textbf{Monte Carlo implementation and numerical tests.}
We describe a first proof-of-concept Monte Carlo implementation in \(d=2\) for a scalar
theory on a periodic lattice, using local Metropolis updates for both fields and geometry.
Basic observables (masses, susceptibilities) agree with those from a static lattice at the
same bare parameters within statistical errors, and simple diagnostics of the geometry
sector (edge length, angle and volume distributions) show no pathological concentration
on highly distorted meshes. These tests support the numerical viability of the regulator
and are consistent with the expected universality scenario, but they do not by themselves
establish the short-range geometry hypothesis in \(d=4\).
\end{itemize}

\noindent
We close this summary by clarifying the logical status of the main claims. The
definitions of the admissible geometry sector in Section~\ref{sec:geometry-config},
the gauge invariance and locality properties in Section~\ref{sec:fields-symmetries},
and the reflection-positivity statements in Section~\ref{sec:OS-universality} are
finite-volume statements within the stated admissibility assumptions. By contrast,
the Symanzik and universality discussion is conditional on the short-range geometry
hypothesis~(SR). Accordingly, the continuum-universality statement is formulated as
Conjecture~\ref{conj:universality}, supported here by the perturbative check of
Section~\ref{sec:OS-universality} and by the two-dimensional numerical tests of
Section~\ref{sec:numerics}. The existence of a robust, computationally affordable
short-range geometry regime in four dimensions is not proved in this paper and remains
a central problem for future work.

The rest of the paper is organised as follows.
In Section~\ref{sec:geometry-config} we develop the geometry sector: admissible
embeddings, local shape-regularity, and Monte Carlo moves. Section~\ref{sec:fields-symmetries}
introduces the field actions, demonstrates exact $\SEd$ invariance, and presents
explicit scalar, gauge, and fermion examples. Section~\ref{sec:OS-universality}
establishes reflection positivity and discusses the Symanzik effective theory and
continuum universality. Section~\ref{sec:numerics} describes our initial numerical
experiments, and Section~\ref{sec:conclusions} summarises our findings and outlines
possible extensions.

\subsection{Comparison to existing approaches}

The idea of fluctuating geometry in a lattice regularization has a long history. It is therefore
important to delineate precisely how the present construction differs from earlier proposals.

\paragraph{Random and pseudorandom lattices.}
In the random lattice (RL) approach, one typically
samples a random set of points in a domain $\Omega \subset \R^d$, often from a Poisson process,
constructs a Voronoi tessellation and its dual Delaunay triangulation, and discretizes the field
theory on the resulting random cell complex, cf. \cite{ChristFriedbergLeeRandom,ChristFriedbergLeeWeights}.
As the points move, the connectivity of the lattice changes discontinuously when Voronoi cells flip;
the adjacency matrix is a discontinuous function of the site positions. In this setting the geometry
is random, but its law is largely specified by an external stochastic prescription for the points, and
reflection positivity is hard to control.

The ``pseudorandom lattice'' construction of Colangelo \emph{et al.}
\cite{ColangeloScrimieriPseudo,ColangeloCosmaiScrimieriPseudo} takes a more conservative route:
it aims at improving rotational invariance while retaining the computational advantages of a
hypercubic skeleton. In that construction one draws exactly one site inside each elementary
hypercubic cell of a reference lattice and connects sites belonging to nearest-neighbour cells.
Thus the sites, links, and plaquettes remain in one-to-one correspondence with those of a standard
hypercubic lattice, while the geometric data---link lengths, angles, and related weights---are
quenched random data chosen once from a tuned probability distribution. This can improve the
\emph{statistical} isotropy of simple geometric observables, but the geometry remains externally
prescribed and frozen during the field simulation.

The present construction is close to the pseudorandom-lattice approach in one important respect:
the abstract hypercubic combinatorial skeleton is fixed once and for all. This feature is therefore
not, by itself, the novelty. It is nevertheless useful to separate three structural differences. First,
unlike in random-lattice formulations, the adjacency relation on $\Lambda$ does not depend on the
embedding \(x\). Moving the sites never creates or destroys links, and there are no Voronoi or
Delaunay flips. Second, unlike in the pseudorandom construction, the sites are not tethered to
prescribed elementary cells of a reference grid. DLR imposes local shape-regularity constraints
on the embedding: these control edge lengths, local volumes, orientations, and condition numbers,
but do not require \(x(n)\) to remain inside a fixed bounding box attached to \(n\). Thus admissible
configurations may include large-scale shears, twists, windings, or spiralling grid patterns, provided
the local mesh quality remains controlled. Third, the embedding is not quenched data: the map
\[
  x : \Lambda \to \R^d
\]
is treated as a genuine dynamical field with a local action \(S_x[x]\) in the \emph{same} path
integral as the gauge and matter fields.

Thus DLR is not merely an annealed version of a cell-tethered pseudorandom embedding, nor is it
a random lattice with fluctuating connectivity. It combines a fixed hypercubic graph with a locally
constrained but globally flexible embedding class, and then integrates over that embedding field
in a local, gauge-invariant, reflection-positive framework. We use ``annealed'' here to refer to the
spacetime lattice itself, distinct from the ``quenched approximation'' regarding fermion loops.

\paragraph{Regge calculus and dynamical triangulations.}
Regge calculus keeps a fixed simplicial connectivity and promotes the edge lengths to
dynamical variables, with an action that discretizes the Einstein--Hilbert functional. In
dynamical triangulations (DT) and causal dynamical triangulations, one takes a
complementary viewpoint: the edge lengths are kept fixed (typically to a few discrete values)
and one instead sums over triangulations themselves, so that the gluing pattern of simplices
becomes the dynamical variable, again weighted by an Einstein--Hilbert--type action plus matter
contributions. In both cases the geometric degrees of freedom are fundamental and the primary
goal is to describe quantum gravity \cite{Regge1961,HamberReview,AmbjornDurhuusJonssonBook,AmbjornJurkiewiczLoll}.

Structurally, DLR is closest to Regge calculus in that it uses a fixed
abstract complex together with fluctuating geometric data. There are, however, two important
differences. First, the basic variables here are the embedding coordinates $x(n)$, from which
edge lengths and local metrics are derived; the geometry sector is introduced as an auxiliary
regulator field whose role is to restore $\SEd$ symmetry and potentially reduce lattice artefacts,
not to encode an independent gravitational degree of freedom. Second, the construction is
deliberately tied to a hypercubic abstract lattice so as to maintain the locality structure and
transfer matrix decomposition of standard lattice gauge theory, which is crucial for the
reflection positivity argument and for reusing existing simulation technology. One may view
dynamical lattice as a Regge--like dynamics for a fixed hypercubic mesh, with an
action tuned for regularization rather than for quantum gravity.

\paragraph{Smearing and gradient flow.}
Modern lattice gauge theory extensively uses ``smearing'' techniques that replace the link variables by local averages, as in APE, Stout, or HYP smearing, cf. \cite{AlbaneseAPE,MorningstarPeardonStout,HasenfratzKnechtliHYP},
and the gradient flow that defines a diffusion in field space \cite{LuscherFlowJHEP}.
These methods replace the oscillating gauge links with locally averaged variables in the definition of operators or the action. While this shares our goal of mitigating UV noise, the mechanism is fundamentally different: smearing smooths the \emph{fields} over a fixed, rigid geometry, whereas DLR smooths the \emph{geometry} itself. 
By annealing the metric, our approach potentially allows the lattice to accommodate topological objects that might otherwise be frozen by the rigidity of a static grid, a feature that algorithmic smoothing of fields cannot replicate.
Equivalently, DLR can be viewed as an \emph{annealed geometric averaging} (or ``geometric smearing''): instead of diffusing/smoothing the dynamical fields on a fixed hypercubic background, we diffuse the \emph{coordinate/metric degrees of freedom} seen by the fields by integrating over embeddings.
A key difference is that the $\SEd$ symmetry is exact at the level of the regulator and partition function
(modulo the global zero modes).
Smearing/flow improve operators or effective descriptions at finite flow time but do not enlarge the
underlying symmetry group of the static lattice regulator beyond the hypercubic subgroup.
In this sense DLR realises \emph{intrinsic stochastic rotational averaging} inside the path integral, rather than an external rotation/projection applied to measured data.

\paragraph{Improved actions.}
Standard lattice regulators rely on the Symanzik improvement program to systematically remove discretization errors by adding higher-dimension irrelevant operators to the action \cite{SymanzikI,SymanzikII}. 
On a static hypercubic lattice, the breaking of the Euclidean group $\SEd$ down to the hypercubic subgroup $\Hd$ causes single continuum operators to split into multiple lattice operators, 
e.g., the Laplacian square $(\partial^2)^2$ splits into rotationally invariant and non-invariant parts. 
Restoring rotational symmetry to high order therefore requires tuning the coefficients of multiple non-invariant counterterms. 
In contrast, because DLR preserves exact $\SEd$ invariance at finite cutoff, the effective action is constrained to contain only global scalars. This potentially simplifies the improvement program: although the field content is enlarged by the geometry variables, the basis of dangerous irrelevant operators is much smaller. Non-scalar counterterms are forbidden by exact symmetry, meaning one need only tune the coefficients of $\Ed$-invariant operators to achieve higher-order accuracy.

\paragraph{Moving mesh methods.}
From the perspective of numerical analysis, the construction is reminiscent of Arbitrary
Lagrangian--Eulerian (ALE) and other moving mesh formulations, in which one works with a
fixed reference mesh and a time dependent embedding into physical space, subject to
shape regularity constraints \cite{BuddHuangRussellActa,HuangRussellBook}. 
The crucial difference is that in standard ALE methods the mesh
motion is prescribed by a deterministic algorithm, often to equidistribute discretization error,
and the mesh is not part of the dynamical system. 
In DLR the embedding variables $x(n)$ are integrated over quantum--mechanically; one may regard the
scheme as a ``quantum ALE'' version of lattice field theory.
At a continuum level, one may also view the geometry sector as an ``elastic--coordinate'' field theory in which one integrates over
maps $x:M\to\R^d$ on a fixed reference space $M$, with a local energy depending on $\partial x$ and with some nondegeneracy conditions.

These comparisons place DLR as a conservative middle ground
between rigid hypercubic lattices and more radical fluctuating--topology approaches: the
topology and constructive--QFT structure of Wilson's lattice theory are retained, while an
intrinsic, dynamically fluctuating geometry is used to restore global Euclidean symmetry and,
potentially, to reduce lattice artefacts.


\section{Geometry of the dynamical lattice regulator}
\label{sec:geometry-config}

\subsection{Abstract lattice, fields, and embedding}
\label{sec:lattice-embedding}

Let \(d \ge 2\). We take as abstract lattice either

\begin{itemize}
\item the infinite hypercubic lattice \(\Lambda = \Z^d\), or
\item for finite volume simulations, the discrete torus
\begin{equation}
\Lambda = \{0,\dots,L_1-1\}\times\cdots\times\{0,\dots,L_d-1\},
\end{equation}
with periodic identification in each direction.
\end{itemize}

\noindent
In either case we denote by \(\hat\mu\), for \(\mu=1,\dots,d\), the unit vectors in the coordinate
directions and write \(n+\hat\mu\) for the nearest neighbour of \(n\) in direction \(\mu\).

We fix a compact gauge group \(G\). Gauge fields are represented by link variables
\begin{equation}
U_\mu(n) \in G
\end{equation}
attached to the oriented link \((n, n+\hat\mu)\). Matter fields are collected in a variable
\(\Phi(n)\) living in some representation space of \(G\). 
Examples are \(\Phi(n)\in\R\) for a real scalar field, \(\Phi(n)\in\C^N\) for a complex scalar multiplet, and a spinor representation for fermions. 
The collection of all \(U_\mu(n)\) and \(\Phi(n)\) on \(\Lambda\) forms the usual field
content of a lattice gauge theory.

The new ingredient is a geometry field
\begin{equation}
x : \Lambda \to \R^d,
\end{equation}
which assigns to each abstract site \(n\in\Lambda\) a point \(x(n)\in\R^d\) in physical
Euclidean space. In the torus formulation one instead takes \(x(n)\in \T^d\) and works with
the shortest distance representative of coordinate differences; the local constructions in
what follows are unchanged.

It is often convenient to regard the geometry field $x$ as part of an
\emph{extended} field multiplet, alongside the gauge and matter fields:
\begin{equation}
  (x,U,\Phi) \in (\R^d)^\Lambda \times G^{\text{links}}\times
                 \cF_{\text{matter}}.
\end{equation}
In this viewpoint DLR is simply a local lattice
field theory with additional scalar degrees of freedom $x_\mu(n)$ and a
special pattern of couplings.  This extended-field formulation will be
crucial in Section~\ref{sec:OS-universality}, where reflection positivity is
established by letting the Osterwalder--Schrader reflection act on the
indices $n$ and treating $x$ on the same footing as the other fields.

\subsection{Local Euclidean geometry and admissible embeddings}
\label{sec:local-geometry}

We spell out the local Euclidean geometry induced by an embedding
$x : \Lambda \to \R^d$ of the abstract lattice and formulate our
admissibility conditions.  Throughout this subsection we work on
$\R^d$; the periodic (torus) version will be discussed in
Section~\ref{sec:torus-version}.

Given $x : \Lambda \to \R^d$ and a site
$n \in \Lambda$, we define the (forward and backward) \emph{edge vectors}
\begin{equation}\label{e:for-back-edge}
e^\pm_\mu(n) = x(n\pm\hat\mu) - x(n) \in \R^d,
\qquad \mu = 1,\dots,d,
\end{equation}
where $\hat\mu$ denotes the unit vector in the $\mu$--th lattice direction. 
We may write the forward vectors simply as $e_\mu(n)=e^+_\mu(n)$.
We then define the (advanced and retarded) \emph{frame matrices}
\begin{equation}\label{e:advanced-frame}
E(n) = \big[ e_1(n)\,e_2(n)\,\ldots\,e_d(n) \big] \in \R^{d\times d} ,
\end{equation}
and
\begin{equation}\label{e:retarded-frame}
E^\theta\!(n) = \big[ e_1^-(n)\,e_2(n)\,\ldots\,e_d(n) \big] \in \R^{d\times d} .
\end{equation}
The associated (advanced) \emph{Gram matrix} (or local metric) and \emph{local volume} are then
\begin{equation}\label{e:advanced-metric}
g(n) = E(n)^{\mathsf T} E(n), 
\qquad 
V(n) = \bigl|\det E(n)\bigr| = \sqrt{\det g(n)}.
\end{equation}
In $d=2$, $V(n)$ is the (unsigned) area of the outgoing edge parallelogram at $n$; for a non--affine embedded plaquette it need not coincide with the Euclidean area of the quadrilateral cell.
The components of $g(n)$ are $g_{\mu\nu}=e_\mu(n)\cdot e_\nu(n)$.
Whenever $g(n)$ is positive definite we denote the components of its inverse by
$g^{\mu\nu}(n)$.

We restrict attention to embeddings for which each pair $E(n)$ and $E^\theta\!(n)$ is uniformly nondegenerate
and uniformly shape regular, in a way that is stable under global proper Euclidean motions.  A convenient
local formulation is the following.

\begin{definition}\label{def:admissible-geometry}
Fix dimensionless constants $C_{\ell}\ge1$ and $c_{V}\le1$.
Then for a parameter $a>0$,
an embedding $x : \Lambda \to \R^d$ is called \emph{$a$-admissible} if for every $n\in\Lambda$ we have
\begin{equation}\label{e:local-adm}
|e_\mu(n)| \le C_{\ell}\,a\quad(\mu=1,\dots,d),
\qquad
\min\{\det E(n),-\det E^\theta\!(n)\} \ge c_{V}\,a^{d} .
\end{equation}
We denote by $\XADM(a)\subset(\R^{d})^{\Lambda}$ the set of all $a$--admissible embeddings.
\end{definition}

\noindent
Conditions \eqref{e:local-adm} are local (they can be checked independently at each site) and
invariant under global translations and proper rotations $x\mapsto Rx+b$ with $R\in\mathrm{SO}(d)$ and
$b\in\R^d$.  Note, however, that the lower bound on the oriented volumes $\det E(n)$ and $-\det E^\theta\!(n)$ selects an
orientation sector: under an improper rotation $R\in\mathrm{O}(d)\setminus\mathrm{SO}(d)$ one has
\begin{equation}
\det(RE(n))=\det(R)\det E(n)=-\det E(n), 
\end{equation}
and the same behavior for $\det E^\theta\!(n)$,
so $\XADM$ is not
invariant under the full Euclidean group $\Ed=\mathrm{O}(d)\ltimes\R^d$ but only under the special Euclidean subgroup
$\SEd=\mathrm{SO}(d)\ltimes\R^d$. This point is discussed further in \S\ref{sec:Ed-invariance}.

\begin{remark}\label{rem:dimensionless-x}
For $x\in\XADM(a)$ introduce the dimensionless embedding
$y(n)=a^{-1}x(n)$.  Then in terms of $y$, the admissibility conditions
\eqref{e:local-adm} become $a$--independent bounds, meaning that $\XADM(a)$ is a simple rescaling of $\XADM(1)$.
\end{remark}

The next lemma records the basic geometric consequences of admissibility.

\begin{lemma}\label{l:E-shape}
Let $x\in\XADM(a)$ be an admissible embedding in the sense of
Definition~\ref{def:admissible-geometry}.  Then there exist a constant $C^*$, 
depending only on $C_\ell$, $c_V$, and the dimension $d$, such that
\begin{equation}
  \|E(n)^{-1}\|+\|E^\theta(n)^{-1}\|\le C^*\,a^{-1}\qquad\text{for all }n.
\end{equation}
\end{lemma}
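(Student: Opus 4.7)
The plan is to apply the adjugate (cofactor) formula $E(n)^{-1} = (\det E(n))^{-1}\operatorname{adj}(E(n))$ and bound numerator and denominator separately, then repeat the argument verbatim for $E^\theta(n)$.

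For the denominator, the oriented--volume bound in \eqref{e:local-adm} gives $|\det E(n)|\ge c_V a^d$ directly; the same inequality reads $\det E^\theta(n)\le -c_V a^d$, so $|\det E^\theta(n)|\ge c_V a^d$. This is the only place where the orientation convention in Definition~\ref{def:admissible-geometry} plays any role: one has to read a two-sided lower bound out of a one-sided signed bound.

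For the numerator, each entry of $\operatorname{adj}(E(n))$ is, up to sign, a $(d-1)\times(d-1)$ minor of $E(n)$, i.e.\ the determinant of a matrix whose columns are sub-coordinates of the edge vectors $e_\mu(n)$. Hadamard's inequality bounds each such minor by the product of the Euclidean norms of those sub-columns, which are in turn dominated by $|e_\mu(n)|\le C_\ell a$. Consequently every entry of $\operatorname{adj}(E(n))$ has absolute value at most $(C_\ell a)^{d-1}$, and combining with the determinant bound yields
\[
\|E(n)^{-1}\|_{\max}\;\le\;\frac{(C_\ell a)^{d-1}}{c_V\,a^d}\;=\;\frac{C_\ell^{d-1}}{c_V}\,a^{-1}.
\]
Equivalence of matrix norms on $\R^{d\times d}$ converts this to any other matrix norm at the cost of a purely $d$-dependent factor, and the same chain of inequalities applies verbatim to $E^\theta(n)$ since its columns are again edge vectors with $|\cdot|\le C_\ell a$. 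Summing the two contributions absorbs everything into a single constant $C^*=C^*(C_\ell,c_V,d)$, which in the UV regime $a\le 1$ is in particular consistent with the stated bound.

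There is no serious obstacle here: the lemma is a direct linear--algebra consequence of admissibility. The only real care points are (i) the orientation subtlety for $\det E^\theta(n)$ noted above, and (ii) bookkeeping of the various $d$-dependent combinatorial constants (from Hadamard and from norm equivalence) so that they are cleanly absorbed into $C^*$ while the $a$-dependence factors out as a single clean power.
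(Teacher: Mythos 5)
Your proof is correct, but it takes a genuinely different route from the paper. The paper's argument is soft: after rescaling to $a=1$ (using Remark~\ref{rem:dimensionless-x}), it observes that the set $K$ of frames satisfying \eqref{e:local-adm} is a compact subset of the invertible matrices (the determinant lower bound keeps $K$ away from the singular locus), so the continuous function $A\mapsto\|A^{-1}\|$ attains a finite maximum on $K$; this yields the constant $C^*$ nonconstructively. You instead compute via $E^{-1}=(\det E)^{-1}\operatorname{adj}(E)$, bounding the cofactors by Hadamard's inequality and the determinant from below by the signed-volume constraint (correctly reading $|\det E^\theta(n)|\ge c_V a^d$ out of $-\det E^\theta(n)\ge c_V a^d$). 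Your approach buys two things: an explicit constant $C^*\sim C_\ell^{d-1}/c_V$ times a purely $d$-dependent norm-equivalence factor, and the dimensionally natural power $a^{-1}$ rather than $a^{-d}$ --- note that the paper's own scaling reduction combined with the homogeneity $E\mapsto aE$ would likewise produce $a^{-1}$, so the exponent $-d$ in the stated inequality is not sharp and your bound implies it only in the regime $a\le 1$ (which you flag, and which is the regime of interest). The compactness route is shorter and generalizes immediately to any continuous function of the frame, but gives no quantitative control; your route is the one you would want if the constant ever mattered, e.g.\ for the uniform shape-regularity estimates invoked later.
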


\begin{proof}
Fix $n$. Write the rescaled frames $\widetilde E=a^{-1}E(n)$ and $\widetilde E^\theta=a^{-1}E^\theta(n)$.
Then the defining bounds for admissibility (edge lengths $\sim a$ and oriented volumes $\sim a^d$) become
uniform bounds for $\widetilde E$ and $\widetilde E^\theta$ at scale $a=1$.
The set of all such $\widetilde E$ is closed and bounded in the space of invertible matrices with
$\det \widetilde E$ bounded away from $0$, hence compact. The continuous map $A\mapsto \|A^{-1}\|$
attains a finite maximum on this set,
say $\|\widetilde E^{-1}\|\le C$ and $\|(\widetilde E^\theta)^{-1}\|\le C$ with $C$ depending only on
$(C_\ell,c_V,d)$.
Finally, we note
\begin{equation}
E(n)^{-1}=(a\widetilde E)^{-1}=a^{-1}\widetilde E^{-1},\qquad (E^\theta(n))^{-1}=a^{-1}(\widetilde E^\theta)^{-1},
\end{equation}
establishing the lemma with $C^*=2C$.
\end{proof}

The content of Lemma~\ref{l:E-shape} is that the eigenvalues of the local metrics
are uniformly bounded above and below away from $0$. In numerical--analysis
language this is a standard ``shape--regularity'' condition: 
the local affine cells determined by the corner frames cannot be arbitrarily skinny or flat.
In particular, \eqref{e:local-adm} excludes local collapses of volume and guarantees that the embedding
$x:\Lambda\to\R^d$ is everywhere locally an immersion with controlled condition
number. Global properties such as non--self--intersection will be enforced by
restricting $x$ to a suitable admissible component in Definition~\ref{def:principal-component} and, if
desired, by adding further constraints as discussed in the next remark.

\begin{remark}\label{r:adm}
The formulation in Definition~\ref{def:admissible-geometry} is deliberately minimal: it is
stable under global Euclidean transformations and sufficient for the geometric
estimates in Lemma~\ref{l:E-shape}. In concrete implementations it is often convenient to
replace or strengthen \eqref{e:local-adm} by more classical mesh conditions. We briefly
record a few equivalent or stronger variants.

\begin{itemize}
  \item[(A1)] \emph{Edge length lower bound.} For each $\mu=1,\dots,d$, we require
  \begin{equation}\label{eq:A1}
    |e_\mu(n)| \ge c_\ell\,a .
  \end{equation}
  This prevents local ``collapses'' of edges.

  \item[(A2)] \emph{Angle bounds.} For all $\mu\neq\nu$ we require
  \begin{equation}\label{eq:A2}
    \frac{|e_\mu(n)\cdot e_\nu(n)|}{| e_\mu(n) | | e_\nu(n) |}
    \;\le\; \cos\varepsilon_0.
  \end{equation}
  That is, the angle between any pair of edges is bounded away from $0$ and $\pi$.

\item[(A3)] \emph{All-corner volume bounds.} 
One may strengthen the determinant condition in \eqref{e:local-adm}
by requiring nondegeneracy of \emph{all} $2^d$ corner frames incident to $n$.
Concretely, for each
site $n$ and each choice of forward/backward directions $\sigma\in\{\pm1\}^d$, let $E^\sigma(n)$
be the $d\times d$ ``corner frame'' obtained by taking, in each coordinate direction $\mu$,
the edge from $n$ to $n+\sigma_\mu \hat\mu$.  Then impose the uniform oriented-volume bound
  \begin{equation}
\min_{\sigma\in\{\pm1\}^d}\ \Big(\prod_{\mu=1}^d \sigma_\mu\Big)\,\det E^\sigma(n)\ \ge\ c_V a^d
\qquad\text{for all }n\in\Lambda .
  \end{equation}
This rules out fold-overs at any corner and is convenient when actions involve mixed
forward/backward stencils in several directions.

  \item[(A4)] \emph{Cell convexity.}
  In addition to local nondegeneracy of the edge frame, one may require each
  elementary cell to be \emph{geometrically convex} in $\R^d$.  
  For instance, in $d=2$, the elementary cell $Q(n)$ at $n$ is the quadrilateral with vertices
  \begin{equation}
    x(n),\quad x(n+e_1),\quad x(n+e_1+e_2),\quad x(n+e_2) .
  \end{equation}

  \item[(A5)] \emph{Cell shape regularity.}
If desired, one may impose a standard finite--element shape--regularity assumption on the
embedded cells: taking the $2d$ case for example, each quadrilateral image is required to be convex and to have uniformly
bounded chunkiness parameter $h_Q/\rho_Q$ (diameter over inradius), uniformly in the lattice
and in the configuration.
Equivalently, one may enforce a uniform lower bound on all interior angles together with a
uniform bound on edge--length ratios; any of the standard equivalent criteria may be used.

  \item[(A6)] \emph{Global embedding / non--self--intersection.}
  The conditions above control the local geometry but do not, by themselves,
  preclude global self--intersections of the mesh. In simulations it is often
  convenient to restrict to embeddings that are globally injective and avoid
  edge crossings, for example by imposing a uniform separation condition
  \begin{equation}\label{eq:A4}
    | x(n)-x(m) | \ge \delta_{\mathrm{sep}} a
    \qquad\text{for all $n\neq m$} .
  \end{equation}
This ensures that the dynamical mesh remains a controlled distortion of the
  reference lattice without changes in combinatorics or global foldings.
  
  In practice, conditions such as cell convexity and uniform cell shape
regularity are effective guards against self--intersection: a mesh made of
embedded convex cells cannot ``fold'' locally, and when combined with a mild
global separation condition that is occasionally enforced, 
they prevent distant parts of the mesh from passing through one
another under local Monte Carlo updates. Thus convexity constraints serve as a
computationally cheap proxy for global non--self--intersection checks in many
settings.
\end{itemize}
\end{remark}

The set $\XADM(a)$ of admissible embeddings typically has several
connected components, corresponding for instance to different global
windings on a torus.  For the purposes of this paper we focus on the
component that contains the regular hypercubic embedding.

\begin{definition}\label{def:principal-component}
Let $x_{0}(n)=a\,n$ be the regular embedding at scale $a>0$.
The \emph{principal admissible component} $\Xadm(a)$ is the connected component of $\XADM(a)$
containing $x_{0}$, i.e.,
\begin{equation*}
\Xadm(a)=\{x\in\XADM(a):\; x \text{ can be connected to } x_{0}\text{ by a continuous path inside }\XADM(a)\}.
\end{equation*}
If the scale $a$ is to be irrelevant or inferred from the context, which will be the case for the most of this paper,
we may simply write $\Xadm$ for $\Xadm(a)$.
\end{definition}

Working in $\Xadm$ ensures that the dynamical mesh remains a
controlled distortion of the regular hypercubic grid: shape regularity
is enforced uniformly by Lemma~\ref{l:E-shape}, and no
topological defects (such as changes in the combinatorial structure or
global self-intersections) are introduced by the fluctuations of $x$.
This sector is the natural setting for both the perturbative Symanzik
analysis and the nonperturbative reflection positivity arguments that
follow.

\subsection{Geometry measure and dynamics}
\label{sec:geometry-measure}

The geometry field $x:\Lambda\to\R^d$ is treated as a genuine dynamical
degree of freedom of the regulator.  In the Euclidean functional integral
we integrate $x$ over the principal admissible component
$\Xadm\subset(\R^d)^\Lambda$ introduced in
Definition~\ref{def:principal-component}.  The corresponding geometry
measure is of the form
\begin{equation}
  d\mu_x(x)
  \;=\;
  \frac{1}{Z_x}\,
  \exp\bigl(-S_x[x]\bigr)\,Dx,
  \qquad
  Dx = \prod_{n\in\Lambda} d^d x(n),
  \label{eq:geom-measure}
\end{equation}
where $S_x[x]$ is local and depends on $x$ only through Euclidean invariants
of the local edge data (hence is $\SEd$--covariant on $\Xadm$), and
$Z_x$ is the corresponding normalisation factor.  The hard admissibility
constraint is implemented by restricting the domain of integration to
$\Xadm$; we do \emph{not} insert a separate indicator function into
the integrand.

Throughout we take the geometry integration measure to be the flat product Lebesgue measure
$Dx=\prod_{n\in\Lambda} d^d x(n)$ restricted to $\Xadm$, optionally reweighted by $\exp(-S_x[x])$.
This is the natural choice in the embedding-field formulation: $x(n)$ are fundamental variables and the theory is not intended to be reparametrization-invariant (unlike lattice quantum gravity).  Alternative ``geometric'' measures (e.g.\ inserting local factors built from $\sqrt{\det g(n)}$) would amount to a specific local reweighting of configurations and can be viewed as part of the choice of $S_x$ or local counterterms.
We keep the flat measure to preserve the manifest global shift symmetry $x\mapsto x+b$ and to keep the regulator definition transparent.

\medskip

\noindent
\textbf{Minimal choice.}
The conceptually simplest choice is the ``minimal'' geometry action
\begin{equation}
  S_x^{\min}[x] \equiv 0,
\end{equation}
so that $d\mu_x$ is just the restriction of the flat product measure $Dx$
to $\Xadm$.  The admissibility conditions from Definition~\ref{def:admissible-geometry} and Remark~\ref{r:adm} then appear as
hard walls in configuration space: any proposed configuration outside
$\Xadm$ is assigned infinite action and zero measure.  Even in this
minimal setting the geometry does not behave as a collection of
non-interacting free variables.  Because $\Xadm$ is a nontrivial subset of $(\R^d)^\Lambda$, the restriction $x\in\Xadm$
induces effective interactions between neighbouring sites via purely
entropic effects: the volume in configuration space available to a site
$x(n)$ depends on the positions of its neighbours through the local admissibility constraints.  

In $d=2$ our exploratory simulations suggest that the minimal choice
$S_x^{\min}\equiv 0$, with the integration restricted to the admissible set
$\Xadm$, can already yield a stable, nondegenerate geometry ensemble.  This
should be viewed as a proof of concept for the local admissibility framework,
not as evidence that the minimal choice is sufficient in higher dimensions.
In higher dimensions, especially in $d=4$, entropic effects alone may admit
phases with long--range geometric correlations or crumpling/roughening
phenomena in which the lattice spacing ceases to define a clean regulator scale.
For four--dimensional applications we therefore expect that additional local
stiffness terms, such as the examples discussed below, together with a systematic
scan of the geometry phase diagram may be necessary to find a robust,
computationally affordable short-range geometry regime~\hyperref[hyp:SR]{(SR)}.

\medskip
\noindent
\textbf{Local $\Ed$--invariant penalty terms.}
For practical simulations it is often advantageous to supplement the hard
admissibility constraints by soft, local penalties which keep the mesh
away from the boundary of $\Xadm$ and disfavour excessively
distorted cells.  A general geometry action of this form can be written
as a sum of local terms
\begin{equation}\label{eq:geom-action-local}
  S_x[x]
  \;=\;
  \sum_{n\in\Lambda} s_x\bigl(g(n),g(n)^{-1},V(n)\bigr),  
\end{equation}
where $s_x$ is a scalar function of the local metric and volume at site
$n$, and $g(n),V(n)$ are the Gram matrix and volume from
Section~\ref{sec:local-geometry}.  Because $g(n)$ and $V(n)$ are built
from the edge vectors $e_\mu(n)$ via Euclidean inner products, any such
$s_x$ is automatically invariant under global Euclidean transformations
$x\mapsto Rx+b$; see Section~\ref{sec:Ed-invariance}.

A particularly transparent class of such penalties are ``spring'' terms that keep the local
metric $g(n)$ close to a prescribed target.
For instance, fixing a target length scale $a>0$, one may penalise deviations from the
Euclidean reference metric $a^2 I$ by
\begin{equation}\label{eq:spring-penalty}
S_{\mathrm{spr}}[x]
  \;=\;
  \alpha \sum_{n\in\Lambda}\sum_{\mu,\nu}
  \bigl(e_\mu(n)\cdot e_\nu(n)-a^2\delta_{\mu\nu}\bigr)^2 .
\end{equation}
The diagonal terms $\mu=\nu$ act as \emph{edge springs}, keeping $|e_\mu(n)|$ near $a$,
while the off--diagonal terms $\mu\neq\nu$ act as \emph{angle springs}, favouring near--orthogonality
of neighbouring edges.
This is only an illustrative example: by replacing $a^2I$ with an arbitrary positive definite
reference matrix, or by applying the same ``soft constraint'' philosophy to other local invariants,
such as $V(n)$ and the eigenvalues of $g(n)$, 
one obtains many further $\Ed$--invariant local choices, all encompassed
by the general form~\eqref{eq:geom-action-local}.

In practice the coupling $\alpha$ is taken small enough that the geometry remains a mild fluctuation
around the regular lattice, but large enough to discourage excursions toward highly distorted cells
or toward the boundary of $\Xadm$.

Penalties of the form \eqref{eq:spring-penalty} are designed to suppress long--wavelength distortions of the local frame and thus to promote
a massive (finite--correlation--length) geometry sector once the global $\SEd$ zero modes are fixed/quotiented.
In the language of Section~\ref{subsec:symanzik}, such choices are intended to place the geometry in a regime compatible with the
short-range geometry hypothesis \hyperref[hyp:SR]{(SR)}.

\medskip

\noindent
\textbf{Universality and the role of $S_x$.}
From the viewpoint of effective field theory, different choices of the
geometry action $S_x$ define different microscopic regulators, but as
long as $S_x$ is local, $\Ed$--invariant, and compatible with the
admissibility conditions, they are expected to lie in the same
universality class.
One may also view the choice of $S_x$ (more precisely, the local weight
$e^{-S_x[x]}$ together with the admissible--set restriction) as selecting an
effective \emph{submanifold measure} on the space of embedded lattices inside
the continuum configuration space: it specifies how the enlarged, finite-dimensional
family of fields induced by the embedding $x$ is sampled.
This reweighting is not required by an exact local redundancy, but changing it
(within the same locality and symmetry class) is expected to affect the continuum
limit only through symmetry--allowed local counterterms.
Integrating out the geometry field induces local,
higher--dimensional operators in the effective action for the gauge and
matter fields; these operators are constrained by gauge invariance,
Euclidean invariance, and reflection positivity, and are therefore
irrelevant in the renormalisation--group sense in the continuum limit.
We will see an explicit perturbative example of this mechanism in the
Symanzik expansion for a scalar theory in \S\ref{subsec:phi4-one-loop}.

\medskip

\noindent
\textbf{Global zero modes versus local twisting.}
The embedding variables $x(n)$ enjoy an exact global $\SEd$ symmetry,
$x(n)\mapsto R x(n)+b$, acting uniformly on all sites.
In infinite volume this produces non--normalisable zero modes, and in finite volume an irrelevant overall group--volume factor.
In practice one therefore fixes or quotients these global modes by a convenient convention, e.g., pinning the centre of mass and fixing
the orientation of a chosen local frame.
This step does \emph{not} affect local observables and is entirely analogous to removing rigid--motion zero modes in molecular simulations.

The practically relevant question is not whether the regulator is \emph{formally} $\SEd$--covariant (it is, by construction), but how this covariance is \emph{realised} in the ensemble in a way compatible with universality.
Our Symanzik discussion is conditional on the short--range geometry hypothesis~\hyperref[hyp:SR]{(SR)}; under this assumption, integrating out $x$ produces only local, irrelevant corrections in the effective action.
Within this regime, rigid global motions cannot implement $\SEd$ covariance at the level of local probes (they carry no local geometric information), and any attempt to enforce isotropy through genuinely long--range geometric modes would violate (SR) and could in principle change the universality class.
Thus, in the intended operating regime, $\SEd$ covariance that is \emph{felt locally} must come from the \emph{local} embedding fluctuations---twisting, shearing, and local reorientation from cell to cell---which average short distance observables over many local frames while keeping the underlying hypercubic connectivity fixed.

\medskip
\noindent
\textbf{Monte Carlo dynamics.}
In numerical simulations the geometry field $x$ is updated by local
Metropolis moves of the form $x(n)\to x(n)+\delta x$ with $\delta x$
drawn from a symmetric proposal distribution, followed by an
admissibility check at the sites affected by the move and an acceptance
test based on the local change in $S_x[x]$ and in the field action.
Because $S_x$ is local, the acceptance probability can be evaluated from
a uniformly bounded neighbourhood of $n$, and the resulting Markov chain
is manifestly local on the abstract lattice.  A detailed description of
the update scheme and its ergodicity properties is deferred to the numerics section \S\ref{sec:numerics}.

From an algorithmic standpoint, the dynamical geometry sector adds only a mild overhead compared
to standard lattice gauge theory.  A local proposal $x(n)\mapsto x(n)+\delta x$ affects only the
finite set of terms in $S_x$ and in $S_{\mathrm{fields}}$ that involve $n$ and its nearest
neighbours, so both the admissibility check and the Metropolis acceptance ratio can be evaluated
in $O(1)$ work per proposal, with a constant depending only on the dimension and the chosen
locality range, not on the lattice volume.  A full ``sweep'' of geometry updates therefore costs
$O(|\Lambda|)$, just as in standard LGT.  In practice the geometry update can be interleaved with
the usual gauge and matter updates (or combined in a hybrid scheme), and the method remains
compatible with existing locality-based implementations; the principal new ingredient is the
additional bookkeeping of the local geometric data needed to evaluate the affected action terms.

\subsection{Torus version and periodic implementation}
\label{sec:torus-version}

For most analytical questions,
it is convenient to regard the dynamical lattice as embedded into the full Euclidean space $\R^d$.
In numerical simulations, however, one typically works at finite volume with periodic boundary
conditions, i.e., on a $d$--dimensional discrete torus.

There are two equivalent ways to formulate this periodic setting:

\begin{enumerate}[label=(\roman*)]
\item \emph{Intrinsic formulation on the discrete torus.}
Fix an integer $N\ge 1$ and set
\begin{equation}
  \Lambda = (\Z/N\Z)^d ,
\end{equation}
viewed as a periodic hypercubic lattice.  The geometry field is a map
\begin{equation}
  x:\Lambda\to \T^d = (\R/L\Z)^d
\end{equation}
for some physical period $L>0$.  Differences $x(n+\hat\mu)-x(n)$ are interpreted modulo $L\Z^d$,
and the edge vectors $e_\mu(n)$ are chosen as the unique representatives of these differences in a
fixed fundamental domain of $\T^d$ (for instance $(-L/2,L/2]^d$).  Local geometric quantities are
then defined from these edge vectors exactly as in the $\R^d$ setting.

\item \emph{Periodic formulation on $\Z^d$.}
Alternatively, one may take the index set to be the infinite lattice
\begin{equation}
  \Lambda = \Z^d ,
\end{equation}
and work with an embedding $x:\Lambda\to\R^d$ satisfying the quasi--periodicity condition
\begin{equation}\label{eq:quasi-periodic}
  x(n+N\hat\mu)=x(n)+L\,\hat e_\mu , \qquad \mu=1,\dots,d .
\end{equation}
All geometric quantities are computed from $x$ exactly as before, while observables and the action
are invariant under global shifts $x\mapsto x+Lk$ with $k\in\Z^d$.  Passing to the quotient by the
relations~\eqref{eq:quasi-periodic} recovers the intrinsic discrete--torus description in~(i).
\end{enumerate}

In both formulations the local admissibility conditions are imposed on the edge vectors $e_\mu(n)$,
and the principal admissible component $\Xadm$ is understood modulo periodicity.  Global
non--self--intersection is likewise interpreted in the periodic sense: cells may wrap around the
torus, but configurations in which large portions of the mesh fold back and concentrate in a small
region are excluded by the same local shape regularity and volume constraints as in the $\R^d$
setting.

The passage from $\R^d$ to $\T^d$ does not affect any of the local
arguments in this paper. In particular, the construction of the
geometry measure $d\mu_x$ and of the field actions remains unchanged,
and the analysis of Euclidean invariance and reflection positivity is
carried out on $\R^d$ and then transferred to the periodic case in the
standard way.  For the Monte Carlo implementation we restrict to a
finite periodic lattice and update the geometry field subject to these
periodic admissibility constraints; the details of the algorithm are
deferred to the numerical section.

\section{Field actions and symmetries on a dynamical lattice}
\label{sec:fields-symmetries}

\subsection{Configuration space and full action}
\label{sec:full-action}

We summarise the full configuration space and the action of the
dynamical lattice regulator.  The basic kinematical variables are:
\begin{itemize}
\item a geometry field $x : \Lambda \to \R^d$, constrained to take
  values in the principal admissible component
  $\Xadm\subset(\R^d)^\Lambda$ introduced in
  Section~\ref{sec:local-geometry};
\item gauge link variables $U_\mu(n)\in G$ attached to each oriented
  edge $(n,\mu)$ of the abstract hypercubic lattice $\Lambda$;
\item matter fields $\Phi(n)$ (scalars, spinors, etc.) living on the
  sites $n\in\Lambda$.
\end{itemize}

\noindent
We denote the full configuration collectively by $(x,U,\Phi)$.
The geometry degrees of freedom $x$ are governed by a local action $S_x[x]$
built from Euclidean invariants of the embedding data as in \S\ref{sec:geometry-measure}, while the gauge and matter fields
are coupled to $x$ through a local field action $S_{\mathrm{fields}}[x,U,\Phi]$.
The total action is
\begin{equation}
  S[x,U,\Phi]
  \;=\;
  S_x[x] + S_{\mathrm{fields}}[x,U,\Phi].
  \label{eq:full-action}
\end{equation}

\noindent
For complete precision one first defines the functional integral in finite volume,
for instance on a time-open cylinder $\Lambda_{T,L}\subset\Z^d$, with the first
lattice direction singled out as index time, where all measures are ordinary finite
products.  The infinite--lattice notation used below is then understood as the limit
$\Lambda_{T,L}\nearrow\Z^d$ on local observables.  Global translation/rotation zero
modes of $x$ may be fixed (or quotiented) in finite volume; this does not affect the
local structure of the actions below nor the reflection positivity statements proved
later.

Thus the Euclidean functional integral is defined formally by
\begin{equation}
  Z
  \;=\;
  \int_{\Xadm} Dx
  \int DU \int D\Phi\;
  \exp\bigl(-S_x[x] - S_{\mathrm{fields}}[x,U,\Phi]\bigr),
  \label{eq:partition-function}
\end{equation}
where $Dx$ is the flat measure in $\Xadm$, $DU$ is the
product Haar measure over link variables, and $D\Phi$ is the standard
product measure over matter fields,  which is the Lebesgue measure for bosons and the usual Grassmann
measure for fermions.

We present a representative class of gauge and matter actions on a dynamical mesh.
In all cases the fields live on the abstract lattice $\Lambda$, and the interactions
have the usual nearest neighbour stencil, with couplings dressed by local
geometry dependent weights.  For later Osterwalder--Schrader (OS) reflection-positivity
arguments, one subtlety is important: the local metric and volume constructed from a
\emph{forward} corner frame are intrinsically biased with respect to the time direction.  
Rather than modifying the finite-difference stencils, we remove this bias by \emph{action averaging}: we define two
local geometric dressings, exchanged by the OS reflection, and average the resulting
actions.  This produces a manifestly reflection invariant field action while preserving
locality and gauge invariance.

\medskip
\noindent
\textbf{Signed directions and the induced reflection on directions.}
Let
\begin{equation}
\Delta = \{\pm 1,\dots,\pm d\} ,
\end{equation}
denote the set of signed coordinate directions, and write $\hat\mu\in\Z^d$ for the corresponding
unit vectors, so that $\widehat{-\mu}=-\hat\mu$.
We single out $\mu=1$ as the index-time direction,
and define the time-reflection $\theta$ on directions by
\begin{equation}\label{eq:theta-on-directions}
  \theta(\pm 1)=\mp 1,\qquad \theta(\pm i)=\pm i\quad (i=2,\dots,d),
  \qquad\text{equivalently}\qquad
  \theta(\hat\mu)=\widehat{\theta(\mu)}.
\end{equation}
For $\mu\in\Delta$ define the oriented edge vectors and gauge links by
\begin{equation}\label{eq:signed-edges-links}
  e_\mu(n)=x(n+\hat\mu)-x(n),\qquad
  U_{-\mu}(n)=U_\mu(n-\hat\mu)^{-1}.
\end{equation}
Thus $U_\mu(n)$ always lives on the oriented link $n\to n+\hat\mu$, for both positive and negative
directions.
With this convention, we can write \eqref{e:for-back-edge} now as $e^\pm_\mu(n)=e_{\pm\mu}(n)$.

\medskip
\noindent
\textbf{Prototype actions and $\theta$--averaging.}
Recall from \eqref{e:advanced-frame}-\eqref{e:advanced-metric} the advanced and retarded frames
$E(n)$ and $E^\theta\!(n)$, as well as the local metric $g(n)$ and volume $V(n)$ computed from $E(n)$.
We first define ``prototype'' field actions
$S^\circ[x,U,\Phi]$ by coupling the usual nearest-neighbour stencils to $g$ and $V$.  
To remove the forward-time bias, we then define the $\theta$-partner action
$S^\theta[x,U,\Phi]$ by the same formulas but with every occurrence of a direction index $\mu$ replaced by
$\theta(\mu)$ and with the corner frame replaced by its $\theta$--version.  
Averaging the two yields the field action
\begin{equation}\label{eq:fields-action-average-theta}
S_{\mathrm{fields}}[x,U,\Phi]=\tfrac12\bigl(S^\circ[x,U,\Phi]+S^\theta[x,U,\Phi]\bigr).
\end{equation}
Notice that since $g$ and $V$ depend on $E$, to get $S^\theta$, 
any occurrences of $g$ and $V$ in $S^\circ$ should be replaced by $g^\theta$ and $V^\theta$, 
that are computed from $E^\theta$ instead of $E$.
By construction this preserves locality and gauge invariance, and it is compatible with the OS
reflection $\Theta$ introduced later.

More generally, one may analogously symmetrize with respect to reflections in any coordinate direction (or build the action directly from such symmetrized geometric data), which reduces directional bias and makes the couplings more isotropic with respect to the lattice axes.

\medskip
\noindent
\textbf{Scalar fields.}
Let \(\phi:\Lambda\to\C^m\) be a complex scalar multiplet transforming in some unitary representation
of \(G\).  For $\mu\in\Delta$ define the gauge-covariant signed forward difference
\begin{equation}\label{eq:scalar-signed-diff}
  \nabla_\mu \phi(n)
  =
  \frac{U_\mu(n)\,\phi(n+\hat\mu) - \phi(n)}{|e_\mu(n)|}.
\end{equation}
The prototype scalar kinetic and potential contributions are
\begin{align}
  S_{\phi,\mathrm{kin}}^\circ[x,\phi,U]
  &=
  \frac{1}{2}
  \sum_{n\in\Lambda}
  V(n)
  \sum_{\mu,\nu=1}^d
  \bigl(g(n)^{-1}\bigr)^{\mu\nu}\,
  \langle \nabla_\mu\phi(n),\nabla_\nu\phi(n)\rangle ,
  \label{eq:scalar-kinetic-theta}\\
  S_{\phi,\mathrm{pot}}^\circ[x,\phi]
  &=
  \sum_{n\in\Lambda}
  V(n)\,
  W_{\mathrm{loc}}\bigl(\phi(n)\bigr),
  \label{eq:scalar-potential-theta}
\end{align}
where $\langle a,b\rangle=a^\dagger b$ and
$W_{\mathrm{loc}}$ is a local gauge-invariant potential such as
\begin{equation}
W_{\mathrm{loc}}(\phi)=\frac{m_0^2}2\phi^\dagger\phi+\frac{\lambda_0}{4!}(\phi^\dagger\phi)^2.
\end{equation}
We define the prototype scalar action $S_{\phi}^\circ=S_{\phi,\mathrm{kin}}^\circ+S_{\phi,\mathrm{pot}}^\circ$
and then set
\begin{equation}\label{eq:scalar-total-avg-theta}
  S_\phi[x,\phi,U]
  =
  \tfrac12\Bigl(S_{\phi}^\circ[x,\phi,U] + S_{\phi}^{\theta}[x,\phi,U]\Bigr) .
\end{equation}
To reiterate the definition of $S_{\phi}^{\theta}$, it is defined by replacing
$(g,V)$ with $(g^\theta,V^\theta)$ and applying $\mu\mapsto\theta(\mu)$ to the direction labels
appearing in \eqref{eq:scalar-kinetic-theta}, so that only the time-direction differences are
reversed.
On the regular embedding $x(n)=a n$ one has $E(n)=aI$ and $E^\theta(n)=aR_\theta$ with
$R_\theta=\mathrm{diag}(-1,1,\dots,1)$. In particular
$g(n)=E(n)^{\mathsf{T}}E(n)=a^2 I$ and $g^\theta(n)=(E^\theta(n))^{\mathsf{T}}E^\theta(n)=a^2 I$.
Hence the $\theta$-averaged scalar action reduces to the standard hypercubic action.

One may alternatively use a centered covariant difference or higher-order stencils to reduce
cutoff effects.  For example, a nearest-neighbour centered difference can be defined by
\begin{equation}\label{eq:scalar-symmetric-diff}
  \nabla^{\mathrm{sym}}_\mu \phi(n)
  =
  \frac{U_\mu(n)\phi(n+\hat\mu) - U_{-\mu}(n)\phi(n-\hat\mu)}{|e_\mu(n)|+|e_{-\mu}(n)|} ,
\end{equation}
though we will not pursue such refinements in the present work.

\medskip
\noindent
\textbf{Gauge fields.}
For the gauge sector we use a Wilson--type plaquette action coupled to the embedding through local
positive weights.  For $\mu,\nu\in\Delta$ define the oriented plaquette holonomy by the uniform
formula
\begin{equation}\label{eq:signed-plaquette}
  U_{\mu\nu}(n)
  =
  U_\mu(n)\,U_\nu(n+\hat\mu)\,U_\mu(n+\hat\nu)^{-1}\,U_\nu(n)^{-1}.
\end{equation}
For $\mu,\nu>0$ this agrees with the usual plaquette; for $\mu<0$ it represents the oppositely
oriented plaquette based at $n$, encoded using \eqref{eq:signed-edges-links}.
The prototype gauge action is
\begin{equation}\label{eq:gauge-action-proto-theta}
  S_{g}^\circ[x,U]
  =
  \frac{\beta}{N}\sum_{n\in\Lambda}\sum_{1\le \mu<\nu\le d}
  w^{\mu\nu}(n;x)\,
  \Re\tr\!\bigl(I-U_{\mu\nu}(n)\bigr),
\end{equation}
where the weight $w^{\mu\nu}(n;x)>0$ depends on the embedding only through local Euclidean invariants
built from $(g(n),V(n))$, or equivalently from $E(n)$.  Two natural choices are
\begin{equation}\label{eq:gauge-weight-B}
  w^{\mu\nu}(n;x)
  =
  V(n)\,
  \Bigl(\bigl(g(n)^{-1}\bigr)^{\mu\mu}\bigl(g(n)^{-1}\bigr)^{\nu\nu}
        -\big[\bigl(g(n)^{-1}\bigr)^{\mu\nu}\big]^{2}\Bigr),
\end{equation}
and
\begin{equation}\label{eq:gauge-weight-A}
  w^{\mu\nu}(n;x) = \frac{V(n)}{A_{\mu\nu}(n)^2},
  \qquad
  A_{\mu\nu}(n) = \bigl|e_{\mu}(n)\wedge e_{\nu}(n)\bigr|,
\end{equation}
where in \eqref{eq:gauge-weight-A} the signed edges $e_\mu(n)$ are given by
\eqref{eq:signed-edges-links}.  The coefficients in \eqref{eq:gauge-weight-B} are precisely those
arising from the contraction $\sqrt{g}\,F_{\mu\nu}F^{\mu\nu}$ with the inverse metric in coordinate
components.
On a regular embedding $x(n)=a\,n$, both weights reduce to a constant $w_{\mu\nu}\equiv a^{d-4}$, so
\eqref{eq:gauge-action-proto-theta} reduces (after absorbing this constant into $\beta$) to the
standard Wilson plaquette action.

We then define the averaged gauge action by
\begin{equation}\label{eq:gauge-action-avg-theta}
  S_g[x,U]
  =
  \tfrac12\Bigl(S_{g}^\circ[x,U]+S_{g}^{\theta}[x,U]\Bigr),
\end{equation}
Concretely, applying
$\mu\mapsto\theta(\mu)$ inside the plaquette \eqref{eq:signed-plaquette} reverses precisely the
time-oriented plaquettes, so that the $\theta$--partner couples the $E^\theta$--based weights to the
corresponding backward-time plaquettes at the same basepoint $n$.

\medskip
\noindent\textbf{Fermions.}
Fermionic actions on a dynamical lattice can be constructed in close analogy with the standard
hypercubic case, with the difference that the discrete Dirac operator must be built from the local
geometry induced by the embedding \(x\).
Let \(\psi(n)\) and \(\bar\psi(n)\) be lattice Dirac spinors (Grassmann fields) in a representation of \(G\).
Fix Euclidean gamma matrices \(\{\gamma_a\}_{a=1}^d\) with \(\{\gamma_a,\gamma_b\}=2\delta_{ab}\), and define
\begin{equation}\label{eq:Gamma-mu-theta}
  \Gamma_\mu(n) = \sum_{a=1}^d \gamma_a\,\bigl(E(n)^{-1}\bigr)_{a\mu} ,
  \qquad
  \Gamma_\mu^{\theta}(n) = \sum_{a=1}^d \gamma_a\,\bigl(E^{\theta}\!(n)^{-1}\bigr)_{a\mu} .
\end{equation}
The prototype naive Dirac operator is
\begin{equation}\label{eq:naive-Dirac-theta}
  (D_{\mathrm{naive}}^\circ\psi)(n)
  =
  \frac12\sum_{\mu=1}^d \Bigl(\Gamma_\mu(n)\, U_\mu(n)\psi(n+\hat\mu)
  - \Gamma_\mu^\theta(n)\,U_{-\mu}(n)\psi(n-\hat\mu)\Bigr),
\end{equation}
and the corresponding fermion action is
\begin{equation}\label{eq:S-fermion-proto-theta}
  S_{f}^\circ[x,\bar\psi,\psi,U]
  =
  \sum_{n\in\Lambda} V(n)\,\bar\psi(n)\,(D_{\mathrm{naive}}^\circ\psi)(n).
\end{equation}
We define the \(\theta\)-partner by the same stencil but with the advanced/retarded geometric data interchanged, i.e.
\begin{equation}\label{eq:naive-Dirac-partner}
  (D_{\mathrm{naive}}^\theta\psi)(n)
  =
  \frac12\sum_{\mu=1}^d \Bigl(\Gamma_\mu^\theta(n)\, U_\mu(n)\psi(n+\hat\mu)
  - \Gamma_\mu(n)\,U_{-\mu}(n)\psi(n-\hat\mu)\Bigr),
\end{equation}
and
\begin{equation}\label{eq:S-fermion-partner}
  S_{f}^{\theta}[x,\bar\psi,\psi,U]
  =
  \sum_{n\in\Lambda} V^{\theta}(n)\,\bar\psi(n)\,(D_{\mathrm{naive}}^\theta\psi)(n).
\end{equation}
Finally set
\begin{equation}\label{eq:S-fermion-avg-theta}
  S_f[x,\bar\psi,\psi,U]
  =
  \tfrac12\Bigl(S_{f}^\circ[x,\bar\psi,\psi,U]+S_{f}^{\theta}[x,\bar\psi,\psi,U]\Bigr).
\end{equation}
On a regular embedding \(x(n)=a\,n\), the forward and backward corner frames coincide (with our convention for \(E^\theta\)),
so \(\Gamma_\mu(n)=\Gamma_\mu^\theta(n)=\gamma_\mu/a\), and \eqref{eq:naive-Dirac-theta} reduces to the usual naive lattice Dirac operator.
As the standard lattice is among the possible embeddings $x(n)$, we do not expect the dynamical
geometry alone to remove species doubling.  Standard remedies (Wilson, staggered, overlap, domain-wall)
can nevertheless be implemented by replacing the corresponding finite-difference operators and Wilson
terms by their geometric analogues built from the same local data $(E,g,V)$ and then applying the
$\theta$--averaging prescription \eqref{eq:fields-action-average-theta}.

\medskip
\noindent\textbf{Total field action.}
In the sequel we write
\begin{equation}\label{eq:fields-total-theta}
S_{\mathrm{fields}}[x,U,\Phi]
\;:=\;
S_g[x,U] + S_\phi[x,\phi,U] + S_f[x,\bar\psi,\psi,U] \;+\; \cdots,
\end{equation}
where the ellipsis indicates additional matter multiplets treated analogously.  By construction,
$S_{\mathrm{fields}}$ is local on the abstract lattice, gauge invariant, and $\Theta$--invariant under
the OS reflection introduced later.


\subsection{Euclidean invariance, orientation, and local isotropy}\label{sec:Ed-invariance}

We record the global Euclidean symmetries of the action and make explicit where a
restriction to proper motions enters.  Recall that the Euclidean group is the semidirect
product
\begin{equation}
  \Ed = O(d)\ltimes \R^d,
\end{equation}
with multiplication $(R,b)(R',b')=(RR',\,b+Rb')$.  We write $\SEd=\mathrm{SO}(d)\ltimes \R^d$ for the
subgroup of \emph{proper} Euclidean motions.

Let $(R,b)\in \Ed$ act on a configuration $(x,U,\Phi)$ by
\begin{equation}\label{eq:Ed-action}
\begin{split}
x(n) &\longmapsto x'(n) := R x(n) + b,\\
U_\mu(n) &\longmapsto U'_\mu(n) := U_\mu(n),\\
\Phi(n) &\longmapsto \Phi'(n) := \rho(R)\,\Phi(n) ,
\end{split}
\end{equation}
where $\rho$ is the appropriate finite-dimensional representation of $O(d)$ on the matter
field (trivial for scalars; spinorial for Dirac fields, discussed below).  Since $b$
is a translation in physical space, it acts trivially on internal indices.

The edge vectors transform as
\begin{equation}
  e_\mu(n)=x(n+\hat\mu)-x(n)\ \mapsto\ e'_\mu(n)=x'(n+\hat\mu)-x'(n)=R e_\mu(n),
\end{equation}
hence the local frame matrix $E(n)=[e_1(n)\ \cdots\ e_d(n)]$ transforms as $E'(n)=R E(n)$.
Therefore the Gram matrix is strictly invariant:
\begin{equation}\label{eq:g-invariance}
  g'(n) = E'(n)^{\mathsf T} E'(n) = E(n)^{\mathsf T} R^{\mathsf T} R E(n) = g(n) .
\end{equation}
For the local cell volume there are two closely related quantities:
\begin{equation}
  \det E'(n)=(\det R)\,\det E(n),\qquad
  |\det E'(n)|=|\det E(n)|,
\end{equation}
so the \emph{unoriented} volume $V(n)=|\det E(n)|$ is invariant under all of $\mathrm{O}(d)$, while
the \emph{oriented} volume $\det E(n)$ is invariant only under $\SOd$.

\medskip
\noindent
\textbf{Invariance of the geometry action.}
Suppose the geometry action has the local form
\begin{equation}
  S_x[x]=\sum_{n\in\Lambda} s_x\bigl(g(n),g(n)^{-1},V(n)\bigr),
\end{equation}
or more generally depends on $x$ only through scalar functions of the Gram matrix 
(e.g.\ eigenvalues of $g(n)$, $\Tr g(n)$, $V(n)$, etc.).  Then
\eqref{eq:g-invariance} and $V'(n)=V(n)$ imply
\begin{equation}
  S_x[x']=S_x[x]\qquad\text{for all }(R,b)\in\Ed.
\end{equation}
If, instead, one chooses to build $S_x$ from the \emph{oriented} determinant $\det E(n)$,
then $S_x$ is automatically $\SEd$--invariant, but is \emph{not} invariant under reflections
with $\det R=-1$ unless $s_x$ is an even function of $\det E(n)$.

\medskip
\noindent
\textbf{Orientation flipping.}
The preceding statements concern the integrand (action) as a function of an \emph{arbitrary}
embedding.  The subtlety is that our \emph{configuration space} for $x$ is typically
restricted by a hard admissibility condition that fixes an orientation, e.g.
\begin{equation}
  \det E(n)\ge V_{\min}>0\quad\text{for all }n.
\end{equation}
This condition is stable under $\SEd$ but not under a reflection: if $\det R=-1$ then
$\det E'(n)=-\det E(n)\le -V_{\min}$, so $x'\notin\Xadm$ even though $g'(n)=g(n)$ and
$|\det E'(n)|=|\det E(n)|$.  Thus:
\begin{quote}
\emph{The local geometric quantities and the local action density can be $\Ed$--invariant,
but the \emph{domain of integration} $\Xadm$ is typically only $\SEd$--invariant when one
imposes a positive-orientation constraint.}
\end{quote}

Choice of orientation breaks \emph{explicit} invariance under improper Euclidean motions (parity / reflections) at finite lattice spacing.  This is a regulator-level choice of orientation sector, and we expect it to be innocuous for the continuum limit of parity-even observables. 
One could restore exact $\Ed$ invariance by integrating over both orientation components, at the cost of enlarging configuration space.
This orientation restriction does \emph{not} interfere with Osterwalder--Schrader reflection positivity: the OS involution $\Theta$ is constructed so that $\Theta\Xadm=\Xadm$ (Lemma~\ref{lem:Xadm-Theta}), even though $\Theta$ contains a target-space reflection.

\medskip
\noindent
\textbf{Exact symmetry of the regulator.}
The partition function is defined as
\begin{equation}\label{e:Z}
  Z=\int_{\mathcal{X}_{\mathrm{adm}}} \! Dx \int DU \int D\Phi\;
  \exp\bigl(-S_x[x]-S_{\mathrm{fields}}[x,U,\Phi]\bigr).
\end{equation}
Since the flat product measure $Dx$ and the Haar/product measures $DU$, $D\Phi$ are invariant
under \eqref{eq:Ed-action}, and since $\mathcal{X}_{\mathrm{adm}}$ is stable precisely under
$\mathrm{SE}(d)$, the regulator has an \emph{exact} global $\mathrm{SE}(d)$ symmetry, provided the field action is $\SEd$-invariant:
\begin{quote}
\emph{$Z$ and all expectation values of observables are invariant under $(R,b)\in\SEd$.}
\end{quote}
If one wishes to be fully precise, the definition is made first on a finite region
$\Lambda_{T,L}$, so that $Dx,DU,D\Phi$ are ordinary finite product measures, and the $\SEd$ covariance
holds there exactly; the infinite-volume partition function is then understood formally, while
expectations of local observables are defined by the limit $\Lambda_{T,L}\nearrow\Z^d$.


To verify global Euclidean invariance for the field action,
it suffices to only deal with the prototype part $S^\circ$.
Indeed, under $x(n)\mapsto Rx(n)+b$, the advanced and retarded frames transform as $E(n)\mapsto RE(n)$ and $E^\theta\!(n)\mapsto RE^\theta\!(n)$, 
so any coefficient built from Euclidean invariants is unchanged; the relabeling $\mu\mapsto\theta(\mu)$ is purely
combinatorial.  Therefore $S^\theta$ inherits the same $\SEd$-invariance as $S^\circ$, and so does the averaged
action $\tfrac12(S^\circ+S^\theta)$.

\medskip
\noindent
\textbf{Scalar actions.}
Consider the gauge-covariant forward difference \eqref{eq:scalar-signed-diff}.
Under \eqref{eq:Ed-action}, we have $|e'_\mu(n)|=|Re_\mu(n)|=|e_\mu(n)|$, $U'_\mu=U_\mu$,
and $\phi'(n)=\rho(R)\phi(n)$, hence
\begin{equation}
  \nabla_\mu\phi'(n)=\rho(R)\,\nabla_\mu\phi(n).
\end{equation}
Assuming the internal inner product on the scalar multiplet is $O(d)$--invariant (as in the
usual continuum theory), we obtain pointwise invariance of the kinetic density:
\begin{equation}
  \sum_{\mu,\nu=1}^d g^{\mu\nu}(n) \langle \nabla_\mu\phi'(n),\nabla_\nu\phi'(n)\rangle
  =
  \sum_{\mu,\nu=1}^d g^{\mu\nu}(n) \langle \nabla_\mu\phi(n),\nabla_\nu\phi(n)\rangle,
\end{equation}
since $g^{\mu\nu}(n)$ is invariant and $\rho(R)$ preserves the inner product.  Multiplying by
$V(n)=|\det E(n)|$ and summing over $n$ yields
\begin{equation}
  S_{\phi,\mathrm{kin}}^\circ[x',\phi',U']=S_{\phi,\mathrm{kin}}^\circ[x,\phi,U].
\end{equation}
The potential term $S_{\phi,\mathrm{pot}}^\circ=\sum_n V(n)W_{\mathrm{loc}}(\phi(n))$ is also invariant
provided $W_{\mathrm{loc}}$ is an $O(d)$--scalar in field space (e.g.\ $\frac12m_0^2|\phi|^2+\frac1{4!}\lambda_0|\phi|^4$).

\medskip
\noindent
\textbf{Gauge actions.}
For the weighted Wilson--type plaquette action \eqref{eq:gauge-action-proto-theta}
the gauge field \(U\) is unchanged by \((R,b)\in\Ed\), so the traces are invariant. Therefore
\(S_g\) is \(E(d)\)-invariant provided each weight \(w^{\mu\nu}(n;x)\) is built from Euclidean
invariants of the local geometry (equivalently, from \(g_{\mu\nu}(n)\), \(g^{\mu\nu}(n)\), \(V(n)\),
and Gram minors). In particular this holds for both weight choices introduced in \eqref{eq:gauge-weight-B} and \eqref{eq:gauge-weight-A}.
Note that as in the continuum, weights built from oriented pseudoscalars would be sensitive to improper
rotations.

\medskip
\noindent\textbf{Fermionic actions.}
In the presence of Dirac fields we work with the proper Euclidean group
$\SEd=\SOd\ltimes\R^d$, since implementing reflections on spinors would require the Pin group.
Independently, our admissible set $\Xadm$ fixes an orientation component, so the $x$--measure is restricted to $\SEd$ rather than $\Ed$.

Let \(R\in \mathrm{SO}(d)\) and choose a lift \(\widehat R\in\mathrm{Spin}(d)\) acting on spinors by
\(\psi'(n)=\widehat R\psi(n)\) and \(\bar\psi'(n)=\bar\psi(n)\widehat R^{-1}\), with the standard
intertwining property
\begin{equation}
\widehat R\gamma_a \widehat R^{-1}=R_a{}^b\,\gamma_b.
\end{equation}
Under $x'(n)=Rx(n)+b$ the local frame matrices transform by
\(
E'(n)=R\,E(n)
\)
and likewise
\(
E^{\theta\,\prime}(n)=R\,E^\theta(n),
\)
since both are built from the same local edge vectors of the embedding.
Consequently the geometry--dependent matrices
\(\Gamma_\mu(n)\) and \(\Gamma_\mu^\theta(n)\)
transform by conjugation:
\begin{equation}
\widehat R \Gamma_\mu(n)\widehat R^{-1}
=\sum_{a}\widehat R\gamma_a\widehat R^{-1}(E^{-1})_{a\mu}
=\sum_{a,b}R_a{}^b\,\gamma_b\,(E^{-1})_{a\mu}
=\sum_b \gamma_b\,(E'^{-1})_{b\mu}
=\Gamma_\mu'(n),
\end{equation}
and similarly $\widehat R \Gamma_\mu^\theta(n)\widehat R^{-1}=\Gamma_\mu^{\theta\,\prime}(n)$.
Since the gauge links $U_\mu(n)$ are unchanged by Euclidean motions and act only on gauge indices,
the mixed prototype naive Dirac operator \eqref{eq:naive-Dirac-theta} transforms covariantly:
\begin{equation}
(D_{\mathrm{naive}}^\circ\psi')(n)=\widehat R(D_{\mathrm{naive}}^\circ\psi)(n).
\end{equation}
Using also $V'(n)=V(n)$ (and $V^{\theta\,\prime}(n)=V^\theta(n)$) we conclude invariance of the prototype action:
\begin{equation}
S_f^\circ[x',\bar\psi',\psi',U]
=\sum_{n}V(n)\,\bar\psi(n)\widehat R^{-1}\widehat R(D_{\mathrm{naive}}^\circ\psi)(n)
=S_f^\circ[x,\bar\psi,\psi,U].
\end{equation}

\noindent
\textbf{Local isotropy via twisting.}
It is useful to separate two notions that are sometimes conflated under the phrase
``rotationally invariant regulator.''
The first is the \emph{exact} global covariance of the partition function under $\SEd$,
which follows formally because the action depends on the embedding only through Euclidean invariants.
However, this formal covariance is not, by itself, the operative mechanism at finite cutoff:
it would remain true even if the geometry measure were concentrated near a rigid $\SEd$--orbit
$x(n)\approx aRn+b$, in which case typical configurations exhibit essentially no local orientation mixing
and one recovers the usual hypercubic anisotropies.
What matters is therefore \emph{how} global $\SEd$ covariance is realised in typical configurations
after fixing/quotienting the global zero modes.

The second notion is a \emph{local isotropy} (or local orientation--mixing) property.
In the short-range geometry regime (SR) required for a local Symanzik description after integrating out
geometry, the geometry sector cannot realise $\SEd$ covariance ``rigidly'' through long-range geometric modes;
rather, global covariance must be implemented through \emph{short-range fluctuations of local frames}.
We make this precise by extracting from the embedding a local orientation field.
Define the dimensionless forward frame $\widetilde E(n)=a^{-1}E(n)$.  By admissibility $\widetilde E(n)$ is
invertible, hence it has a polar decomposition
\begin{equation}\label{eq:polar-QS}
\widetilde E(n)=Q(n)\,S(n),
\qquad Q(n)\in\SOd,\quad S(n)\in \mathrm{SPD}(d).
\end{equation}
Note that $\det \widetilde E(n)>0$ by admissibility, so the polar factor $Q(n)$ lies in $\SOd$.
We interpret $Q(n)$ as the \emph{local orientation} and $S(n)$ as the \emph{local shape}.

From $Q(n)$ we build an $SO(d)$-valued discrete connection on links,
\begin{equation}\label{eq:Omega-conn}
R_\mu(n)=Q(n)^{-1}Q(n+\hat\mu)\in\SOd,
\end{equation}
and a plaquette holonomy (curvature proxy)
\begin{equation}\label{eq:Holonomy-plaquette}
\mathcal R_{\mu\nu}(n)
=R_\mu(n) R_\nu(n+\hat\mu) R_\mu(n+\hat\nu)^{-1} R_\nu(n)^{-1}\in\SOd.
\end{equation}
These observables distinguish rigid orbit averaging from genuine local twisting:
if $x(n)=aRn+b$ is rigid then $Q(n)\equiv R$ and hence $R_\mu(n)\equiv I$ and
$\mathcal R_{\mu\nu}(n)\equiv I$; conversely, broad fluctuations of $R_\mu$ together with short-range
decorrelation indicate nontrivial local twisting.

\smallskip
\noindent\emph{Local isotropy is a statistical property, not an exact local symmetry.}
There is no underlying local $\SOd$ gauge redundancy; rather, the mechanism is that the distribution
of $R_\mu(n)$ is sufficiently mixing and its correlations decay rapidly.  
We could quantify this via class functions $f:\SOd\to\R$ (e.g.\ $f(R)=\tr(R)$) and the connected
correlator
\begin{equation}\label{eq:twist-corr}
\big\langle f(R_\mu(n))\,f(R_\mu(m))\big\rangle_c,
\end{equation}
where $\langle AB\rangle_c = \langle AB\rangle-\langle A\rangle\langle B\rangle$ denotes the connected
correlator.  The quantity \eqref{eq:twist-corr} measures how rapidly the local frame rotations decorrelate with
separation.  In a near-rigid regime (dominated by global zero modes) one has $R_\mu(n)\approx I$ and
the fluctuations of $R_\mu$ are small and long-range, whereas in a genuine twisting regime the
distribution of $R_\mu(n)$ is broad and the connected correlator decays on a short correlation length
$\xi_{\mathrm{twist}}$ (typically a few lattice spacings).  This short-range mixing of local orientations
is precisely the ingredient that can reduce axis-versus-diagonal (direction-dependent) cutoff
artefacts in matter correlators at finite $a$.  

\subsection{Gauge invariance and lattice symmetries}
\label{sec:gauge-lattice-symmetries}

We briefly recall the gauge and lattice symmetries of the
dynamical--lattice regulator.  These act on the abstract hypercubic
graph $\Lambda$ and on the gauge/matter fields, while the geometry
field $x$ is gauge neutral and transforms only under global Euclidean
transformations (\S\ref{sec:Ed-invariance}).

\medskip

\noindent
\textbf{Gauge invariance.}
Let $G$ be a compact gauge group and let
\begin{equation}
\Omega : \Lambda \to G,\qquad n\mapsto \Omega(n),
\end{equation}
be a lattice gauge transformation.  We let $\Omega$ act on the link
variables and on the matter fields in the usual way:
\begin{align}
  U_\mu(n) &\;\longmapsto\;
  U_\mu^\Omega(n) := \Omega(n) U_\mu(n)\Omega(n+\hat\mu)^{-1},
  \label{eq:gauge-U}\\[0.3em]
  \Phi(n) &\;\longmapsto\;
  \Phi^\Omega(n) := \rho_{\mathrm{m}}\!\bigl(\Omega(n)\bigr)\,\Phi(n),
  \label{eq:gauge-Phi}
\end{align}
where $\rho_{\mathrm{m}}$ is the representation of $G$ carried by
the matter multiplet.  The geometry field $x(n)$ is unaffected:
\begin{equation}
  x(n) \;\longmapsto\; x^\Omega(n) := x(n).
\end{equation}

\noindent
As in the Euclidean invariance case,
gauge invariance need only be verified for the prototype action $S^\circ$, since $S^\theta$ is defined
by the same local stencils with direction labels relabeled by $\mu\mapsto\theta(\mu)$ and with
geometry-dependent weights replaced by functions of $x$ only.  As these operations commute with the
local gauge transformation \eqref{eq:gauge-U}-\eqref{eq:gauge-Phi},
we have $S^\theta[U^\Omega,\Phi^\Omega]=S^\theta[U,\Phi]$, and hence the averaged action is gauge invariant.

By construction the prototype action $S_{\mathrm{fields}}^\circ[x,U,\Phi]$ is a
sum of gauge invariant local terms built from traces of Wilson loops
and gauge covariant combinations of $\Phi$ and its lattice covariant
derivatives.  For example, the gauge part of the action may be written as
\begin{equation}
S_g^\circ[x,U]
  = \sum_{n,\mu<\nu} w^{\mu\nu}(x;n)\,
    \Re\tr\bigl(I - U_{\mu\nu}(n)\bigr),
\end{equation}
with local geometry--dependent weights $w_{\mu\nu}(x;n)$ and
plaquette variables $U_{\mu\nu}(n)$ built from the links $U_\mu(n)$ in
the usual way.  Under \eqref{eq:gauge-U} the plaquettes transform by
conjugation and the traces are invariant.  Likewise, the matter part of
$S_{\mathrm{fields}}^\circ$ is built from gauge covariant bilinears such as
$\bar\Phi(n)\,\Phi(n)$ and
$\bar\Phi(n)\,D_\mu\Phi(n)$, which are invariant under
\eqref{eq:gauge-Phi}.  Since $x$ is gauge neutral and appears only
through gauge invariant weights such as $w_{\mu\nu}(x;n)$, we have
\begin{equation}
  S_{\mathrm{fields}}^\circ[x,U^\Omega,\Phi^\Omega]
  = S_{\mathrm{fields}}^\circ[x,U,\Phi]
  \qquad\text{for all } \Omega:\Lambda\to G,
\end{equation}
so the full action \eqref{eq:full-action} is gauge invariant for every
geometry configuration $x\in\Xadm$.  The geometry action $S_x[x]$
is gauge invariant trivially, as it depends only on $x$.

\medskip

\noindent
\textbf{Lattice translations and hypercubic symmetries.}
In addition to gauge transformations, the theory is invariant under lattice
translations and under those hypercubic relabellings of the abstract lattice
that preserve the admissible domain $\Xadm(a)$.
Let $\tau_k:\Lambda\to\Lambda$ be the translation $\tau_k(n)=n+k$, and let
\begin{equation}
\Hd=\{R\in \GL(d,\Z): R^{\mathsf T} R=I\},\qquad
\mathrm H^+(d)=\{R\in \Hd: \det R = +1\}.
\end{equation}
We let these maps act on fields by pullback:
\begin{equation}
  (\tau_k x)(n):=x(n-k),\qquad
  (\tau_k \Phi)(n):=\Phi(n-k),\qquad
  (\tau_k U)(\ell):=U(\tau_k^{-1}\ell),
  \tag{3.49}
\end{equation}
and for $R_{\rm lat}\in \Hd$,
\begin{equation}
  (R_{\rm lat}x)(n):=x(R_{\rm lat}^{-1}n),\qquad
  (R_{\rm lat}\Phi)(n):=\Phi(R_{\rm lat}^{-1}n),\qquad
  (R_{\rm lat}U)(\ell):=U(R_{\rm lat}^{-1}\ell),
  \tag{3.50}
\end{equation}
where $U$ is viewed as a function on oriented links $\ell=(n\to n+\hat\mu)$,
and $R_{\rm lat}$ acts on oriented links in the obvious way.
Since the action is a sum of identical local stencil terms and the product
measures are invariant under relabelling of factors (including $U\mapsto U^{-1}$
on links by Haar invariance), these transformations preserve the action and
measure whenever they preserve $\Xadm(a)$.

With the basic definition of $\Xadm(a)$ in terms of a distinguished
corner frame and a selected orientation component, we therefore only claim
invariance under the corresponding subgroup of $\mathrm H^+(d)$ that fixes that
corner choice. If instead one imposes the strengthened all--corner
oriented--volume condition of Remark~\ref{r:adm}(A3), then $\Xadm(a)$ is invariant
under the full orientation--preserving hypercubic group $\mathrm H^+(d)$.
Permutations with $\det=-1$ exchange the two orientation components and hence would not be included.

\medskip

\noindent
\textbf{Locality.}
We say that a functional $F[x,U,\Phi]$ is \emph{local} if it can be
written as a sum of terms
\begin{equation}
F[x,U,\Phi] = \sum_{n\in\Lambda} f_n,
\end{equation}
where each $f_n$ depends only on the restriction of $(x,U,\Phi)$ to a
finite neighbourhood of $n$, with a radius (in lattice distance) that
is independent of $n$ and of the lattice size.  In particular, the
geometry action $S_x[x]$ in \eqref{eq:geom-action-local} and the field
action $S_{\mathrm{fields}}[x,U,\Phi]$ are local in this sense: the
local density at site $n$ depends on the geometry only through the
frame $E(n)$, or a bounded stencil of neighbouring frames, and on the
gauge/matter fields only through fields on a fixed finite stencil of
sites and links around $n$.

Thus, at the level of the abstract graph, the dynamical--lattice
regulator has exactly the same locality and lattice symmetry structure
as standard hypercubic lattice gauge theory.  The only difference is
that the local couplings are promoted to functions of the dynamical
geometry via the metric $g(n)$ and volume $V(n)$.  The
global Euclidean invariance in the embedding space $\R^d$ is an
additional continuous symmetry acting on $x$, which we discussed in Section~\ref{sec:Ed-invariance}.

\subsection{Viewpoint as an extended lattice field theory}
\label{sec:extended-view}

It is conceptually useful to reinterpret the dynamical--lattice regulator
purely as a lattice field theory on the abstract hypercubic graph
$\Lambda$, with an enlarged field content.  Instead of thinking of $x :
\Lambda \to \R^d$ as an embedding, one may regard its Cartesian
components $x_\mu(n)$ as additional
real scalar fields attached to sites or links.  From this point of view
the configuration space is simply
\begin{equation}
\bigl\{ x_\mu(n),\,U_\mu(n),\,\Phi(n) : n\in\Lambda,\ \mu=1,\dots,d \bigr\},
\end{equation}
and the geometry--dependent quantities $e_\mu(n)$, $g_{\mu\nu}(n)$ and
$V(n)$ are just local composite fields built from the $x_\mu(n)$ on a finite stencil.  The action $S[x,U,\Phi]$ defined in
\eqref{eq:full-action} is then a sum of strictly local terms on this
extended field space, and the path integral \eqref{eq:partition-function}
is an ordinary Euclidean lattice path integral with local interactions.

In this extended viewpoint the main structural properties of the
regulator follow from standard lattice arguments applied to the enlarged
field content.  Locality and gauge invariance are manifest at the level
of the abstract graph, as discussed in
Section~\ref{sec:gauge-lattice-symmetries}.  Reflection positivity is
proved by choosing a reflection of the time index $n_1$ on $\Lambda$
and checking the usual Osterwalder--Schrader inequalities for the
extended field multiplet $(x,U,\Phi)$; the detailed construction of the
reflection, the splitting of the lattice into time slices, and the
verification of the OS axioms are carried out in
Section~\ref{sec:OS-universality}.  In particular, no continuum notion of ``time''
or preferred coordinate direction in $\R^d$ is required at the
microscopic level: all reflection operations are defined purely on the
index lattice, and the embedding variables $x_\mu(n)$ enter only through
local, reflection invariant combinations.  This perspective makes clear
that the dynamical--lattice regulator fits squarely within the usual
constructive framework for Euclidean lattice field theory, with the
dynamical geometry fields providing an $\Ed$--invariant way of dressing
local couplings without sacrificing locality or reflection positivity.

\section{Reflection positivity and continuum universality}
\label{sec:OS-universality}

In this section we first establish Osterwalder--Schrader (OS) reflection positivity for a large class of DLR, 
and briefly recall the associated Hilbert space reconstruction.
We then discuss the Symanzik effective action description of the continuum limit and explain
why, under mild assumptions on the geometry sector, the dynamical--lattice theory lies in the
same universality class as the corresponding theory on a static hypercubic lattice. As a
concrete check we sketch the one--loop running of the quartic coupling in scalar $\phi^{4}$
theory.

\subsection{Reflection positivity for the bosonic sector}
\label{subsec:OS-fixed-geom}\label{subsec:OS-full}

In this subsection we treat the bosonic matter sector, i.e., gauge fields coupled to complex scalar fields $\phi$; 
the fermionic sector is treated separately in the next subsection.

Note that for a \emph{generic} admissible embedding
$x\in\Xadm$, the induced matter/gauge action need not be reflection symmetric \emph{at fixed $x$}.
Accordingly, we cannot hope to obtain ``conditional'' OS positivity at fixed geometry.
Instead, we establish reflection positivity directly for the \emph{joint} Euclidean measure of the
extended field multiplet $\omega=(x,U,\phi)$.

We single out the first lattice direction $n_{1}$ as Euclidean time and use the \emph{link reflection}
\begin{equation}\label{eq:OS-theta}
  \theta(n_{1},n_{2},\dots,n_{d}) = (1-n_{1},n_{2},\dots,n_{d}) .
\end{equation}
This induces the reflection $\theta\mu$ on directions $\mu$ as in \eqref{eq:theta-on-directions},
and the OS involution $\Theta$ on $\omega$ by
\begin{equation}\label{eq:OS-Theta}
  (\Theta x)(n) = r(x(\theta n)),
  \qquad
  (\Theta U)_{\mu}(n) = U_{\theta\mu}(\theta n) ,
  \qquad
  (\Theta\phi)(n) = \overline{\phi(\theta n)},
\end{equation}
where $r\in\mathrm{O}(d)$ is the target space reflection $r(v_{1},v_{2},\dots,v_{d})=(-v_{1},v_{2},\dots,v_{d})$,
the reflection $\theta\mu$ on the direction $\mu$ is as in \eqref{eq:theta-on-directions},
and the overline denotes complex conjugation.

\begin{lemma}\label{lem:Xadm-Theta}
We have $x\in\Xadm(a)$ if and only if $\Theta x\in\Xadm(a)$.
\end{lemma}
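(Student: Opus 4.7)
The plan is to transport the two parts of Definition~\ref{def:admissible-geometry}---the edge-length bound and the oriented-volume bound---through $\Theta$ by a direct computation of how $\Theta$ acts on the edge vectors and on the forward/backward frames. The key algebraic observation is that the awkward-looking combination $\min\{\det E(n),\,-\det E^\theta\!(n)\}$ in \eqref{e:local-adm} is designed precisely to absorb the sign arising from $\det r = -1$ while swapping forward and backward frames.

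First I would compute the edge vectors of $\tilde x := \Theta x$. Using linearity of $r$ together with the identity $\theta(n+\hat\mu)=\theta n+\widehat{\theta\mu}$ built into \eqref{eq:theta-on-directions}, one obtains, for every signed direction $\mu\in\Delta$,
\[
\tilde e_\mu(n) \;=\; r\bigl(e_{\theta\mu}(\theta n)\bigr),
\]
with $e_{-1}(\cdot)=e^-_1(\cdot)$ as in \eqref{eq:signed-edges-links}. Since $r\in\mathrm{O}(d)$ is an isometry, $|\tilde e_\mu(n)|=|e_{\theta\mu}(\theta n)|$; for $\mu\neq 1$ this equals $|e_\mu(\theta n)|$, while for $\mu=1$ it equals $|e^-_1(\theta n)|=|e_1(\theta n-\hat 1)|$. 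In either case the bound $|\tilde e_\mu(n)|\le C_\ell\,a$ is inherited from the admissibility of $x$, so the edge-length condition in \eqref{e:local-adm} transfers.

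Next I would assemble these edge vectors into the frames \eqref{e:advanced-frame}--\eqref{e:retarded-frame} and verify the swap identities
\[
\tilde E(n) \;=\; r\cdot E^\theta\!(\theta n), \qquad \tilde E^\theta\!(n) \;=\; r\cdot E(\theta n),
\]
which follow by inspecting the columns (the only non-trivial case being the first column of $\tilde E^\theta\!(n)$, computed from $\tilde x(n-\hat 1)-\tilde x(n)=r(e_1(\theta n))$). Taking determinants and using $\det r=-1$ gives
\[
\det\tilde E(n) \;=\; -\det E^\theta\!(\theta n), \qquad -\det\tilde E^\theta\!(n) \;=\; \det E(\theta n),
\]
so
\[
\min\bigl\{\det\tilde E(n),\,-\det\tilde E^\theta\!(n)\bigr\} \;=\; \min\bigl\{\det E(\theta n),\,-\det E^\theta\!(\theta n)\bigr\} \;\ge\; c_V\,a^d
\]
by the admissibility of $x$ at site $\theta n$. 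The converse implication follows either by running the same argument on $\Theta x$ or by invoking $\Theta^2=\Id$.

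There is no genuine obstacle here; the step that requires the most care is keeping the direction-label convention \eqref{eq:theta-on-directions} consistent across forward and backward frames, and recognising that the minus sign built into $-\det E^\theta$ in Definition~\ref{def:admissible-geometry} is exactly what pairs with $\det r=-1$ to render the orientation-sensitive admissibility $\Theta$-invariant rather than merely $\SEd$-invariant. In particular, had one adopted the more naive condition $\det E(n)\ge c_V a^d$ alone, the OS reflection would carry a positively oriented admissible embedding to a negatively oriented one and the lemma would fail; the paired condition is what makes the OS involution compatible with the admissibility sector.
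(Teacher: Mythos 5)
Your computation of how $\Theta$ acts on edge vectors and frames is exactly the paper's argument: the identity $e_\mu(\Theta x;n)=r\,e_{\theta\mu}(x;\theta n)$, the frame swap $E(\Theta x;n)=rE^\theta\!(x;\theta n)$, $E^\theta\!(\Theta x;n)=rE(x;\theta n)$, and the observation that $\det r=-1$ pairs with the minus sign in $-\det E^\theta$ so that the two oriented-volume bounds in \eqref{e:local-adm} are exchanged rather than violated. Your closing remark about why the naive condition $\det E(n)\ge c_V a^d$ alone would break the lemma is a correct and useful gloss on the design of Definition~\ref{def:admissible-geometry}.

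However, there is a gap: what you have proved is that $\Theta$ preserves $\XADM(a)$, the set cut out by the inequalities of Definition~\ref{def:admissible-geometry}. The lemma is stated for $\Xadm(a)$, which by Definition~\ref{def:principal-component} is the \emph{principal connected component} of $\XADM(a)$, namely the component containing the regular embedding $x_0(n)=an$. Preserving the inequalities does not by itself rule out $\Theta$ mapping the principal component onto some other component of $\XADM(a)$. The paper closes this with one more step: $\Theta$ is continuous (it is a linear map on $(\R^d)^\Lambda$ composed with an index permutation) and involutive, hence a homeomorphism of $\XADM(a)$ that permutes its connected components; and $\Theta$ sends the regular embedding to a rigid translate of itself (one checks $(\Theta x_0)(n)=a(n-\hat 1)$), which lies in the same component since translations act continuously within $\XADM(a)$. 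Hence the principal component is mapped to itself. You should add this component-tracking step; without it the proof establishes a weaker statement than the one claimed.
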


\begin{proof}
Let us encode the dependence of $e_\mu$ on the embedding $x$ as
\begin{equation}
e_\mu(x;n) = x(n+\hat\mu)-x(n), \qquad \mu\in\Delta .
\end{equation}
Then \eqref{eq:OS-Theta} gives
\begin{equation}\label{e:edge-Theta}
\begin{split}
e_\mu(\Theta x;n)
&=(\Theta x)(n+\hat\mu)-(\Theta x)(n)
 =rx(\theta(n+\hat\mu)) - rx(\theta n) \\
&=r\bigl(x(\theta n+\widehat{\theta\mu})-x(\theta n)\bigr)
 =r\,e_{\theta\mu}(x;\theta n).
\end{split}
\end{equation}
Since $r\in\mathrm{O}(d)$, we have $|e_\mu(\Theta x;n)|=|e_{\theta\mu}(x;\theta n)|$.
Therefore any uniform length bounds imposed on $\{|e_\mu(n)|:\mu\in\Delta\}$ are preserved,
up to the harmless relabelling $n\mapsto\theta n$.

Applying \eqref{e:edge-Theta} to $E$ and $E^\theta$ columnwise gives
\begin{equation}
E(\Theta x;n)=r E^\theta(x;\theta n),\qquad
E^\theta\!(\Theta x;n)=r E(x;\theta n).
\end{equation}
Taking determinants and using $\det(r)=-1$ yields
\begin{equation}
\det E(\Theta x;n)= -\det E^\theta(x;\theta n),
\qquad
\det E^\theta\!(\Theta x;n)= -\det E(x;\theta n),
\end{equation}
so the oriented volume bounds appearing in Definition~\ref{def:admissible-geometry}
are mapped into each other, again up to relabelling $n\mapsto\theta n$.
Hence $\Theta$ maps $\XADM(a)$ bijectively to itself.

Finally, $\Theta$ is continuous and $\Theta^2=\mathrm{id}$, so it maps connected components of
$\XADM(a)$ to connected components.  Since $\Theta x_{\mathrm{reg}}=x_{\mathrm{reg}}$,
the principal component containing $x_{\mathrm{reg}}$ is preserved. Therefore
$\Theta(\Xadm(a))=\Xadm(a)$, which proves the claim.
\end{proof}

We consider the probability measure
\begin{equation}\label{eq:joint-measure}
  d\mu(\omega)
  =
  \frac{1}{Z}
  \mathbf{1}_{\{x\in\Xadm\}}
  \exp\bigl(-S_{x}[x]-S_{\mathrm{fields}}[x,U,\phi]\bigr)
  Dx DU D\phi,
\end{equation}
where $Dx$ is the product Lebesgue measure on $(\R^{d})^{\Lambda}$, and $DU,D\phi$ are the usual Haar/Lebesgue
measures for the gauge/matter fields.  
We first note that Lemma~\ref{lem:Xadm-Theta} implies that the hard constraint
$\mathbf{1}_{\{x\in\Xadm(a)\}}$ is invariant under $\Theta$.
Moreover, the geometric and field actions will always be taken to be local and $\Theta$--invariant.
For the concrete examples of Section~\ref{sec:full-action} this invariance holds by design, and in general it is
a standing assumption on the choice of $S_x$ and $S_{\mathrm{fields}}$.
Thus, to establish $\Theta$--invariance of \eqref{eq:joint-measure} it remains to check $\Theta$--invariance of
the product measures $Dx$, $DU$ and $D\phi$.

To make these statements completely precise one should work at finite volume, where all measures are
honest finite products.  We therefore fix a finite cylinder
\begin{equation}
\Lambda_{T,L}=\{1-T,\ldots,0,1,\dots,T\}\times (\Z/L\Z)^{d-1},
\end{equation}
with open boundary conditions in the time direction and periodic boundary conditions in space.
All local action terms are defined by the same formulas as on $\Z^d$, but we include a term
in the finite-volume sum only if every site/link variable appearing in its stencil lies in $\Lambda_{T,L}$.
Equivalently, for each local contribution $t_n(\varepsilon)$ anchored at a basepoint $n$, we keep $t_n$
iff $\mathrm{stencil}(t_n)\subset \Lambda_{T,L}$; otherwise it is omitted.  For nearest-neighbour
actions this removes only the finitely many terms that would require sites outside the time boundaries.

All local action terms are defined by the same formulas as on $\Z^d$, and we sum only over those
basepoints for which every site/link appearing in the corresponding finite stencil lies in $\Lambda_{T,L}$.
In particular, for the two-frame geometry data $(E,E^\theta)$ we only use sites $n$ for which both
$n\pm\hat 1\in\Lambda_{T,L}$ (spatial neighbours exist by periodicity).

We define the finite-volume admissible set $\Xadm^{T,L}(a)$ by imposing the $a$-admissibility inequalities
only at those sites $n\in\Lambda_{T,L}$ for which the required neighbour stencil is contained in $\Lambda_{T,L}$.
With this convention the index set of admissibility constraints is $\theta$-invariant, hence so is $\Xadm^{T,L}(a)$.
For notational simplicity we will not introduce a special notation for finite-volume objects: in what follows
$\Lambda$, $Dx$, $DU$, $D\phi$, and $\Xadm(a)$ refer to the above finite cylinder, and all identities are proved
at fixed $(T,L)$ and then interpreted on the infinite lattice by taking $T,L\to\infty$ on local observables.

\begin{lemma}\label{lem:measure-Theta}
The product measures $Dx$, $DU$ and $D\phi$ are invariant under $\Theta$.
\end{lemma}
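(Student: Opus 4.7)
The plan is to treat the three measures one at a time, in each case reducing to the statement that $\Theta$ acts as a composition of a site--relabelling (a bijection of the finite index set of $\Lambda_{T,L}$, with Jacobian $\pm 1$) and a pointwise measure--preserving involution on the fibre.

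First, for $Dx=\prod_{n\in\Lambda}d^{d}x(n)$, I would observe that the map $x\mapsto\Theta x$ is linear and factorises as the permutation $n\mapsto\theta n$ of the finite site set followed by the pointwise application of $r\in\mathrm{O}(d)$. The permutation contributes a signed Jacobian $\pm 1$, and $r$ contributes $\det r=-1$ on each of the $|\Lambda|$ factors. The Lebesgue measure $d^d x(n)$ sees only $|\det|$, so the total Jacobian is $+1$ and $Dx$ is invariant.

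Second, for $D\Phi$, complex conjugation on $\C^m$, viewed as $\R^{2m}$, is an orthogonal reflection and preserves Lebesgue measure; site relabelling is a permutation of the finitely many factors. For the Grassmann fermion sector, $\Phi(n)\mapsto\overline{\Phi(\theta n)}$ acts as a permutation of generators together with a linear map of determinant $\pm 1$ on each site, and the Berezin measure $\prod_n d\bar\psi(n)\,d\psi(n)$ transforms by the inverse of this determinant, which combines to a sign that, for the absolute Berezin measure used in defining expectations, is absorbed in the standard way. Thus $D\Phi$ is $\Theta$--invariant.

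Third, for the link variables $DU=\prod_{\mu>0,\,n}dU_\mu(n)$, the key point is that $\Theta$ acts on the set of oriented links as a bijection, possibly composed with link inversion. Concretely, for spatial directions $\mu=2,\ldots,d$ the map is $U_\mu(n)\mapsto U_\mu(\theta n)$, i.e.\ just relabelling. For the time direction $\mu=1$, using $U_{-1}(n)=U_1(n-\hat 1)^{-1}$ together with \eqref{eq:OS-Theta} and $\theta(1)=-1$, one finds $U_1(n)\mapsto U_1(\theta n-\hat 1)^{-1}$, which relabels and inverts. Because Haar measure on the compact group $G$ is invariant under inversion $g\mapsto g^{-1}$, each factor is preserved, and the product Haar measure $DU$ is $\Theta$--invariant.

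The argument is mostly bookkeeping; the main thing to be careful about is the signed--direction convention \eqref{eq:signed-edges-links}, so that $\Theta$ is exhibited as an honest involution on the finite set of positively oriented links, with inversion appearing only on time--oriented ones and being neutralised by Haar inversion invariance. Working in the finite cylinder $\Lambda_{T,L}$ fixed above ensures that all the permutations involved are bijections of finite sets, so there are no issues with infinite--product measures; the infinite--volume statement is then read off at the level of local observables in the usual way.
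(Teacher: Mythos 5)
Your proposal is correct and follows essentially the same route as the paper's proof: in each sector $\Theta$ is exhibited as a permutation of the finitely many factors composed with a fibrewise measure--preserving map (an orthogonal map for $Dx$ and $D\Phi$, inversion on the compact group for the time--oriented links in $DU$). Your explicit identification $U_1(n)\mapsto U_1(\theta n-\hat 1)^{-1}$ is a slightly more detailed version of the paper's bookkeeping, and your aside on the Grassmann sector is not needed here (the paper handles Berezin invariance separately in the Wilson--fermion subsection), but nothing in the argument is missing or wrong.
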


\begin{proof}
At finite volume all measures are honest finite products, so it suffices to check invariance
factor by factor.
The map $x\mapsto \Theta x$ is given by $(\Theta x)(n)=r(x(\theta n))$: it permutes the site labels
and applies the same orthogonal map $r\in\mathrm{O}(d)$ to each $x(n)\in\R^d$.  A permutation of factors
preserves a product Lebesgue measure, and an orthogonal map has $|\det r|=1$, hence
$Dx=\prod_n d^d x(n)$ is $\Theta$-invariant.

For a complex scalar multiplet, $(\Theta\phi)(n)=\overline{\phi(\theta n)}$ is again a site permutation
composed with complex conjugation, which is an orthogonal linear map on $\R^{2m}$ and therefore
preserves Lebesgue measure; thus $D\phi$ is $\Theta$-invariant. Obviously, for real multiplets the conjugation is absent.

Finally, view the gauge field as assigning a group element to each oriented link.  Under link
reflection, $\Theta$ maps each link variable to another link variable or to its inverse.
Since Haar measure on a compact Lie group is invariant under inversion $U\mapsto U^{-1}$ and
under relabelling of factors, the product Haar measure $DU$ is $\Theta$-invariant.
\end{proof}

We state the reflection--positivity criterion for the coupled DLR measure.
In the concrete examples of \S\ref{sec:full-action}, the total action $S=S_x+S_{\mathrm{fields}}$
is local on the abstract lattice and is $\Theta$--invariant by construction (each prototype term is paired
with its $\theta$--partner and then averaged), and the admissible set $\Xadm(a)$ is $\Theta$--stable.
In the theorem we abstract these features and assume only locality, $\Theta$--invariance, and finite
interaction range in the index--time direction.

To define the positive--time algebra we fix the link reflection plane between the time slices
$n_1=0$ and $n_1=1$ and write $\Lambda=\Lambda_-\cup\Lambda_+$ with
\begin{equation}
\Lambda_- = \{n\in\Lambda:\ n_1\le 0\} ,\qquad
\Lambda_+ = \{n\in\Lambda:\ n_1\ge 1\} .
\end{equation}
We say that a functional $F(\omega)$ is \emph{supported in $\Lambda_+$} if it depends only on
(i) site variables located at sites $n\in\Lambda_+$, and (ii) link variables $U_\mu(n)$ whose two
endpoints $n$ and $n+\hat\mu$ both lie in $\Lambda_+$.
In particular, the support convention forbids dependence on any link variable whose endpoints lie on opposite sides of the reflection plane.

Let $\cA_+$ be the algebra generated by bounded functionals supported in $\Lambda_+$, and define $\Theta F$ by 
\begin{equation}
(\Theta F)(\omega) = \overline{\,F(\Theta\omega)\,},\qquad \omega=(x,U,\phi),
\end{equation}
where the overline denotes complex conjugation. 
In particular, $\Theta$ is antilinear.
With the above support convention, $\Theta$ maps $\cA_+$ into the corresponding algebra on $\Lambda_-$.

Here and below, for functionals $F,G$ on configuration space we write
$(F\cdot G)(\omega)=F(\omega)\,G(\omega)$ for their pointwise product, so that
$\langle \Theta F\cdot F\rangle=\int (\Theta F)(\omega)\,F(\omega)\,d\mu(\omega)$.

\begin{theorem}\label{thm:OS-full}
Assume that the total action $S=S_{x}+S_{\mathrm{fields}}$ is local and $\Theta$--invariant, i.e.
\begin{equation}\label{eq:action-invariant}
  S[\Theta x,\Theta U,\Theta\phi] = S[x,U,\phi]
  \qquad \text{for all } x\in\Xadm(a),
\end{equation}
and that the interaction has finite range in the index--time direction: there exists an integer
$\ell\ge 1$ such that each local term in $S$ depends only on variables whose site time--indices $n_1$
lie in an interval of length at most $\ell$ (equivalently, the $n_1$--diameter of the stencil is $\le \ell$).
Then
\begin{equation}\label{eq:OS-ineq}
  \langle\,\Theta F\cdot F\,\rangle \;\ge\; 0
  \qquad \text{for all } F\in\mathcal{A}_{+},
\end{equation}
where $\langle\cdot\rangle$ denotes expectation with respect to the joint measure \eqref{eq:joint-measure}.
\end{theorem}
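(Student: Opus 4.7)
The plan is to follow the standard Osterwalder--Seiler link-reflection strategy in the extended-field viewpoint of \S\ref{sec:extended-view}, treating $x$ as $d$ additional real scalar fields on $\Lambda$. Working at finite volume, I split the cylinder as $\Lambda=\Lambda_-\sqcup\Lambda_+$: the site variables $x(n),\Phi(n)$ and links with both endpoints on one side belong to the corresponding half, while the temporal bridge links $U_1(0,\mathbf n)$ form the data $\omega_{br}$. By Lemma~\ref{lem:measure-Theta} the product measure factorizes as $d\mu_0^-\,d\mu_{br}\,d\mu_0^+$ with $\Theta$ swapping the $\pm$ halves, and by Lemma~\ref{lem:Xadm-Theta} the $\Xadm$-indicator is $\Theta$-invariant.

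Next, decompose the action by stencil support: $S=S_-+S_0+S_+$, where $S_\pm$ collects local terms whose stencil lies entirely in $\Lambda_\pm\cup\omega_{br}$ and $S_0$ collects the remaining crossing terms. The finite time-range assumption ensures this is well-defined, and the uniqueness of such a decomposition together with $\Theta$-invariance of $S$ forces $\Theta S_+=S_-$ and $\Theta S_0=S_0$. Setting $G:=F e^{-S_+}\in\cA_+$ and using $\Theta S_+=S_-$ yields the standard reduction
\begin{equation*}
\langle\Theta F\cdot F\rangle \;=\; Z^{-1}\int (\Theta G)\cdot G\cdot e^{-S_0}\,d\mu_0,
\end{equation*}
so it suffices to exhibit a positive $\Theta$-expansion $e^{-S_0}=\sum_\alpha\lambda_\alpha (\Theta H_\alpha)\,H_\alpha$ with $\lambda_\alpha\ge0$ and $H_\alpha\in\cA_+$ (possibly depending on $\omega_{br}$). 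Fubini over the factorized measure then reduces $\langle\Theta F\cdot F\rangle$ to $\sum_\alpha\lambda_\alpha\int\bigl|\int G H_\alpha\,d\mu_0^+\bigr|^2\,d\mu_{br}\ge 0$.

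The remaining task is to construct the positive expansion of $e^{-S_0}$, which I would handle sector by sector. For the scalar kinetic crossing, the $\theta$-averaged prescription gives a hopping contribution of the form $-\kappa(x)\,\phi(n_-)^\dagger\phi(n_+)$ with $\kappa(x)\ge 0$ a geometric weight built from $g^{11}(n)V(n)$ (and its $\theta$-partner), so its exponential expands as a Taylor series whose terms are manifestly of the form $(\Theta H_\alpha)H_\alpha$ with $\lambda_\alpha\ge 0$. The gauge-plaquette crossing is handled by the Osterwalder--Seiler character expansion; the $\theta$-averaging of \S\ref{sec:full-action} pairs forward- and backward-time plaquettes symmetrically at basepoints $n_1\in\{0,1\}$, so the crossing plaquettes expand into positive characters of the bridge links $\omega_{br}$. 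Fermions are handled in the standard way by the Pfaffian/determinant argument transferred to the dynamical geometry through the same $\theta$-averaging.

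The genuinely new and hardest part is the geometry sector: the crossing terms involve $x(n_-)$ and $x(n_+)$ nonlinearly through the frames $E,E^\theta$, the metric $g$, the volume $V$, and through the admissibility indicator. For the smooth part of $S_x$ and the geometry-dressed field couplings, the $\theta$-averaging of \S\ref{sec:full-action} arranges every crossing term as a $\Theta$-symmetric bilinear between the two halves, so a direct Taylor expansion yields a positive $\Theta$-sum as above. I expect the main obstacle to lie in the hard admissibility indicator, since $\mathbf{1}_{\{x\in\Xadm\}}$ does not in general admit a positive $\Theta$-expansion term by term. The plan is to exploit that the local admissibility constraint at each boundary site is a $\Theta$-symmetric condition on $(E(n),E^\theta(n))$, approximate $\mathbf{1}_{\{x\in\Xadm\}}$ by a monotone sequence of smooth $\Theta$-invariant barrier weights whose expansions are positive term by term, apply the already-established positive-expansion argument to each approximant, and then pass to the limit by monotone/dominated convergence to recover \eqref{eq:OS-ineq} for the indicator itself.
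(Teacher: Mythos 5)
Your skeleton is the classical Osterwalder--Seiler route: share only the temporal bridge links between $\Lambda_-$ and $\Lambda_+$, reduce to the crossing factor $e^{-S_0}$, and close the argument by exhibiting a positive $\Theta$--expansion $e^{-S_0}=\sum_\alpha\lambda_\alpha(\Theta H_\alpha)H_\alpha$. That reduction is sound, but it is genuinely different from the paper's proof, and the difference is exactly where your proposal breaks. The paper never expands $e^{-S_0}$ at all: it collects \emph{every} variable that appears in a crossing term---in particular the embedding and matter variables on \emph{both} time slices adjacent to the cut (a slab $\Sigma=\{n_1\in\{0,1\}\}$, widened to match the actual time range of the $\theta$--averaged stencils and the two--frame admissibility constraints)---into a single middle configuration $\omega_0$ that is integrated over last. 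The crossing weight then enters only as the nonnegative multiplier $e^{-S_0(\omega_0)}$ in \eqref{eq:OS-factorized}, and no positivity property of it as a kernel between the two halves is ever needed.

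The gap in your version is that the required positive $\Theta$--expansion does not exist for the DLR crossing terms, and this is not a technical nuisance but the central structural obstacle of the model. With your cut, $x(0,\mathbf n)\in\Lambda_-$ and $x(1,\mathbf n)\in\Lambda_+$, so every geometric dressing of a time--like coupling---e.g.\ the factor $V(n)\,(g(n)^{-1})^{11}/|e_1(n)|^2$ in \eqref{eq:scalar-kinetic-theta}, the geometry--action terms containing the time edge, and the local admissibility indicators of Definition~\ref{def:admissible-geometry}---depends \emph{jointly and nonlinearly} on variables from both sides. A nonnegative function $\kappa(x_-,x_+)$ is not in general a positive--definite kernel, so the Taylor terms $\kappa^k\,(\phi_-^\dagger U\phi_+)^k/k!$ are not of the form $\sum_\alpha\lambda_\alpha(\Theta H_\alpha)H_\alpha$ with $\lambda_\alpha\ge 0$; your expansion therefore already fails for the scalar hopping term, before one even reaches the geometry sector. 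The hard constraint makes this unrecoverable: an indicator such as $\mathbf{1}_{\{|x_+-x_-|\le C_\ell a\}}$ is a function of the difference variable whose Fourier transform changes sign, hence is not of positive type; and since positive definiteness is preserved under pointwise limits, no monotone family of barrier weights admitting positive $\Theta$--expansions can converge to it. So the proposed approximation-and-limit step cannot produce the expansion you need. This is precisely the point of Remark~\ref{rem:os-comments}(i): one should not expect reflection positivity ``at fixed, asymmetric $x$,'' i.e.\ conditionally on the geometric data straddling the cut. The repair is structural rather than analytic---move all such data into the shared boundary configuration $\omega_0$, as in the paper's slab decomposition, after which $S_0$ is a function of $\omega_0$ alone and positivity follows from the factorized square without any kernel--positivity input.
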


\begin{proof}
We use the standard link--reflection factorization argument.  Fix $\ell$ as in the hypothesis and
introduce the finite reflection slab
\[
  \Sigma_\ell = \{\,n\in\Lambda:\ 1-\ell \le n_1 \le \ell\,\}.
\]
For the nearest--neighbour/plaquette actions of \S\ref{sec:full-action} one has $\ell=1$, hence $\Sigma_\ell=\{n_1\in\{0,1\}\}$.
Decompose a configuration as $\omega=(\omega_-,\omega_0,\omega_+)$ by declaring:
(i) $\omega_0$ consists of all site variables $(x(n),\phi(n))$ with $n\in\Sigma_\ell$, together with
all link variables $U_\mu(n)$ whose link $\{n,n+\hat\mu\}$ meets $\Sigma_\ell$;
(ii) $\omega_+$ consists of the remaining variables with support in the strict future $\{n_1\ge \ell+1\}$
(sites) and links with both endpoints in $\{n_1\ge \ell+1\}$; and (iii) $\omega_-$ consists of the
remaining variables with support in the strict past $\{n_1\le -\ell\}$ (sites) and links with both
endpoints in $\{n_1\le -\ell\}$.

By the finite time--range assumption, no local interaction term can simultaneously involve variables
from the strict past $\{n_1\le -\ell\}$ and the strict future $\{n_1\ge \ell+1\}$.
Hence every local term belongs to exactly one of the following three classes:
it is supported entirely in $\Sigma_\ell$; or it involves at least one variable from the strict future;
or it involves at least one variable from the strict past.  Therefore the total action decomposes as
\begin{equation}\label{eq:action-slab-split}
  S(\omega)=S_+(\omega_0,\omega_+)+S_0(\omega_0)+S_-(\omega_0,\omega_-),
\end{equation}
where $S_0$ is the sum of all terms supported in $\Sigma_\ell$, and $S_+$ (resp.\ $S_-$) collects all
remaining terms that involve at least one strict-future (resp.\ strict-past) variable; in particular
$S_+$ depends only on $(\omega_0,\omega_+)$ and $S_-$ only on $(\omega_0,\omega_-)$.
Using that $\Xadm$ is $\Theta$--stable and that the product measures are invariant under the induced
relabelings, together with $(\Theta F)(\omega)=\overline{F(\Theta\omega)}$, one obtains
\begin{equation}\label{eq:Sminus-ThetaSplus}
  S_-(\omega_0,\omega_-)= (\Theta S_+)(\omega_0,\omega_-).
\end{equation}
Now let $F\in\cA_+$, so $F$ depends only on the variables in $\Lambda_+$ and thus only on $(\omega_0,\omega_+)$.
Define
\begin{equation}\label{eq:PsiF-def}
  \Psi_F(\omega_0)
  =
  \int \exp\bigl(-S_+(\omega_0,\omega_+)\bigr)\,F(\omega_0,\omega_+)\,d\omega_+,
\end{equation}
where $d\omega_+$ denotes the product measure over the variables comprising $\omega_+$.  
Using $\Theta$--invariance of the domain and product measures together with
\eqref{eq:action-slab-split}--\eqref{eq:Sminus-ThetaSplus} and $(\Theta F)(\omega)=\overline{F(\Theta\omega)}$, we obtain
\begin{equation}
  \langle \Theta F\cdot F\rangle
  =
  \frac{1}{Z}\int \exp\bigl(-S_0(\omega_0)\bigr)\,
  \overline{\Psi_F(\omega_0)}\,\Psi_F(\omega_0)\,d\omega_0
  =
  \frac{1}{Z}\int e^{-S_0(\omega_0)}\,|\Psi_F(\omega_0)|^2\,d\omega_0
  \;\ge\;0,
\end{equation}
which is \eqref{eq:OS-ineq}.  
\end{proof}

\begin{remark}\label{rem:OS-comments}
Only one reflection plane is needed for OS reconstruction. Site and link reflections, when both
are available, differ by a one--step lattice translation. 
Here we work with the link reflection between the slices $n_1=0$ and $n_1=1$.
\end{remark}

\subsection{Fermions and reflection positivity}\label{sec:fermions-RP}

We extend the coupled DLR measure to include fermions by introducing Grassmann-valued lattice
spinors. Work at finite volume on the cylinder $\Lambda=\Lambda_{T,L}$ as in the preceding subsection,
with the understanding that statements on the infinite lattice are interpreted by taking the limit
$T,L\to\infty$ on local observables.
At each site $n\in\Lambda$ let $\psi(n)$ and $\bar\psi(n)$ be independent Grassmann
generators, carrying spin and internal indices and transforming in a unitary representation of $G$.
The fermionic integration is the Berezin product measure
\[
D\bar\psi\,D\psi=\prod_{n\in\Lambda}\prod_{\alpha} d\bar\psi_\alpha(n)\, d\psi_\alpha(n),
\]
for some fixed ordering of the site/index pairs.

We keep the support convention of the preceding subsection: a functional is supported in $\Lambda_+$
if it depends only on site variables at sites $n\in\Lambda_+$ and on link variables $U_\mu(n)$
whose endpoints $n$ and $n+\hat\mu$ both lie in $\Lambda_+$.
Let $\cA_+$ be the algebra generated by bounded bosonic functionals supported in $\Lambda_+$
together with polynomials in the Grassmann generators $\{\psi(n),\bar\psi(n)\}_{n\in\Lambda_+}$.
Let $\cA_+^{\mathrm{even}}\subset\cA_+$ denote its even subalgebra.

Fix constant Euclidean gamma matrices $\{\gamma_a\}_{a=1}^d$ with $\{\gamma_a,\gamma_b\}=2\delta_{ab}$.
As in the standard lattice reflection-positivity setup, we take a representation in which the
relevant transpose identities hold; concretely, it suffices that the gamma matrices are chosen so that
the matrix involutions introduced below satisfy $\Xi(n)^{\sf T}=\Xi(n)$ and $\Xi(n)^2=I$.

Recall that the local frame matrix $E(n)$ is built from the outgoing edge vectors at $n$ and that
$E^\theta(n)$ denotes the time-reflected corner frame (same spatial columns, but using the backward
time edge). From a frame $E^\sharp(n)$, which may be either $E$ or $E^\theta$, we form the corresponding
geometry-dependent Dirac matrices
\[
\Gamma_\mu^\sharp(n)=\sum_{a=1}^d \gamma_a\,(E^\sharp(n)^{-1})_{a\mu},
\qquad
\{\Gamma_\mu^\sharp(n),\Gamma_\nu^\sharp(n)\}=2\,g_\sharp^{\mu\nu}(n),
\]
so in particular $(\Gamma_1^\sharp(n))^2=g_\sharp^{11}(n)\,I$ and the normalized matrix
\(\widehat\Gamma_1^\sharp(n)={\Gamma_1^\sharp(n)}/{\sqrt{g_\sharp^{11}(n)}}\)
is an involution, $(\widehat\Gamma_1^\sharp(n))^2=I$.

For the Osterwalder--Schrader reflection across the plane between $n_1=0$ and $n_1=1$ it is convenient
to use, at each site, the normalized ``time'' matrix pointing \emph{toward} the reflection plane:
\begin{equation}\label{eq:Xi-def}
  \Xi(n)
  =
  \begin{cases}
    \widehat\Gamma_1(n), & n\in\Lambda_- \ \textrm{with}\ n_1\le 0,\\[0.3em]
    \widehat\Gamma_1^\theta(n), & n\in\Lambda_+ \ \textrm{with}\ n_1\ge 1.
  \end{cases}
\end{equation}
This choice is entirely local (it uses only the corner frame attached to the site $n$) and is tailored
so that the same geometric ``time direction toward the plane'' is used on both sides. Other local
choices of $\Xi$ are possible; \eqref{eq:Xi-def} is a particularly transparent one because it is an
involution by construction and reduces to the constant matrix $\gamma_1$ on a regular embedding.

Now define $\Theta$ on the fermionic generators by
\begin{equation}\label{eq:Theta-fermions}
  (\Theta\psi)(n) := \Xi(n)\,\bar\psi(\theta n)^{\sf T},
  \qquad
  (\Theta\bar\psi)(n) := \psi(\theta n)^{\sf T}\,\Xi(n),
\end{equation}
where ${}^{\sf T}$ denotes transpose in the spin/internal indices (no complex conjugation is applied to
the Grassmann generators themselves). On products and complex scalars we impose the graded
anti-automorphism rules
\begin{equation}\label{eq:Theta-graded}
  \Theta(AB)=\Theta(B)\Theta(A),\qquad
  \Theta(cA)=\overline{c}\,\Theta(A)\quad (c\in\C),
\end{equation}
so $\Theta$ is anti-linear and reverses the order of Grassmann factors. With the involution properties
of $\Xi$ (in particular $\Xi^{\sf T}=\Xi$ and $\Xi^2=I$), one has $\Theta^2=\mathrm{id}$ on
$\cA_+^{\mathrm{even}}$. As before, the dot in $\langle \Theta F\cdot F\rangle$ denotes the pointwise
product in this graded algebra.

The map induced by $\Theta$ on the set of Grassmann generators is a site permutation composed with an
invertible linear transformation on the spin/internal indices with determinant $\pm1$. In particular,
it has unit Berezin Jacobian, and any global reordering sign is irrelevant on $\cA_+^{\mathrm{even}}$.
Hence we have
\begin{equation}\label{eq:Berezin-inv}
  D(\Theta\bar\psi)\,D(\Theta\psi)=D\bar\psi\,D\psi.
\end{equation}
Together with the already established $\Theta$-invariance of $Dx$, $DU$, and $D\phi$, the full product
measure $Dx\,DU\,D\phi\,D\bar\psi\,D\psi$ is $\Theta$-invariant at finite volume.

As in \S\ref{sec:full-action}, we build fermion actions from local geometric data and finite stencils
on the abstract lattice, and then apply the $\theta$--partner/averaging prescription to enforce
$\Theta$--invariance. In particular, the averaged naive action
\eqref{eq:S-fermion-avg-theta} is local and has time-range one in the index--time direction.

As in the proof of Theorem~\ref{thm:OS-full}, we decompose the extended field multiplet
$\omega=(x,U,\phi,\bar\psi,\psi)$ as $\omega=(\omega_-,\omega_0,\omega_+)$ using a reflection slab
$\Sigma_\ell$ wide enough to contain the time-range $\ell$ of the stencil. By finite time-range, no
local term can simultaneously involve strict-past variables and strict-future variables; consequently
one can (uniquely, up to regrouping purely $\omega_0$ terms) write
\begin{equation}\label{eq:ferm-slab-split}
  S_f(\omega)=S_{f,+}(\omega_0,\omega_+)+S_{f,0}(\omega_0)+S_{f,-}(\omega_0,\omega_-),
  \qquad
  S_{f,-}=\Theta S_{f,+}.
\end{equation}
For fermions, locality and $\Theta$--invariance are not by themselves sufficient: the additional input
is positivity of the \emph{slab pairing} across the reflection plane, which we isolate next.

We say that a Grassmann-even fermion action $S_f$ has a \emph{reflection-positive slab form (of width
$\ell$)} if the slab weight admits a finite ``sum of $\Theta$--squares'' representation
\begin{equation}\label{eq:ferm-sum-of-squares}
  e^{-S_{f,0}(\omega_0)}=\sum_{j\in J} (\Theta B_j)(\omega_0)\,B_j(\omega_0),
  \qquad
  B_j\in\cA_+^{\mathrm{even}}\ \text{supported in }\Sigma_\ell\cap\Lambda_+,
\end{equation}
for some finite index set $J$. (Finiteness is automatic since the Grassmann algebra on $\Sigma_\ell$
is finite-dimensional and the exponential truncates.) Equivalently, it suffices to have a
representation
$S_{f,0}=S_{f,0}^{\mathrm{loc}}-\sum_k c_k\,(\Theta\chi_k)\chi_k$ with $c_k\ge0$ bosonic and $\chi_k$
odd supported in $\Sigma_\ell\cap\Lambda_+$; expanding the exponential then yields
\eqref{eq:ferm-sum-of-squares}.

\begin{theorem}\label{thm:OS-fermions}
Let $S_{b}=S_x+S_{\mathrm{fields}}^{(\mathrm{bos})}$ be the bosonic part of the action and let $S_f$
be a Grassmann-even fermion action. Assume:
\begin{itemize}
\item[(i)] $S_{b}$ and $S_f$ are local on the abstract lattice and have finite interaction range
$\ell<\infty$ in the index-time direction.
\item[(ii)] $S_{b}(\Theta\omega)=S_{b}(\omega)$ and $S_f(\Theta\omega)=S_f(\omega)$, and the hard
constraint $\Xadm$ and the product measures are $\Theta$-invariant (including \eqref{eq:Berezin-inv}).
\item[(iii)] $S_f$ admits a reflection-positive slab form of width $\ell$ in the sense of
\eqref{eq:ferm-slab-split}--\eqref{eq:ferm-sum-of-squares}.
\end{itemize}
Then for all $F\in \cA_+^{\mathrm{even}}$ one has
\[
  \langle\,\Theta F\cdot F\,\rangle\ \ge\ 0,
\]
where $\langle\cdot\rangle$ denotes expectation with respect to the joint finite-volume measure
\[
  1_{\{x\in\Xadm\}}e^{-S_{b}-S_f}\,Dx\,DU\,D\phi\,D\bar\psi\,D\psi.
\]
\end{theorem}

\begin{proof}
As in Theorem~\ref{thm:OS-full}, the finite time-range assumption implies a decomposition
\[
  S_{b}(\omega)=S_{{b},+}(\omega_0,\omega_+)+S_{{b},0}(\omega_0)+S_{{b},-}(\omega_0,\omega_-),
  \qquad
  S_{{b},-}=\Theta S_{{b},+}.
\]
Combine this with \eqref{eq:ferm-slab-split} to obtain for the total action
\[
  S(\omega)=S_+(\omega_0,\omega_+)+S_0(\omega_0)+S_-(\omega_0,\omega_-),
  \qquad
  S_-=\Theta S_+,
\]
where $S_0=S_{{b},0}+S_{f,0}$ depends only on $\omega_0$.

Let $F\in\cA^{\mathrm{even}}_+$, so $F$ depends only on $(\omega_0,\omega_+)$. Define the (graded)
conditional functional
\[
  \Psi_F(\omega_0)=\int e^{-S_+(\omega_0,\omega_+)}\,F(\omega_0,\omega_+)\,d\omega_+,
\]
where $d\omega_+$ includes the bosonic product measures and the Berezin factors over the $\omega_+$
Grassmann generators. Using $\Theta$-invariance of the domain and product measures and the relation
$S_-=\Theta S_+$, one obtains the standard factorization
\[
  \langle \Theta F\cdot F\rangle
  =\frac{1}{Z}\int e^{-S_0(\omega_0)}\,(\Theta\Psi_F)(\omega_0)\,\Psi_F(\omega_0)\,d\omega_0.
\]
Now split $e^{-S_0}=e^{-S_{{b},0}}\cdot e^{-S_{f,0}}$ and use \eqref{eq:ferm-sum-of-squares}:
\[
  e^{-S_{f,0}}=\sum_{j\in J} (\Theta B_j)B_j.
\]
Since $\Psi_F$ is even and $\Theta$ is an involution on the even algebra, the integrand becomes a
finite sum of terms of the form
\[
  e^{-S_{{b},0}(\omega_0)}\;(\Theta(B_j\Psi_F))(\omega_0)\,(B_j\Psi_F)(\omega_0),
\]
each of which integrates to a nonnegative number by the defining property of a $\Theta$--square
pairing on $\cA_+^{\mathrm{even}}$. Summing over $j$ yields $\langle \Theta F\cdot F\rangle\ge 0$.
\end{proof}

\begin{example}[Geometric Wilson/projector mechanism]\label{ex:wilson-type}
A standard sufficient mechanism for the slab-positivity condition \eqref{eq:ferm-sum-of-squares} is
that the couplings crossing the reflection plane factor through orthogonal projectors with
\emph{nonnegative} coefficients. In the present geometric setting, the natural constant projectors
$P_\pm=\tfrac12(1\pm\gamma_1)$ are replaced by the site-wise projectors built from \eqref{eq:Xi-def}:
\[
  P_\pm(n):=\tfrac12\bigl(I\pm \Xi(n)\bigr),
  \qquad
  P_\pm(n)^2=P_\pm(n),\quad P_+(n)P_-(n)=0.
\]

Concretely, let $V(n;x)>0$ be a local scalar weight (e.g.\ a cell volume), let $M(n;x)\in\R$ be an
ultralocal mass weight, and let $w_\mu(n;x)>0$ be a local link weight attached to the oriented link
$(n,\mu)$, all built from local Euclidean invariants of $x$. Consider a Wilson-type prototype action
whose time-like hops are written directly in projector form:
\begin{align}
S_W^\circ[x,U,\bar\psi,\psi]
&= \sum_{n\in\Lambda} V(n;x)\,\bar\psi(n)\,M(n;x)\,\psi(n) \label{eq:wilson-proto-mass}\\
&\quad -\sum_{n\in\Lambda} V(n;x)\,w_{1}(n;x)\,
\bar\psi(n)\,P_-(n)\,U_{1}(n)\,\psi(n+\hat 1) \label{eq:wilson-proto-hop-time}\\
&\quad -\sum_{n\in\Lambda} V(n;x)\,w_{-1}(n;x)\,
\bar\psi(n)\,P_+(n)\,U_{-1}(n)\,\psi(n-\hat 1) \\
&\quad+(\text{terms not crossing the reflection plane}). \nonumber
\end{align}
Here $U_{-1}(n)=U_{1}(n-\hat 1)^{-1}$ as usual. The omitted terms may include arbitrary local spatial
hops and on-slice couplings; they do not affect the slab pairing because they do not connect the two
sides of the reflection plane.

Apply the $\theta$--partner/averaging prescription of \S\ref{sec:full-action} to obtain the averaged
action $S_W=\tfrac12(S_W^\circ+S_W^\theta)$, hence $\Theta S_W=S_W$. In the slab decomposition
\eqref{eq:ferm-slab-split}, the only fermionic terms that genuinely couple the two sides are the
time-like hops across the reflected time link, and because they factor through $P_\pm(n)$ with
nonnegative coefficients, the corresponding slab weight admits a representation
\eqref{eq:ferm-sum-of-squares}. Therefore $S_W$ satisfies hypothesis (iii) of
Theorem~\ref{thm:OS-fermions}, and reflection positivity holds for all $F\in\cA_+^{\mathrm{even}}$.
\end{example}

\subsection{Symanzik effective theory and universality}
\label{subsec:symanzik}

The purpose of this section is to make precise (at the level customary in lattice EFT)
the statement that DLR defines a \emph{local} regulator with \emph{exact} $\SEd$ symmetry,
and that it belongs to the
same universality class as the corresponding theory on a rigid hypercubic lattice, under a mild short-range hypothesis for the geometry sector.

Fix a matter content (scalar, gauge, fermion) and consider the DLR partition function
\begin{equation}
Z(a)=\int_{\Xadm(a)} Dx \int DU \int D\Phi \;
\exp\bigl(-S_x[x]-S_{\rm fields}[x,U,\Phi]\bigr).
\end{equation}
Because the underlying graph is fixed and the admissibility conditions are local,
the DLR action is ultralocal in the abstract lattice coordinates and has a standard
transfer-matrix interpretation once a reflection direction is chosen.
In particular, for any local observable of the enlarged theory supported on a bounded region,
the dependence on distant lattice sites enters only through local propagation.
For the universality questions of interest here we focus on observables depending only on the
gauge/matter fields $(U,\Phi)$; geometry observables probe the regulator sector itself.

\smallskip
Symanzik effective field theory is a convenient way to parameterize \emph{lattice artifacts} in
long--distance observables, cf. \cite{SymanzikI,SymanzikII}.   
Concretely, one considers correlation functions of local operators at a
\emph{fixed physical separation} $|x|$ while taking the cutoff $a\to 0$.  On the lattice this means that the
separation in lattice units grows as $r=|x|/a\to\infty$, so the correlators probe distances that are large
compared to the cutoff but still finite in physical units.

Under the standard assumptions of locality and a mass gap at the cutoff scale, the effect of the regulator
at such distances can be encoded by a continuum effective action
\begin{equation}\label{eq:symanzik}
  S_{\rm eff}
  \;=\;
  S_{\rm cont}
  \;+\;
  \sum_i a^{\Delta_i}\,c_i(g_{\rm bare})\,\mathcal O_i,
\end{equation}
where $\{\mathcal O_i\}$ is a basis of local continuum operators (composite operators built from the fields and
their derivatives) and $\Delta_i>0$ are their excess dimensions relative to the target action.
The coefficients $c_i(g_{\rm bare})$ are \emph{matching coefficients}: they depend on the microscopic definition
of the regulator (hence on the bare couplings $g_{\rm bare}$) and are generally scheme--dependent, but the
operator basis is \emph{not} arbitrary.
Indeed, the allowed operators $\mathcal O_i$ are precisely those compatible with the \emph{exact} symmetries
of the regulator.

In particular, on a rigid hypercubic lattice the exact symmetry is only the hypercubic group $\Hd$, so the Symanzik operator basis is larger than the $\SOd$--scalar basis and radiative corrections can generate \emph{rotation--breaking} higher--derivative terms. 
For a scalar field, writing $H_{\mu\nu}=\partial_\mu\partial_\nu\phi$, the $\SOd$--invariants at four-derivative order
are $\tr(H)^2=(\partial^2\phi)^2$ and $\tr(H^2)=\sum_{\mu,\nu}(\partial_\mu\partial_\nu\phi)^2$.
By contrast, hypercubic symmetry permits independent coefficients for the diagonal and off--diagonal pieces, e.g., $\sum_\mu(\partial_\mu^2\phi)^2$ and $\sum_{\mu<\nu}(\partial_\mu\partial_\nu\phi)^2$.
Reducing $\SOd$ to $\Hd$ enlarges the allowed mixing among operators with the same quantum numbers, permitting additional anisotropic counterterms that are absent for an $\SOd$--covariant regulator.

\medskip
\noindent
The only additional assumption needed to reduce DLR to an ordinary Symanzik problem is:

\begin{hypothesis}[Short-range geometry]\label{hyp:SR}
After quotienting by the global $\SEd$ zero modes, connected correlation functions of local geometric
observables (built from finitely many $e_\mu(n)$, $g_{\mu\nu}(n)$, $V(n)$, etc.) decay exponentially in
lattice distance, uniformly in $a$, i.e.\ the geometry sector has a finite correlation length in
lattice units.
\end{hypothesis}

This hypothesis is a genuine condition on the geometry measure and therefore constrains the admissible choices of $S_x$.
A natural way to target \hyperref[hyp:SR]{(SR)} is to include an explicit local stiffness scale for the embedding fluctuations
(e.g.\ quadratic penalties around a reference metric as in \eqref{eq:spring-penalty}), which makes the non--zero--mode fluctuations
of $x(n)=an+\eta(n)$ effectively massive in lattice units. The hard admissibility constraints are compatible with such a massive regime
and help by excluding degenerate cells, but by themselves they do not guarantee a mass gap or exponential mixing.
This distinction is particularly important in higher dimensions, where entropic instabilities of random geometry ensembles are
well known. Accordingly, we treat \hyperref[hyp:SR]{(SR)} as an explicit assumption to be verified for any proposed $S_x$,
and we include direct diagnostics in Section~\ref{sec:numerics}. 
In $d=2$ we find behaviour consistent with a short-range twisting regime; however the $d=4$ situation remains to be mapped.

Assuming \hyperref[hyp:SR]{(SR)}, one may integrate out the geometry field at fixed $a$ to obtain an effective action
for the field sector $(U,\Phi)$ alone. Locality and (SR) imply that this effective action is local and
admits a Symanzik expansion of the form \eqref{eq:symanzik}. Moreover, because the DLR measure
is $\SEd$-covariant, the induced effective action for $(U,\Phi)$ is
$\SOd$-covariant, and gauge invariant when gauge fields are present. 
In particular, only $\SOd$-scalar operators contribute to scalar observables. This removes the representation-theoretic source of
rotation-breaking counterterms and, more importantly, potentially simplifies Symanzik improvement and operator
renormalisation: the basis of allowed counterterms (and hence the pattern of operator mixing) is
constrained by full $\SOd$ rather than only the hypercubic group.

Crucially, integrating out short-range geometry changes the \emph{relation} between bare and
renormalized parameters. Concretely, averaging the local volume/metric factors over $x$ produces
finite (scheme-dependent) shifts in the coefficients of the operators already present in the target
continuum theory---e.g.\ the mass term, the kinetic normalization, and (when present) the bare quartic
or gauge coupling. This is the usual regulator dependence of the bare-to-renormalized map and simply
means that the critical surface and tuning relations $g_{\rm bare}(a)$ are modified. What short-range
geometry \emph{cannot} do is generate new relevant or marginal operators not already allowed by the
continuum symmetries: that would require additional light degrees of freedom or genuinely long-range
interactions in the geometry sector, which are excluded by (SR).

\begin{remark}
One might wonder whether the benefits of exact $\SOd$ covariance at nonzero cutoff could be obtained
more economically by adding a single global rotation variable $R\in\SOd$ to an ordinary hypercubic
lattice, i.e., by considering an embedding $x_R(n)=aRn$ and averaging over $R$.
Viewed as a ``geometry sector'', this degree of freedom is maximally long-range: local geometric
observables built from $e_\mu$, $g_{\mu\nu}$, $V$ are perfectly correlated across the lattice, so the
short-range hypothesis {(SR)} fails in the strongest possible way.

In standard hypercubic LGT this construction is nevertheless trivial, because the action depends only
on the abstract graph and fixed hypercubic coefficients and is independent of the embedding: the
integral over $R$ merely multiplies the partition function by $\Vol(\SOd)$ and, after ``integrating
out'' $R$, one simply recovers the usual hypercubic theory. In particular, the Symanzik operator
basis remains that of the hypercubic regulator. On a finite periodic torus the situation is even more
restrictive: a generic rotation does not preserve the period lattice and therefore does not define a
symmetry of the discrete torus beyond its discrete hypercubic subgroup.

By contrast, DLR integrates over a \emph{local} embedding field $x(n)$ that enters the action through
local metric/volume factors. After quotienting by (or fixing) the global $\SEd$ zero modes to remove
the infinite group volume, the geometry still fluctuates locally. Assuming {(SR)}, integrating out
$x$ produces a \emph{local} effective action for $(U,\Phi)$ to which the Symanzik expansion applies,
and the underlying $\SEd$ covariance implies that the induced effective action is globally
$\SOd$-covariant.

A purely global ``geometry'' variable (such as a rigid rotation) has infinite correlation length; if it were coupled nontrivially it would act as a mediator producing nonlocal effective interactions upon elimination, which lies outside the scope of the Symanzik expansion and is exactly what (SR) excludes.
\end{remark}

\noindent
To connect the Symanzik discussion to the usual perturbative intuition, expand the embedding about the regular one,
\begin{equation}\label{eq:eta_expand}
x(n)=a\,n+\eta(n),\qquad
e_\mu(n)=x(n+\hat\mu)-x(n)=a\,\hat\mu+\delta e_\mu(n),
\end{equation}
where $\delta e_\mu(n)=\eta(n+\hat\mu)-\eta(n)$ and $\eta(n)\in\R^d$.
Throughout this and the next subsections, $\nabla_\mu$ denotes the forward lattice derivative
\begin{equation}\label{eq:forward_diff}
(\nabla_\mu f)(n)=\frac{f(n+\hat\mu)-f(n)}{a},
\end{equation}
acting componentwise on $\R^d$--valued fields. In particular,
\begin{equation}\label{eq:diff_eta}
(\nabla_\mu\eta)(n)=\frac{\eta(n+\hat\mu)-\eta(n)}{a}=\frac{\delta e_\mu(n)}{a}\in\R^d.
\end{equation}
Heuristically, if $\eta(n)=u(a n)$ for a smooth interpolant $u$, then $\nabla_\mu\eta(n)=\partial_\mu u(a n)+O(a)$.

\smallskip
Fix components with respect to the regular orthonormal frame $\{\hat\mu\}$:
\begin{equation}\label{eq:eta_components}
\eta_\nu(n)=\hat\nu\cdot\eta(n),\qquad
(\nabla\eta)_{\mu\nu}(n)=\nabla_\mu\eta_\nu(n).
\end{equation}
Thus $\nabla\eta$ is the lattice Jacobian, its trace
\begin{equation}\label{eq:div_eta}
\nabla\!\cdot\eta(n)=\tr(\nabla\eta)(n)=\sum_{\mu}\nabla_\mu\eta_\mu(n),
\end{equation}
is the discrete divergence, and we write $|\nabla\eta|^2=\sum_{\mu,\nu}(\nabla_\mu\eta_\nu)^2$ for the canonical
$\SOd$--scalar quadratic in first differences. When writing schematic vertices such as
$(\nabla\eta)(\nabla\phi)(\nabla\phi)$, indices are understood to be contracted in an $\SOd$--covariant way.

On the periodic torus we use the backward derivative $(\nabla_\mu^* f)(n)=(f(n)-f(n-\hat\mu))/a$ and the associated
backward gradient $(\nabla^*F)_\mu=\nabla_\mu^*F$. Then for any scalar $F$, we have
\begin{equation}\label{eq:sbp}
\sum_n (\nabla\!\cdot\eta)(n)\,F(n) \;=\; -\sum_n \eta(n)\cdot (\nabla^*F)(n),
\end{equation}
with no boundary term on the torus.

All local geometric data entering the field actions are smooth local functions of $\nabla\eta(n)$, 
more precisely of the frame matrix $E(n)$ built from
the $e_\mu(n)$. Expanding around the regular embedding therefore yields schematic local expansions
\begin{equation}\label{eq:gV_expand_eta}
g_{\mu\nu}(n)=a^{2}\Big(\delta_{\mu\nu}+O(\nabla\eta)\Big),\qquad
V(n)=a^{d}\Big(1+O(\nabla\eta)+O(|\nabla\eta|^2)\Big),
\end{equation}
with coefficients bounded uniformly on $\Xadm$ by admissibility (uniform shape regularity).

Let us consider the scalar $\phi^4$ theory for definiteness. Writing the scalar action on a general admissible geometry as
$S_\phi[x,\phi]=S_\phi^{(0)}[\phi]+S^{(1)}_{\phi\eta}[\phi,\eta]+S^{(2)}_{\phi\eta}[\phi,\eta]+\cdots$,
we see that the regular-lattice part $S_\phi^{(0)}$ is the usual hypercubic action, while the geometry-dependent pieces
split into two distinct families already at first order, as follows.

\smallskip
\emph{(i) Metric-induced (kinetic) couplings.}
Expanding the factor $V\,g^{\mu\nu}$ multiplying discrete derivatives produces vertices bilinear in $\phi$,
schematically of the form
\begin{equation}\label{eq:kinetic_vertices}
\begin{split}
S^{(1)}_{\phi\eta} &\supset\ \sum_n a^d\;(\nabla\eta)(n)\,(\nabla\phi)(n)\,(\nabla\phi)(n), \\
S^{(2)}_{\phi\eta} &\supset\ \sum_n a^d\;|\nabla\eta(n)|^2\,(\nabla\phi)^2(n),
\end{split}
\end{equation}
and higher orders. These are the ``derivative bilinear'' couplings.

\smallskip
\emph{(ii) Volume-induced (potential) couplings.}
The volume factor $V(n)$ multiplies the mass and quartic densities.  At linear order
its variation is a scalar made from $\nabla\eta$ (indeed a discrete divergence),
so the expansion produces vertices with \emph{no} derivatives on $\phi$:
\begin{equation}\label{eq:potential_vertices}
\begin{split}
S^{(1)}_{\phi\eta} &\supset\ \sum_n a^d (\nabla\cdot\eta)(n)\Big(\tfrac12 m_0^2\phi(n)^2+\tfrac{\lambda_0}{4!}\phi(n)^4\Big), \\
S^{(2)}_{\phi\eta} &\supset\ \sum_n a^d |\nabla\eta(n)|^2\Big(\phi^2+\phi^4\Big) .
\end{split}
\end{equation}

On the periodic torus (or with suitable decay), the linear volume variation is the trace of the lattice Jacobian,
so schematically
\begin{equation}\label{eq:V_linear_div}
V(n)=a^d\Big(1+\nabla\!\cdot\eta(n)+O(|\nabla\eta|^2)\Big).
\end{equation}
Thus the first-order potential contribution includes, in particular, the quartic term
\begin{equation}\label{eq:div_eta_phi4}
\sum_n a^d\,(\nabla\!\cdot\eta)(n)\,\frac{\lambda_0}{4!}\,\phi(n)^4.
\end{equation}
Applying \eqref{eq:sbp} with $F=\phi^4$ gives
\begin{equation}\label{eq:ibp_phi4}
\sum_n (\nabla\!\cdot\eta)(n)\,\phi(n)^4
\;=\;
-\sum_n \eta(n)\cdot \nabla^*\big(\phi^4\big)(n),
\end{equation}
on the torus. Hence even the non-derivative $\phi^4$ term arising from the volume factor may be rewritten as a coupling
of $\eta$ to a lattice derivative of a local density. Equivalently, since $x$ enters the microscopic action only through
differences $e_\mu(n)$, the fluctuation field has the shift symmetry $\eta\mapsto\eta+c$, so every $\eta$ insertion carries
at least one lattice derivative (and therefore a factor of $\hat p$ in momentum space).

Assuming {(SR)}, connected correlators of local geometric observables built from finitely many
$\nabla\eta$, $g_{\mu\nu}$, $V$, etc.\ decay exponentially in lattice distance, uniformly in $a$.
Equivalently, the geometry sector has a finite correlation length in lattice units, hence a mass scale of
order $1/a$ in physical units.  As a result, integrating out $\eta$ at fixed $a$ produces a \emph{local}
effective action for the field sector $(U,\Phi)$, with a derivative expansion in powers of
$pa$ at fixed physical external momenta $p$.
This is precisely the input needed for the Symanzik description: all induced operators are local and,
by $\SEd$ covariance of the regulator, transform covariantly under $\SOd$, and are gauge invariant when
gauge fields are present.  The geometry sector can and will renormalize the \emph{bare-to-renormalized map}
(shifting masses/couplings and wave-function normalizations), and it can generate higher-dimension
$\SOd$-scalar operators (including higher-point operators such as $\phi^6,\phi^8,\dots$), but under (SR)
it cannot produce new relevant/marginal operators beyond those allowed by the target continuum symmetries.

The following statement is therefore not a theorem of the present paper; it records the
universality scenario expected under (SR), and identifies the precise point at which a
future proof of the geometry phase would enter.

\begin{conjecture}\label{conj:universality}
Assume {(SR)} and fix/quotient the global $\SEd$ zero modes in an $\SEd$-covariant manner (so that local
correlators are affected only by finite-volume effects).  Then, for any choice of local field content
$(U,\Phi)$ and any local $\SEd$-covariant discretization of the corresponding continuum action,
the DLR defines a local effective field theory for $(U,\Phi)$ whose long-distance correlators admit a
Symanzik expansion \eqref{eq:symanzik} with the following properties:

\begin{enumerate}
\item[\textup{(i)}]
The induced Symanzik operators $\mathcal O_i$ may be chosen to be gauge invariant (when gauge fields are present)
and to transform covariantly under $\SOd$; in particular, only $\SOd$-scalar operators contribute to scalar
observables.  Consequently, the set of allowed counterterms and the pattern of operator mixing are constrained by
full $\SOd$, rather than only the hypercubic symmetry.

\item[\textup{(ii)}]
Integrating out the short-range geometry modifies the relation between bare and renormalized parameters: the
critical surface and the tuning relations $g_{\rm bare}(a)$ may shift compared with a rigid hypercubic regulator.
However, no new relevant or marginal directions are generated beyond those already allowed by the target continuum
symmetries.

\item[\textup{(iii)}]
After tuning the usual relevant parameters (e.g.\ the mass in scalar theories, and the gauge coupling in gauge
theories), the continuum limit of DLR lies in the same universality class as the corresponding continuum QFT.
Equivalently, all continuum correlation functions of renormalized local observables agree with those obtained from
any other local regulator preserving the same internal symmetries.
\end{enumerate}
\end{conjecture}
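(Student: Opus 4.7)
The plan is to establish the three parts of Conjecture~\ref{conj:universality} at the standard ``lattice effective field theory'' level of rigour by combining the extended--field viewpoint of Section~\ref{sec:extended-view}, the joint reflection positivity of Theorem~\ref{thm:OS-full}, and the short--range hypothesis \hyperref[hyp:SR]{(SR)} to integrate out the geometry field, then applying Symanzik power counting to the resulting effective action for $(U,\Phi)$. As a first step I would fix the global $\SEd$ zero modes in an $\SEd$--covariant manner (pinning the centre of mass and the orientation of one reference frame), so that the quotient measure has a well--defined local structure and (SR) is a statement about its non--zero--mode fluctuations. Writing $x(n)=an+\xi(n)$ as in \eqref{eq:xi_expand}, the admissibility constraints on $\Xadm$ together with \hyperref[hyp:SR]{(SR)} are then to be read as saying that $\xi$ is an auxiliary massive field (in lattice units) with exponentially decaying connected correlators.

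\textbf{Locality of the effective action.} The second step is a combined Gaussian--plus--polymer expansion of the $\xi$--integral. Using \eqref{eq:varsigma_xi} and the expansions \eqref{eq:gV_expand}, the geometry dependence of $S_{\mathrm{fields}}$ enters as polynomials with uniformly bounded coefficients on $\Xadm$ in the local differences $\varsigma_\mu\xi(n)$, coupled to bounded local $(U,\Phi)$--composites. Assuming (SR) provides uniform exponential decay of the $\xi$--cumulants, a Brydges/Kennedy--type polymer expansion produces an absolutely convergent representation
\begin{equation}
S_{\mathrm{eff}}[U,\Phi]=\sum_{X\subset\Lambda} K(X)[U,\Phi],
\qquad
\|K(X)\|\le C\,e^{-\kappa\,\mathrm{diam}(X)},
\end{equation}
with polymer activities $K(X)$ supported on finite connected subsets $X\subset\Lambda$. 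This is the precise locality statement the Symanzik program requires: every long--distance correlator of $(U,\Phi)$--observables can be reorganised as a sum of contributions from finite local stencils weighted by exponentially small tails.

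\textbf{Symmetry, power counting, and matching.} Because the unfixed DLR measure is $\SEd$--covariant (Section~\ref{sec:Ed-invariance}), gauge invariant (Section~\ref{sec:gauge-lattice-symmetries}), and invariant under the hypercubic lattice symmetries, each polymer activity $K(X)$ inherits these transformation laws. Organising the polymer sum by engineering dimension yields the Symanzik representation \eqref{eq:symanzik_general} with the $\mathcal{O}_i$ drawn from a basis of local, gauge invariant, $\SOd$--scalar composites in $(U,\Phi)$; this proves claim (i) and the no--new--directions part of (ii). The finite shift of the bare--to--renormalized map in (ii) is then the explicit contribution of the Gaussian moments of $\xi$ (together with higher polymer terms) to the coefficients of the relevant/marginal operators already present in the target theory, as illustrated by the one--loop calculation in \S\ref{subsec:phi4-one-loop}. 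Claim~(iii) follows by the usual universality argument: after tuning the finite set of relevant couplings, DLR and any standard hypercubic regulator of the same continuum theory produce identical renormalised correlators, because they are both local regulators with effective actions in the same basis of $\SOd$--scalar, gauge invariant operators.

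\textbf{Main obstacles and scope.} The technically hardest step is the rigorous implication (SR) $\Rightarrow$ locality of $S_{\mathrm{eff}}$: (SR) as stated is exponential decay of connected correlators, whereas a convergent polymer representation requires somewhat stronger small--field/large--field separation and uniform bounds on the effective potential of $\xi$ across the non--convex set $\Xadm$. A full proof would likely need to supplement $S_x$ with an explicit quadratic stiffness term of the form \eqref{eq:spring-penalty} so that the small--field sector is Gaussian--dominated, together with a large--field estimate showing that admissibility plus stiffness suppresses excursions to $\partial\Xadm$ faster than any polynomial. A second, more structural, difficulty is that ``same universality class'' in part (iii) presupposes an existing non--perturbative continuum limit, which is open in $d=4$ even for standard regulators; accordingly the statement should be read either order by order in the renormalised coupling, or as an equality of renormalised correlators at the level of matched perturbative and Symanzik expansions, rather than as a theorem about non--perturbatively constructed measures.
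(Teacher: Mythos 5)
Your proposal follows essentially the same route as the paper's own justification of Conjecture~\ref{conj:universality}: fix or quotient the global $\SEd$ zero modes, expand $x(n)=an+\xi(n)$, invoke \hyperref[hyp:SR]{(SR)} to integrate out the geometry into a local effective action, use the exact $\SEd$ covariance and gauge invariance to restrict the induced operator basis to $\SOd$--scalars, and then apply standard Symanzik power counting and RG universality for parts (i)--(iii). Note, however, that the paper deliberately states this as a \emph{conjecture} and offers only a conceptual effective--field--theory argument plus the one--loop check of \S\ref{subsec:phi4-one-loop}; it does not claim a proof. Where you go beyond the paper is in proposing a concrete mechanism for the key implication (SR) $\Rightarrow$ locality of $S_{\mathrm{eff}}$, namely a Brydges--Kennedy polymer expansion with exponentially decaying activities, and in correctly identifying the two genuine gaps: (SR) as stated (exponential clustering) is weaker than the small--field/large--field control and uniform bounds on the non--convex set $\Xadm$ that a convergent cluster expansion requires, so one would likely need to strengthen the hypothesis or add an explicit stiffness term of the form \eqref{eq:spring-penalty}; and part (iii) presupposes a nonperturbatively constructed continuum limit, which is open in $d=4$ even for rigid regulators, so the claim should be read order by order or at the level of matched Symanzik expansions. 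This assessment is consistent with the paper's own framing, and the polymer--expansion scaffolding is a useful concretisation of what an actual proof would have to deliver.
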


\noindent
In particular, for scalar $\phi^4$ theory in $d=4$, the coefficient of the one-loop $\beta$-function is unchanged
by the geometry sector: geometry fluctuations can renormalize the mapping between $\lambda_0$ and $\lambda_R$, but
do not modify the universal one-loop running of $\lambda_R$ (see \S\ref{subsec:phi4-one-loop}).

\medskip
At a conceptual level, Conjecture~\ref{conj:universality} is an application of EFT to the enlarged system
$(x,U,\Phi)$.  Exact $\SEd$ covariance of the microscopic measure implies that any effective action obtained after
integrating out $x$ is $\SOd$-covariant (and gauge invariant when applicable), yielding \textup{(i)}.  
The substantive input is the short-range geometry hypothesis~\hyperref[hyp:SR]{(SR)}:
exponential clustering in the geometry sector gives a finite correlation length in lattice units, so
integrating out $x$ at fixed cutoff produces a local effective action for $(U,\Phi)$ admitting a derivative expansion
in powers of $pa$.  Consequently the geometry sector can only renormalize the matching of operators already present
in the target theory and generate higher-dimension $\SOd$-covariant operators suppressed by powers of $a$, giving
\textup{(ii)}; standard RG universality then yields \textup{(iii)} after tuning the usual relevant parameters.
Conjecture~\ref{conj:universality} is intended as a concrete target for future work; establishing or falsifying it
beyond one loop and in $4d$ gauge theories would be particularly valuable.

\subsection{Example: one--loop $\beta$--function in scalar $\phi^{4}$ theory}
\label{subsec:phi4-one-loop}

We illustrate Conjecture~\ref{conj:universality} on a test theory where the universal
running is visible already at one loop: a real scalar $\phi^4$ model in $d=4$ with local potential
\begin{equation}
  V(\phi)=\tfrac12 m_0^2\phi^2+\tfrac{\lambda_0}{4!}\phi^4,
\end{equation}
discretised on the dynamical mesh as in Section~\ref{sec:fields-symmetries}.
On the regular embedding $x_{\rm reg}(n)=an$ this reduces to the standard nearest--neighbour
lattice discretisation.
Write
\begin{equation}
  x(n)=an+\eta(n),
\end{equation}
and assume the short-range geometry hypothesis (SR) from \S\ref{subsec:symanzik}:
connected correlators of local geometric observables built from finitely many
$e_\mu(n)$, $g_{\mu\nu}(n)$, $V(n)$ decay exponentially in lattice distance, uniformly in $a$.
Equivalently (and this is how it is used perturbatively), the $\eta$--sector has a finite
correlation length in lattice units. In physical units this means the geometry modes are
\emph{heavy} with characteristic mass scale $\sim 1/a$, so integrating them out produces a
local effective lattice theory for $\phi$ whose induced interactions are short-range and analytic
in the external momenta at scales $\mu\ll 1/a$.

As discussed previously, expanding the scalar action $S_\phi[x,\phi]$ in powers of $\nabla\eta$ produces two conceptually
different classes of geometry--matter couplings.
\begin{itemize}
\item \emph{Kinetic (metric/frame) vertices.}
Expanding $V(n)g^{\mu\nu}(n)\nabla_\mu\phi\,\nabla_\nu\phi$ generates
vertices where geometry couples to \emph{derivative bilinears} in $\phi$, schematically
\begin{equation}
  (\nabla\eta)\,(\nabla\phi)(\nabla\phi),\qquad
  |\nabla\eta|^2\,(\nabla\phi)(\nabla\phi),\ \ldots
\end{equation}
These are the vertices that contribute to wave function renormalisation and to higher-derivative
Symanzik operators.

\item \emph{Volume vertices.}
Expanding the local volume factor $V(n)$ in
\begin{equation}
  \sum_n V(n)\Bigl(\tfrac12 m_0^2\phi(n)^2+\tfrac{\lambda_0}{4!}\phi(n)^4\Bigr)
\end{equation}
produces vertices with no derivatives on the scalar legs (though every $\eta$ insertion carries at least one lattice derivative),
e.g.
\begin{equation}
  (\nabla\!\cdot\eta)\,\phi^2,\qquad
  (\nabla\!\cdot\eta)\,\phi^4,\qquad
  |\nabla\eta|^2\,\phi^2,\qquad
  |\nabla\eta|^2\,\phi^4,\ \ldots
\end{equation}
What matters for universality is that under (SR) these vertices describe couplings to a heavy,
short-range field.
\end{itemize}

\noindent
The one loop renormalisation of the quartic coupling is determined by the coefficient of the
logarithmic dependence on the renormalisation scale $\mu$ in the amputated 1PI four--point function.
This coefficient is universal: it arises from the loop--momentum window
\emph{above} $\mu$ but \emph{below} the cutoff (heuristically $\mu\ll |k|\ll 1/a$), and is
insensitive to finite, scheme--dependent redefinitions of bare parameters.

\smallskip
\noindent\emph{(A) Pure scalar bubble.}
The one--loop 1PI four--point graphs built only from the usual $\phi^4$ vertex are exactly those of the static
hypercubic lattice (the $s/t/u$ bubbles), cf.\ Fig.~\ref{f:pure-phi}.
In each channel there are \emph{two} internal scalar propagators running in the loop, hence the characteristic
integrand $(\hat k^2+m_0^2)^{-2}$. In particular, the standard bubble integral yields
\begin{equation}
  I_{\rm lat}(a,m_0)
  = \int_{B}\frac{d^4k}{(2\pi)^4}\,\frac{1}{(\hat k^2+m_0^2)^2}
  = \frac{1}{16\pi^2}\log\frac{1}{a^2m_0^2}+\text{finite},
\end{equation}
and summing the three channels produces the familiar one--loop coefficient $3/(16\pi^2)$ in the $\phi^4$ channel.
Here $B=[-\pi/a,\pi/a]^4$ is the Brillouin zone and
\begin{equation}
\hat k_\mu = \frac{2}{a}\sin\!\Big(\frac{a k_\mu}{2}\Big),
\qquad
\hat k^{2} = \sum_{\mu=1}^{4}\hat k_\mu^{2},
\end{equation}
so that the free lattice propagator is $(\hat k^{2}+m_0^{2})^{-1}$.

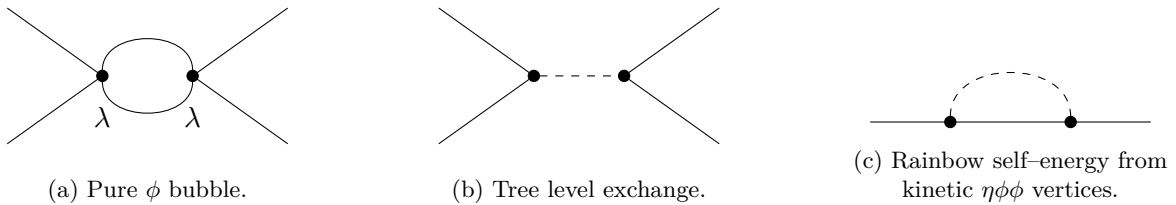
\begin{figure}[ht]
\centering
\begin{subfigure}{0.3\textwidth}\centering
\begin{tikzpicture}
\begin{feynman}
  \vertex (a1) at (-2,  1) {};
  \vertex (a2) at (-2, -1) {};
  \vertex (b1) at ( 2,  1) {};
  \vertex (b2) at ( 2, -1) {};
  \vertex [dot] (vL) at (-0.6, 0) {};
  \vertex [dot] (vR) at ( 0.6, 0) {};
  \diagram*{
    (a1) -- [plain] (vL),
    (a2) -- [plain] (vL),
    (vR) -- [plain] (b1),
    (vR) -- [plain] (b2),
    (vL) -- [plain, half left, looseness=1.2] (vR),
    (vL) -- [plain, half right, looseness=1.2] (vR),
  };
  \node at (-0.6, -0.55) {\(\lambda\)};
  \node at ( 0.6, -0.55) {\(\lambda\)};
\end{feynman}
\end{tikzpicture}
\caption{Pure \(\phi\) bubble.}\label{f:pure-phi}
\end{subfigure}\hfill
\begin{subfigure}{0.3\textwidth}\centering
\begin{tikzpicture}
\begin{feynman}
  \vertex (a1) at (-2,  1) {};
  \vertex (a2) at (-2, -1) {};
  \vertex (b1) at ( 2,  1) {};
  \vertex (b2) at ( 2, -1) {};
  \vertex [dot] (vL) at (-0.6, 0) {};
  \vertex [dot] (vR) at ( 0.6, 0) {};
  \diagram*{
    (a1) -- [plain] (vL),
    (a2) -- [plain] (vL),
    (vR) -- [plain] (b1),
    (vR) -- [plain] (b2),
    (vL) -- [dashed] (vR),
  };
\end{feynman}
\end{tikzpicture}
\caption{Tree level exchange.}\label{f:kinetic-4}
\end{subfigure}\hfill
\begin{subfigure}{0.3\textwidth}\centering
\begin{tikzpicture}
\begin{feynman}
  \vertex (i) at (-2,0) {};
  \vertex (o) at ( 2,0) {};
  \vertex [dot] (v1) at (-0.8,0) {};
  \vertex [dot] (v2) at ( 0.8,0) {};

  \diagram*{
    (i) -- [plain] (v1) -- [plain] (v2) -- [plain] (o),
    (v1) -- [dashed, half left, looseness=1.25] (v2),
  };
\end{feynman}
\end{tikzpicture}
\caption{Rainbow self--energy from kinetic $\eta\phi\phi$ vertices.}\label{f:rainbow}
\end{subfigure}
\caption{Building blocks in the perturbative universality argument. Solid: $\phi$. Dashed: $\eta$.}
\label{f:1}
\end{figure}

\smallskip
\noindent\emph{(B) Diagrams with internal geometry lines.}
At one loop, any 1PI four--point contribution that involves at least one internal $\eta$ propagator is controlled by (SR).
Indeed, (SR) implies the $\eta$ correlator decays exponentially in lattice units, so the geometry is massive in lattice units.
Consequently, at external scales $\mu\ll 1/a$, graphs with internal $\eta$ lines generate \emph{local} contributions
analytic in the external momenta:
\begin{itemize}
\item \emph{Kinetic vertices.}
Kinetic vertices necessarily carry external momenta: already at tree level, two $\eta\phi\phi$
vertices connected by a single internal $\eta$ line produce a momentum--dependent 4--point
amplitude (Fig.~\ref{f:kinetic-4}), whose low--energy expansion generates higher--derivative
operators in the Symanzik action, not a momentum--independent shift of the $\phi^4$ coupling.
The corresponding 2--point correction is the rainbow self--energy in Fig.~\ref{f:rainbow}.

\item \emph{Volume vertices.}
These can and do renormalise the \emph{bare--to--renormalised map}: integrating out short--range volume fluctuations shifts
the coefficients of operators already present in the target theory (mass term, kinetic normalisation, and the quartic coupling)
by finite, scheme--dependent amounts (Fig.~\ref{f:xi-loop}). In addition, exchanging a short--range $\eta$ line between two
local volume insertions produces higher multi--field operators; for example Fig.~\ref{f:xi-exchange} corresponds to an induced
$\phi^8$ interaction, which is irrelevant in $d=4$ and does not affect the running of $\lambda$.
Neither mechanism can generate a new logarithmic $\mu$--dependence in the $\phi^4$ channel at one loop.
\end{itemize}

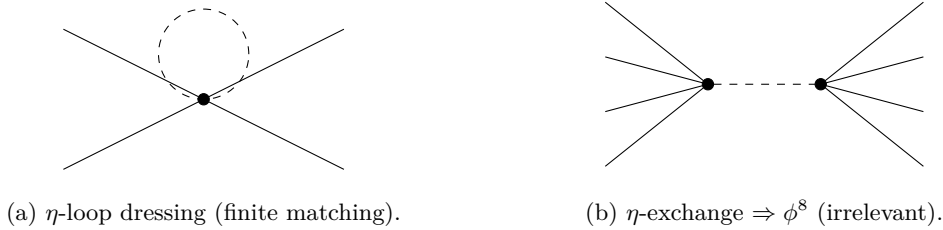
\begin{figure}[t]
\centering
\begin{subfigure}{0.4\textwidth}\centering
\begin{tikzpicture}
\begin{feynman}
  \vertex (a1) at (-2,  1) {};
  \vertex (a2) at (-2, -1) {};
  \vertex (b1) at ( 2,  1) {};
  \vertex (b2) at ( 2, -1) {};
  \vertex [dot] (v) at (0,0) {};

  \diagram*{
    (a1) -- [plain] (v),
    (a2) -- [plain] (v),
    (v)  -- [plain] (b1),
    (v)  -- [plain] (b2),
  };

  \draw[dashed] ($(v)+(0,0.6)$) circle [radius=0.6];
\end{feynman}
\end{tikzpicture}
\caption{\(\eta\)-loop dressing (finite matching).}\label{f:xi-loop}
\end{subfigure}
\qquad
\begin{subfigure}{0.4\textwidth}\centering
\begin{tikzpicture}
\begin{feynman}
  \vertex [dot] (vL) at (-1.5,0) {};
  \vertex (a1) at (-3,  1.2) {};
  \vertex (a2) at (-3,  0.4) {};
  \vertex (a3) at (-3, -0.4) {};
  \vertex (a4) at (-3, -1.2) {};

  \vertex [dot] (vR) at ( 0,0) {};
  \vertex (b1) at ( 1.5,  1.2) {};
  \vertex (b2) at ( 1.5,  0.4) {};
  \vertex (b3) at ( 1.5, -0.4) {};
  \vertex (b4) at ( 1.5, -1.2) {};

  \diagram*{
    (a1) -- [plain] (vL),
    (a2) -- [plain] (vL),
    (a3) -- [plain] (vL),
    (a4) -- [plain] (vL),

    (b1) -- [plain] (vR),
    (b2) -- [plain] (vR),
    (b3) -- [plain] (vR),
    (b4) -- [plain] (vR),

    (vL) -- [dashed] (vR),
  };
\end{feynman}
\end{tikzpicture}
\caption{\(\eta\)-exchange \(\Rightarrow \phi^8\) (irrelevant).}\label{f:xi-exchange}
\end{subfigure}
\caption{Further one--loop graphs with internal geometry lines (line conventions as in Fig.~\ref{f:1}).}
\end{figure}

Assume (SR), so the $\eta$--propagator has a mass scale of order one in lattice units.
At external scales $\mu\ll 1/a$, any 1PI graph with at least one internal $\eta$ line produces a local
contribution analytic in the external momenta. Consequently, kinetic vertices contribute only to
wave--function renormalisation and higher--derivative operators, while volume vertices can shift the local
$\phi^4$ coefficient by a finite (scheme--dependent) amount and can induce irrelevant multi--field operators
(e.g.\ Fig.~\ref{f:xi-exchange}), but cannot generate an additional $\log\mu$ term in the $\phi^4$ channel at one loop.
Therefore the coefficient of the logarithmic divergence in the $\phi^4$ channel is entirely due to the pure scalar bubbles
and equals $3/(16\pi^2)$, as stated in \eqref{eq:beta-cont}; after the usual field renormalisation the renormalised quartic
coupling $\lambda(\mu)$ obeys the same one--loop renormalisation--group equation as in the continuum and in the static--lattice
regularisation,
\begin{equation}\label{eq:beta-cont}
  \mu\,\frac{d\lambda}{d\mu} \;=\; \beta(\lambda) \;=\;
  \frac{3}{16\pi^{2}}\lambda^{2} + O(\lambda^{3}),
\end{equation}
while the dynamical geometry affects only finite matching and higher--dimension operators in the Symanzik expansion. This
supports Conjecture~\ref{conj:universality}.

\section{Numerical experiments}
\label{sec:numerics}

This section describes the Monte Carlo setup and numerical diagnostics used to test the
dynamical--lattice regulator (DLR) in a two--dimensional scalar theory.  The goal is
proof--of--concept rather than high--precision spectroscopy or a systematic exploration of
the geometry phase diagram.  We check (i) algorithmic viability under standard local
updates, (ii) that the geometry sector remains massive near matter criticality, as required
by the short--range geometry hypothesis~\hyperref[hyp:SR]{(SR)}, (iii) suppression of
direction--dependent cutoff artefacts at finite lattice spacing, and (iv) preservation of
long--distance universality in a standard finite--size scaling test.

Accordingly, the numerical role of \hyperref[hyp:SR]{(SR)} in this section is diagnostic:
even when the scalar field is tuned close to criticality, the geometry should not develop
long-range modes.  The simulations below test this requirement in a simple \(d=2\) scalar
setting, together with the local admissibility framework and the Metropolis update scheme.
They should not be read as evidence that the same simple angle-spring geometry action is
sufficient to realize \hyperref[hyp:SR]{(SR)} in \(d=4\), where a separate scan of the
geometry phase diagram would be required.

Throughout, \emph{baseline} refers to the fixed regular square lattice, while \emph{dynamical}
refers to the coupled geometry--scalar system in which the embedding
\(x:\Lambda\to\mathbb{R}^2\) is updated alongside the matter field.

\subsection{Simulation protocol}
\label{subsec:protocol}

We work on a periodic $L\times L$ abstract lattice
$\Lambda=\{0,\dots,L-1\}^2$ with lattice spacing $a=1$ unless stated
otherwise.  The dynamical degrees of freedom are an embedding
$x(n)\in\mathbb{R}^2$ and a real scalar field $\phi(n)\in\mathbb{R}$ at
each $n\in\Lambda$.  The joint Boltzmann weight is
\[
  Z^{-1}\exp\bigl(-S_x[x] - S_\phi[x,\phi]\bigr),
\qquad
  Dx\,D\phi=\prod_{n\in\Lambda}\mathrm{d}^2x(n)\,
  \mathrm{d}\phi(n),
\]
with $x$ restricted to the admissible set $\Xadm$ of \S\ref{sec:local-geometry}.  
The geometry action $S_x$ is the local, $E(2)$--invariant action of \S\ref{sec:geometry-measure}, while the scalar
action $S_\phi$ is taken to be the \emph{prototype} action $S_\phi=S_{\phi}^\circ$ of \S\ref{sec:full-action} (i.e.\ before $\theta$--averaging), evaluated on the same embedded mesh, with the local potential 
\begin{equation}
W_{\mathrm{loc}}(\phi)=\frac12\mu^2\phi^2 + \lambda\phi^4 .
\end{equation}
Note that the theory is parametrized by the bare couplings $(\mu^2,\lambda)$ in lattice units.
Our simulation convention for the quartic term is $\lambda\,\phi^4$ (no $1/4!$ factor).
Thus, relative to the continuum-style normalization
$W_{\mathrm{loc}}(\phi)=\tfrac12 m_0^2\phi^2+\tfrac{\lambda_0}{4!}\phi^4$ used before,
we have
\[
m_0^2=\mu^2,\qquad \lambda_0 = 24\,\lambda.
\]
We set \(\lambda=1\) and vary \(\mu^2\) in the finite-size scaling tests and in the
near-critical correlation-length diagnostics.  The illustrative mesh and one-point
geometry histograms use the auxiliary parameter choice \(\lambda=0.1\), \(\mu^2=1\).

\medskip
\noindent
\emph{Periodic geometry.}
The physical domain is a flat torus with period vectors
$P_0=(La,0)$ and $P_1=(0,La)$.  We store $x(n)$ in a fixed fundamental
domain $[0,La)\times[0,La)$ and use a minimal--image convention for
displacements: whenever a neighbour crosses the boundary we shift by
integer multiples of $P_0,P_1$ to obtain the shortest representative.
All geometric quantities are computed cellwise from the four vertices of a
plaquette, using this periodic reconstruction.

This convention fixes the fundamental cycles of the physical torus in finite volume.  All diagnostics below
are phrased in terms of local twisting/mixing and connected correlators, hence are insensitive to the
particular representative chosen for the global $\mathrm{SE}(2)$ zero modes.

\medskip
\noindent
\emph{Local admissibility.}
Each cell is required to be convex and shape--regular.  Concretely, for the
numerical runs we impose the fixed local bounds
\[
  A \ge A_{\min}=0.10,\qquad
  \ell_\mu \le \ell_{\max}=4.0,\qquad
  \mathrm{aspect}\le R_{\max}=4.0,\qquad
  30^\circ\le\vartheta\le150^\circ,
\]
where $\ell_\mu$ are the two edge lengths, $A$ is the cell area,
$\vartheta$ is the corner angle between the two edges, and
$\mathrm{aspect}=\max(\ell_1,\ell_2)/\min(\ell_1,\ell_2)$.
The specific numerical values are not tuned: they are conservative bounds chosen to exclude
near--degenerate cells while allowing broad fluctuations away from the regular embedding.
Moves that violate any bound on any affected cell are rejected
immediately.  In practice these hard constraints prevent near--degenerate
cells and keep the simulation in the same connected component as the
regular embedding.

\medskip
\noindent
\emph{Updates.}
Both geometry and scalar field sweeps use local Metropolis proposals at uniformly chosen sites.
\begin{itemize}
\item \emph{Geometry move:} pick $n\in\Lambda$ and propose
  $x'(n)=x(n)+\varepsilon_x$ with $\varepsilon_x\sim\mathrm{Unif}
  ([-\Delta_x,\Delta_x]^2)$, wrapped back into the fundamental domain.  The
  move changes only the four cells adjacent to $n$; we re-evaluate their
  admissibility and their local contributions to $S_x$ and $S_\phi$, and
  accept with probability
  $\min\{1,\exp(-\Delta S_x-\Delta S_\phi)\}$.
\item \emph{Scalar move:} pick $n\in\Lambda$ and propose
  $\phi'(n)=\phi(n)+\varepsilon_\phi$ with
  $\varepsilon_\phi\sim\mathrm{Unif}[-\Delta_\phi,\Delta_\phi]$.
  Only the four adjacent cells change; we accept with probability
  $\min\{1,\exp(-\Delta S_\phi)\}$.
\end{itemize}
We distinguish a \emph{geometry sweep} and a \emph{scalar sweep}, each consisting of $L^2$
single--site proposals of the corresponding type.  One Monte Carlo \emph{cycle} comprises one
geometry sweep followed by one scalar sweep.
Proposal scales are tuned to give $O(1)$ acceptance rates; in the runs shown
the geometry acceptance is typically in the few $\times 10\%$ to $\sim 80\%$ range,
depending on $(L,\mu^2)$.

\medskip
\noindent
\emph{Geometry action parameters.}
For the numerical experiments we use a quadratic ``spring'' that penalizes only corner-angle distortions:
\[
  S_x[x]=\sum_{n\in\Lambda} \frac{k}{2}(\vartheta(n)-\vartheta_{\rm ref})^2.
  \qquad (k=2.5) .
\]
Edge lengths, areas, and aspect ratios are controlled primarily by the hard admissibility constraints.

This minimal choice is designed to keep the geometry disordered at the level of local orientations
while the hard constraints control degeneracies; it is \emph{not} intended as an optimized $S_x$ for $d=4$.

\medskip
\noindent
\emph{Run lengths.}
For each parameter point we thermalise for $N_{\rm therm}$ sweeps (typically $5\times10^3$),
then record $N_{\rm prod}$ sweeps (typically $2\times10^4$), storing measurements every $k$ sweeps
(typically $k=10$).  Each experiment is repeated over a few dozen independent random seeds, and
the same protocol is used uniformly across the $(L,\mu^2)$ scan underlying the finite--size scaling plots.

\subsection{Measured observables}
\label{subsec:observables}

We record both ``health'' diagnostics to monitor stability and mixing, and
physics observables to compare baseline vs dynamical and to probe
universality and rotational artefacts.

\medskip
\noindent
\emph{Basic diagnostics.}
On each measurement we record $S_x$, $S_\phi$, the minimum/maximum cell
area, and the maximum aspect ratio over the lattice.  For the scalar we
record the moments $\langle\phi^2\rangle=L^{-2}\sum_n\phi(n)^2$ and
$\langle\phi^4\rangle=L^{-2}\sum_n\phi(n)^4$.

\medskip
\noindent
\emph{Geometry short-range diagnostics.}
A key requirement is that the geometry acts as a regulator rather than an
additional critical field (``SR hypothesis'').  To test this we measure a connected
volume--volume correlator built from the cell areas $A(n)$.
On each stored configuration we form $\delta A(n)=A(n)-\bar A$ with
$\bar A=|\Lambda|^{-1}\sum_m A(m)$, and define the translation--averaged correlator
\[
  C_V(m) \;=\; \frac{1}{|\Lambda|}\sum_{n\in\Lambda}\delta A(n)\,\delta A(n+m),
\]
where addition is modulo $L$ in each direction.  The reported $C_V(m)$ is then
the Monte--Carlo average of this quantity.
We radially bin by the periodic graph ($\ell^1$) distance 
\[
r(m)=\min(i,L-i)+\min(j,L-j)
\] 
for $m=(i,j)$ and report
\[
  C_V(r) \;=\; \frac{1}{N_r}\sum_{m:\,r(m)=r} C_V(m),
  \qquad N_r=\#\{m:\,r(m)=r\}.
\]
In plots we show $|C_V(r)|$ on a log scale and extract an effective decay
length from an exponential fit over short to intermediate $r$.

\medskip
\noindent
\emph{Short--distance rotational probe.}
We measure an angle--binned finite--difference gradient observable:
for a fixed list of short offsets $\Delta$ (nearest neighbours, diagonals,
and a few second neighbours), we compute the minimal--image displacement
$v$, its length $\ell=|v|$, and accumulate $(\Delta\phi/\ell)^2$ into
$N_\theta=18$ bins in $[0,\pi)$ by the orientation of $v$ (with $v\sim-v$).
From the resulting profile $H(\theta)=\langle(\Delta\phi/\ell)^2\rangle(\theta)$
we form the anisotropy measures
\[
  A_{\rm rms}=
  \frac{\sqrt{\langle(H(\theta)-\bar H)^2\rangle_\theta}}{\bar H},
  \qquad
  A_4=\frac{\sqrt{C_4^2+S_4^2}}{\bar H},
\]
with $\bar H=\langle H(\theta)\rangle_\theta$,
$C_4=\langle H(\theta)\cos4\theta\rangle_\theta$,
$S_4=\langle H(\theta)\sin4\theta\rangle_\theta$.

\medskip
\noindent
\emph{Two--point functions and directional splitting.}
We also measure a radial two--point function $G(r)$ by choosing
$N_{\rm src}$ random sources per measurement and correlating with all sites
using minimal--image physical distances $r=|x(n)-x(m)|$, binned into $N_r$
radial bins.  To sharpen rotational artefact detection, we additionally
measure direction--restricted correlators $G_{\rm axis}(r)$ and
$G_{\rm diag}(r)$ from the same source--sink pairs by selecting
displacements within $\pm15^\circ$ of the coordinate axes or diagonals.
We plot both the correlators and their difference
$\Delta G(r)=G_{\rm diag}(r)-G_{\rm axis}(r)$, which isolates the residual
axial/diagonal splitting.

\medskip
\noindent
\emph{Universality observables.}
To locate the critical point and check the universality class we compute:
(i) the Binder cumulant
$U_4=1-\langle M^4\rangle/(3\langle M^2\rangle^2)$ with
$M=\sum_n\phi(n)$, where $\langle\cdot\rangle$ denotes the Monte Carlo average at fixed bare parameters;
and
(ii) the second--moment correlation length $\xi$ from standard structure
factor ratios, plotted as $\xi/L$.
(Implementation details of the $\xi$ estimator are standard and follow the
usual periodic--torus definitions.)

\subsection{Geometry sector diagnostics}
\label{subsec:geom_diagnostics}

Figure~\ref{fig:typ-mesh} shows a representative embedded mesh from the
dynamical ensemble.  The geometry is visibly fluctuating, but the local cells
remain well inside the admissible region.

We first examine one-point diagnostics of the mesh geometry.  Figure~\ref{fig:hist_angle}
shows the distribution of corner angles \(\theta\), Fig.~\ref{fig:hist_edge} the
edge lengths \(\ell\), Fig.~\ref{fig:hist_area} the cell areas \(A\),
Fig.~\ref{fig:hist_aspect} the aspect ratios \(r\), and Fig.~\ref{fig:hist_orient}
the physical edge orientations.  In lattice units we find, for example,
\[
  \langle \ell \rangle = 1.10 \pm 0.48, \qquad
  \langle \theta \rangle = (86.7 \pm 27.1)^\circ, \qquad
  \langle r \rangle \approx 1.78 .
\]
The distributions are broad, so the mesh is not frozen near the square embedding,
but they show no visible pile-up at the admissibility cutoffs.  The orientation
histogram retains peaks near the coordinate axes, indicating residual microscopic
memory of the underlying hypercubic stencil.  This does not by itself rule out a
useful twisting regime: what matters for the short-range geometry hypothesis is
whether local geometric observables decorrelate on \(O(1)\) lattice scales.

We therefore measure connected geometry correlators.  As scalar diagnostics we use
(i) a displacement correlator based on \(u(n)=x(n)-an\), with the configurationwise
mean subtracted to remove the translation zero mode, and (ii) the connected
volume--volume correlator \(C_V(r)\) defined in \S\ref{subsec:observables}, binned
by graph distance.  Both decay rapidly and are consistent with short-range geometry
fluctuations; see Fig.~\ref{fig:dis_corr}--\ref{fig:vol_corr}.

For \(C_V(r)\) we plot \(|C_V(r)|\) on a log scale and extract an effective decay
length \(\ell_{\rm geom}\) from exponential fits over short to intermediate
separations.  These fitted values should be interpreted as effective short-distance
decay lengths, not as finite-size scaling estimates of a precisely defined mass gap.
Using the combined dataset and a uniform fit prescription over the plotted
short-distance range, we find no systematic growth of \(\ell_{\rm geom}\) with
\(L\).  For example, at \(\mu^2=-1.40\),
\[
  \ell_{\rm geom}\approx 1.57,\ 1.07,\ 0.56,\ 0.54
  \qquad (L=16,24,32,48),
\]
while at the nearby value \(\mu^2=-1.36\),
\[
  \ell_{\rm geom}\approx 1.39,\ 1.19,\ 0.95,\ 0.53
  \qquad (L=16,24,32,48).
\]
The individual numbers vary with the fitting window and threshold, as expected
for short-distance fits of noisy connected correlators, but the robust conclusion
is that the geometry correlations remain \(O(1)\) in lattice units over the
volumes tested.

Finally, to test local frame twisting directly, we measure the \(SO(2)\)-connection
introduced in Section~\ref{sec:Ed-invariance}.  Writing the polar factor of the
local frame as \(Q(n)=R_{\alpha(n)}\), define
\[
  R_\mu(n)=Q(n)^{-1}Q(n+\hat\mu),\qquad
  \operatorname{tr}R_\mu(n)
  =2\cos\bigl(\alpha(n+\hat\mu)-\alpha(n)\bigr).
\]
Figure~\ref{fig:twist} shows the connected correlator \(C_R(r)\) of
\(\operatorname{tr}R_\mu\), normalized by \(C_R(0)\), at the representative benchmark point
\(L=48\), \(\mu^2=-1.36\).  The correlator drops below \(10\%\) of its
zero-distance value by graph distance \(r=2\) and is consistent with the noise
floor thereafter.  Thus the orientation peaks in Fig.~\ref{fig:hist_orient} should
not be interpreted as evidence for a rigid or slowly varying global orientation.
They indicate residual local memory of the hypercubic stencil, while the frame
connection itself decorrelates within a few lattice spacings.

\subsection{Thermalisation and stability}
\label{subsec:stability}

Figures~\ref{fig:ts_geom_min_area} and \ref{fig:ts_pair} show time-series
diagnostics for representative baseline and dynamical runs.
After the thermalisation window the actions and low moments fluctuate around
stationary means with no visible long--term drift.
Figure~\ref{fig:ts_geom_min_area} records the evolution of the
minimum cell area and the maximum aspect ratio over the lattice; both remain
stable throughout the production window, with no signs of slow drift or
metastability, indicating that the geometry explores the admissible region
rather than getting trapped near a boundary.

\subsection{Rotational artefacts: short and intermediate scales}
\label{subsec:rot_aniso}

\medskip
\noindent
\emph{Angle--binned gradients.}
On the fixed square lattice $H(\theta)$ exhibits a clear fourfold pattern,
while on the dynamical ensemble the modulation is strongly reduced
(Fig.~\ref{fig:anisotropy_profile}).  Table~\ref{tab:anisotropy-summary}
summarises this via $A_{\rm rms}$ and $A_4$.
Quantitatively, $A_{\rm rms}$ drops from $0.42$ to $0.12$ (a factor $\approx 3.5$), while $A_4$ drops from
$0.17$ to $0.08$ (a factor $\approx 2.2$), indicating substantial suppression of the dominant fourfold
harmonic in this short--distance diagnostic.

\begin{table}[ht]
  \centering
  \begin{tabular}{lcc}
    \hline
    Regulator & $A_{\mathrm{rms}}$ & $A_4$ \\
    \hline
    Baseline  & $0.4221(1)$ & $0.1729(1)$ \\
    Dynamical & $0.1207(3)$ & $0.0787(2)$ \\
    \hline
  \end{tabular}
  \caption{Anisotropy measures from the angle--binned gradient observable ($\mu^2=-1.34$, $L=48$).}
  \label{tab:anisotropy-summary}
\end{table}

\medskip
\noindent
\emph{Directional two--point functions.}
As a longer--distance probe we compare axis--like and diagonal--like
two--point functions.  The difference
$\Delta G(r)=G_{\rm diag}(r)-G_{\rm axis}(r)$ is a compact diagnostic of
residual directional splitting.  Figure~\ref{fig:corr_directional_diff}
shows that the baseline regulator exhibits a modest but systematic
non--zero signal at short to intermediate distances, whereas in the
dynamical ensemble $\Delta G(r)$ is consistent with zero over the range
where $G(r)$ is still clearly above the noise floor.

For completeness we also show the underlying correlators
$G_{\rm axis}(r)$ and $G_{\rm diag}(r)$ in Fig.~\ref{fig:corr_directional};
the difference plot above contains the main rotational--symmetry
information in a more sensitive form.

\subsection{Universality and critical scaling}
\label{subsec:universality}

Finally, we check that coupling the scalar theory to the DLR geometry does not change its long--distance
universality class.  We simulate $L\in\{16,24,32,48\}$ and scan the bare mass parameter $\mu^2$ at fixed
quartic coupling (here $\lambda=1$) using the protocol of Section~\ref{subsec:protocol}.
Figure~\ref{fig:universality} shows the Binder cumulant and $\xi/L$ as functions of $\mu^2$ for the dynamical
ensemble.  Both observables exhibit a common crossing region near $\mu_c^2\simeq -1.40$, as expected for a
second--order transition.  The Binder value at the crossing is consistent with the 2D Ising universality
class; since the baseline square--lattice $\phi^4$ theory is known to lie in the same class, this supports
the conclusion that the coupled geometry does not modify the long--distance universality in this
two--dimensional test.

We estimate integrated autocorrelation times for $M^2$ and observe the expected critical slowing down with
$L$; at $\mu^2\approx -1.40$ we find $\tau_{M^2}\approx (1.4,\,4.0,\,5.2,\,8.5)\times 10^2$ sweeps for
$L=16,24,32,48$.  Accordingly, error bars are obtained with blocking/binning using block sizes
$\gg \tau_{M^2}$ and by averaging over multiple independent seeds.  As a basic check on the geometry sector,
we also monitored autocorrelations of simple global geometry summaries (e.g.\ $\sum_n A(n)$ and
$\sum_n S_x(n)$) and found no anomalously slow mode in the runs shown.

\clearpage
\begin{landscape}
\begin{figure}[p]
\centering
\setlength{\tabcolsep}{10pt}
\renewcommand{\arraystretch}{1.0}

\begin{tabular}{@{}p{0.3\linewidth}p{0.3\linewidth}p{0.3\linewidth}@{}}

\subcaptionbox{Typical embedded mesh.\label{fig:typ-mesh}}{%
  \includegraphics[width=\linewidth,height=0.27\textheight,keepaspectratio]{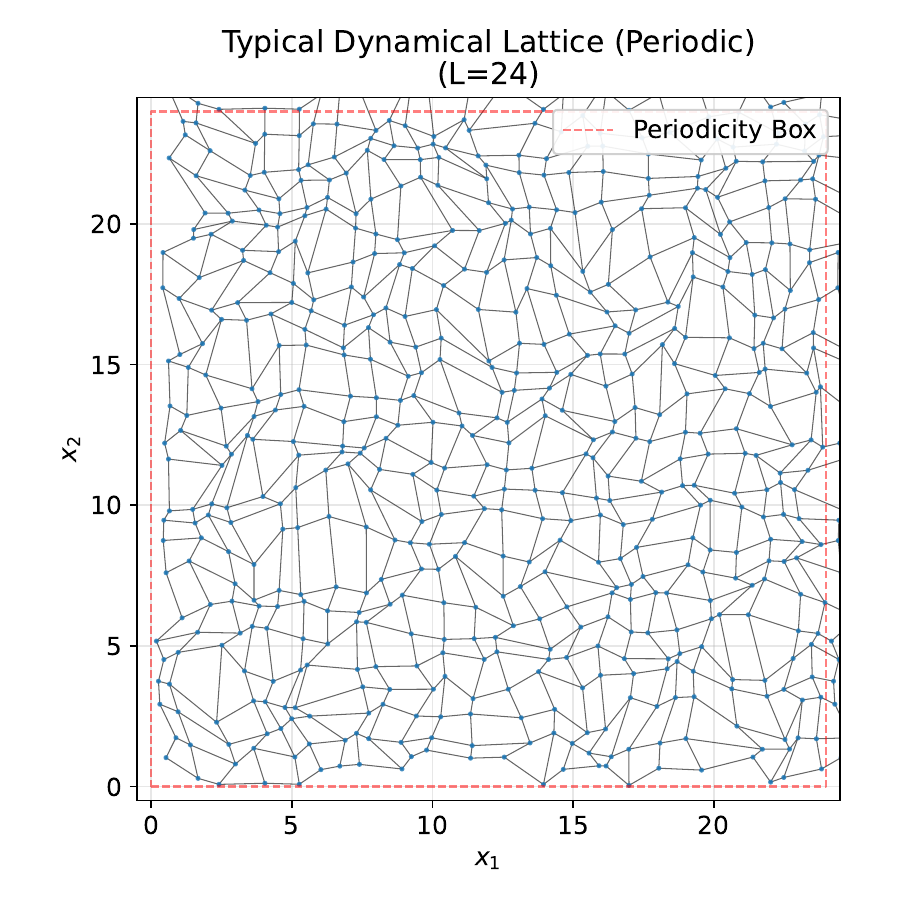}} &
\subcaptionbox{Corner angle distribution.\label{fig:hist_angle}}{%
  \includegraphics[width=\linewidth,height=0.27\textheight,keepaspectratio]{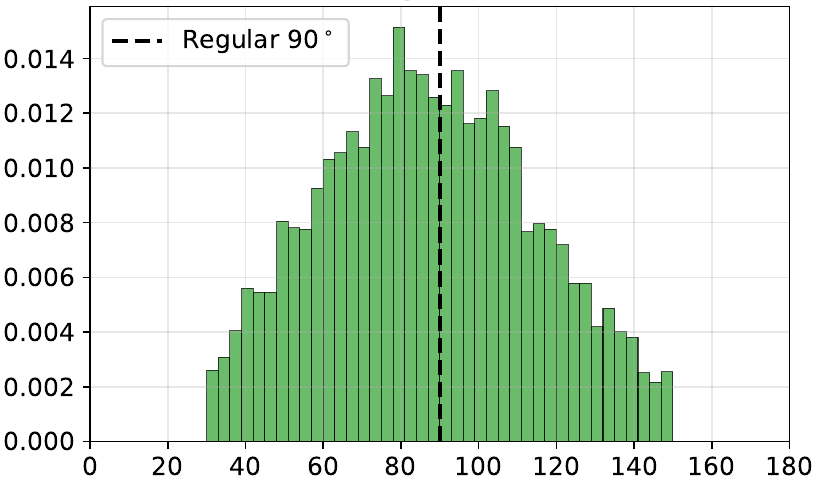}} &
\subcaptionbox{Edge-length distribution.\label{fig:hist_edge}}{%
  \includegraphics[width=\linewidth,height=0.27\textheight,keepaspectratio]{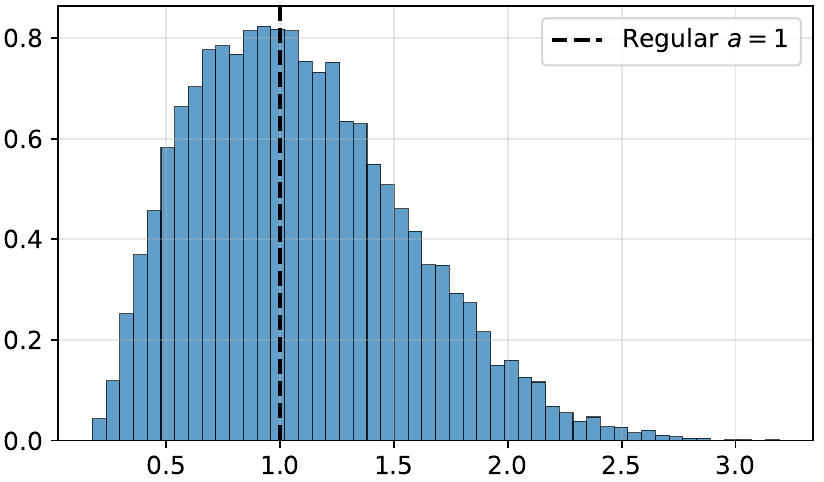}} \\[8mm]

\subcaptionbox{Cell-area distribution.\label{fig:hist_area}}{%
  \includegraphics[width=\linewidth,height=0.27\textheight,keepaspectratio]{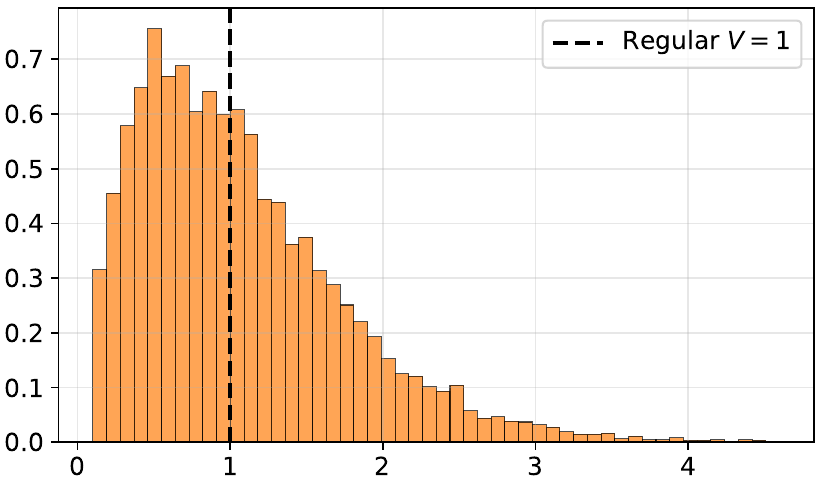}} &
\subcaptionbox{Aspect-ratio distribution.\label{fig:hist_aspect}}{%
  \includegraphics[width=\linewidth,height=0.27\textheight,keepaspectratio]{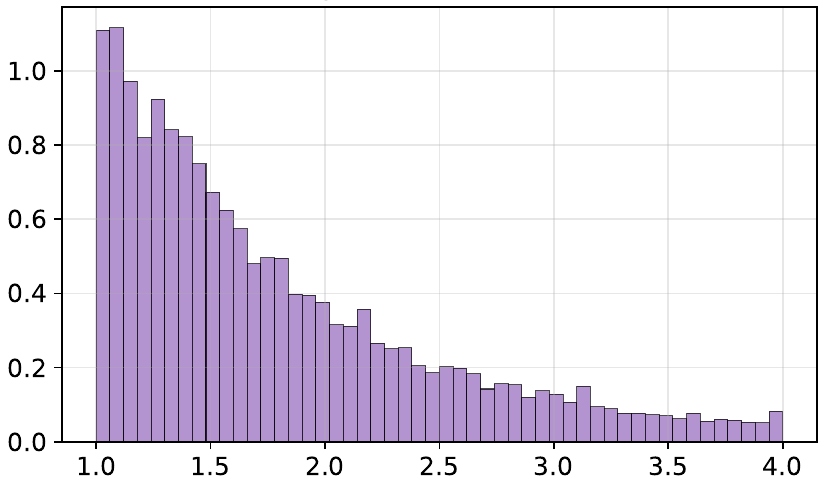}} &
\subcaptionbox{Edge orientations.\label{fig:hist_orient}}{%
  \includegraphics[width=\linewidth,height=0.27\textheight,keepaspectratio]{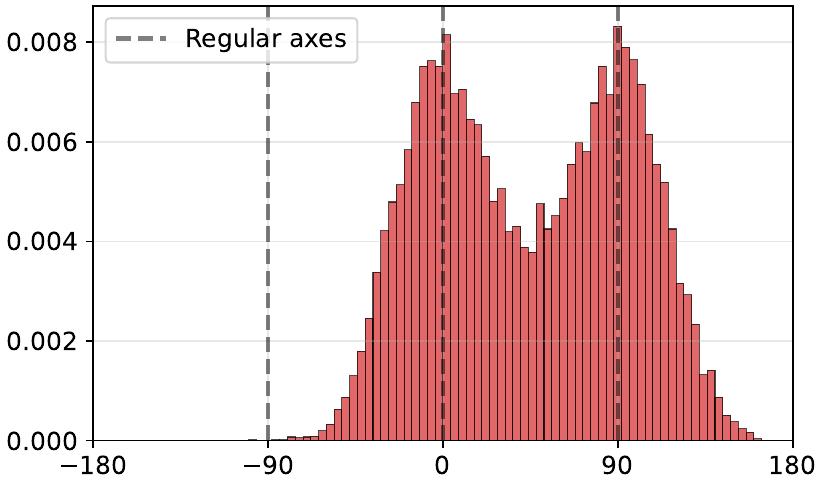}} \\[8mm]

\subcaptionbox{Displacement correlator.\label{fig:dis_corr}}{%
  \includegraphics[width=\linewidth,height=0.27\textheight,keepaspectratio]{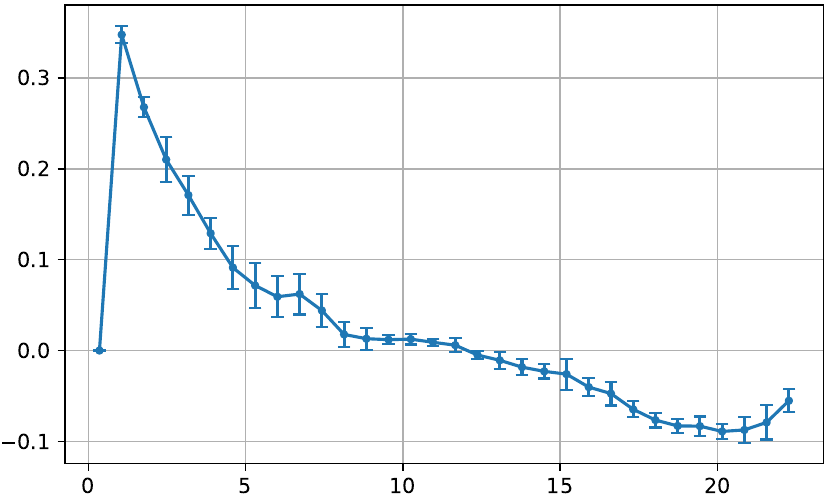}} &
\subcaptionbox{Volume correlator.\label{fig:vol_corr}}{%
  \includegraphics[width=\linewidth,height=0.27\textheight,keepaspectratio]{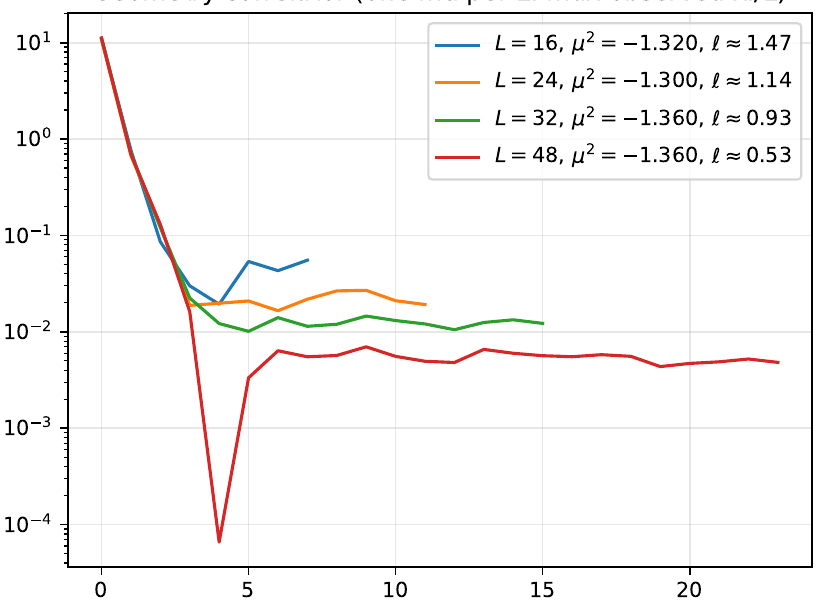}} &
\subcaptionbox{Twisting diagnostic.\label{fig:twist}}{%
  \includegraphics[width=\linewidth,height=0.27\textheight,keepaspectratio]{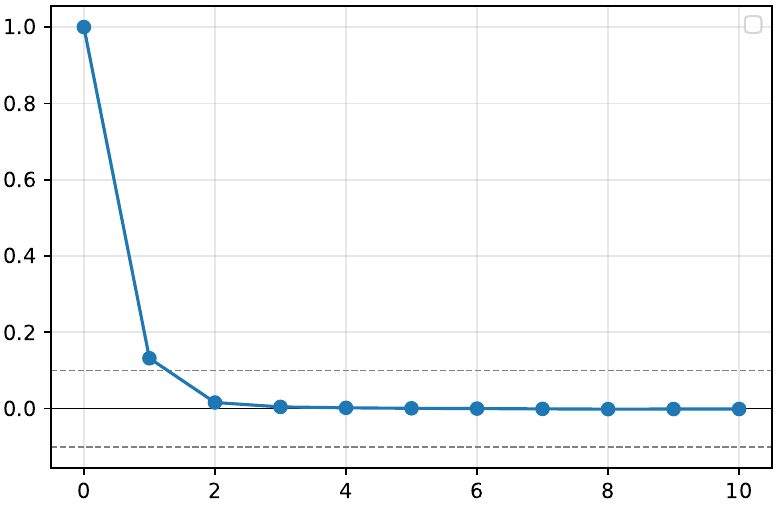}}

\end{tabular}

\caption{Geometry diagnostics for the dynamical ensemble (all panels at identical scale).}
\label{fig:plate_geometry}
\end{figure}
\end{landscape}
\clearpage

\begin{landscape}
\begin{figure}[p]
\centering
\setlength{\tabcolsep}{10pt}
\renewcommand{\arraystretch}{1.0}

\begin{tabular}{@{}p{0.35\linewidth}p{0.64\linewidth}@{}}

\subcaptionbox{Geometry stability: min area / max aspect.\label{fig:ts_geom_min_area}}{%
  \includegraphics[width=\linewidth,height=0.27\textheight,keepaspectratio]{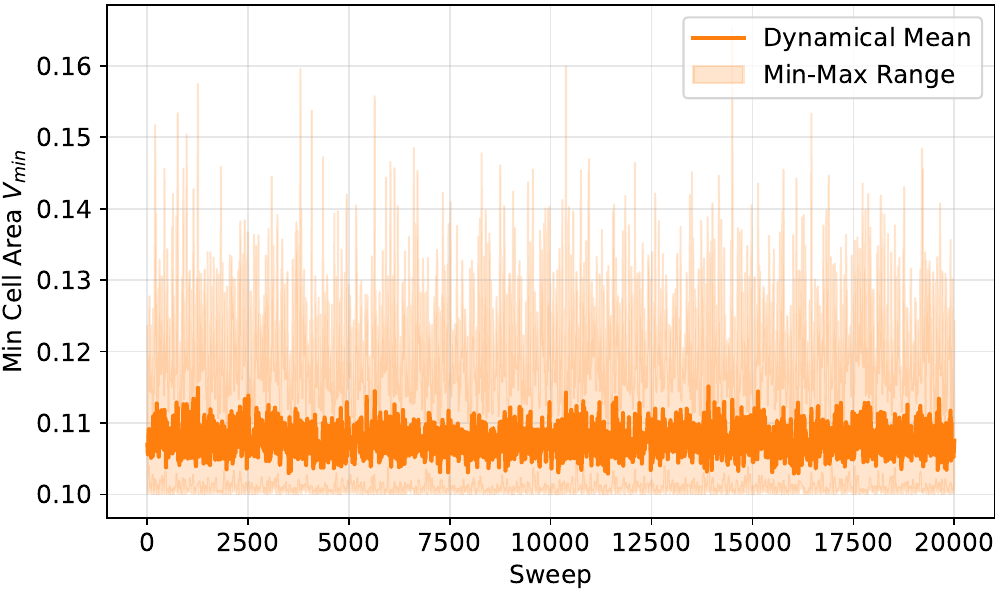}} &
\subcaptionbox{Time series: actions and scalar moments.\label{fig:ts_pair}}{%
  \includegraphics[width=0.49\linewidth,height=0.27\textheight,keepaspectratio]{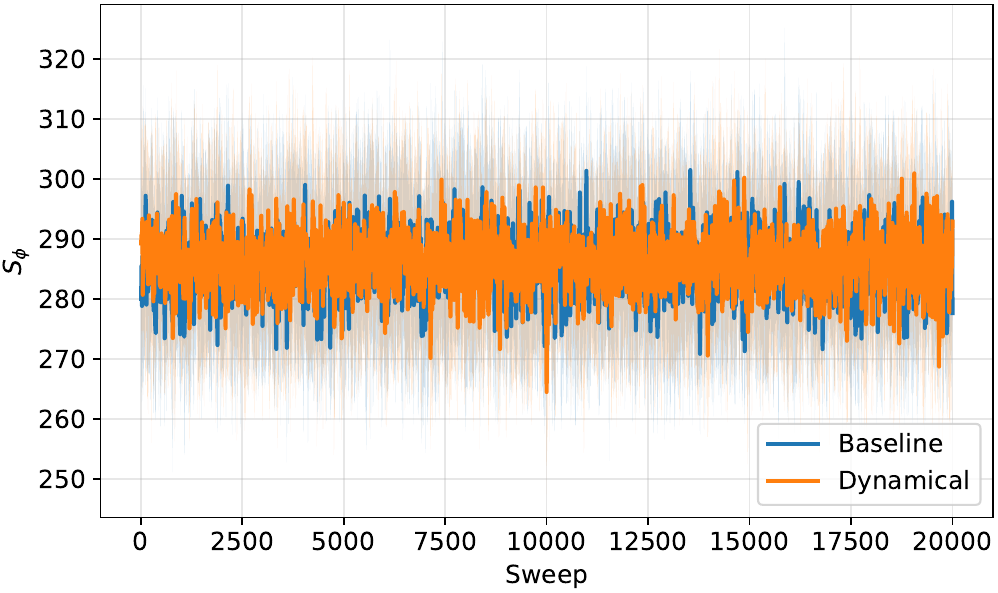}\hfill
  \includegraphics[width=0.49\linewidth,height=0.27\textheight,keepaspectratio]{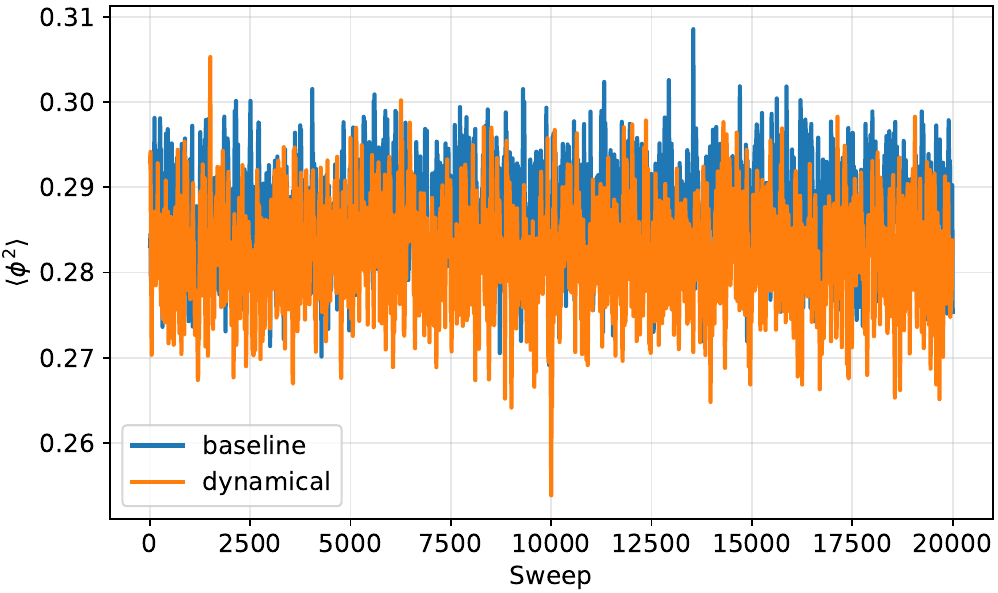}} \\[8mm]

\subcaptionbox{Directional splitting $\Delta G(r)$.\label{fig:corr_directional_diff}}{%
  \includegraphics[width=\linewidth,height=0.27\textheight,keepaspectratio]{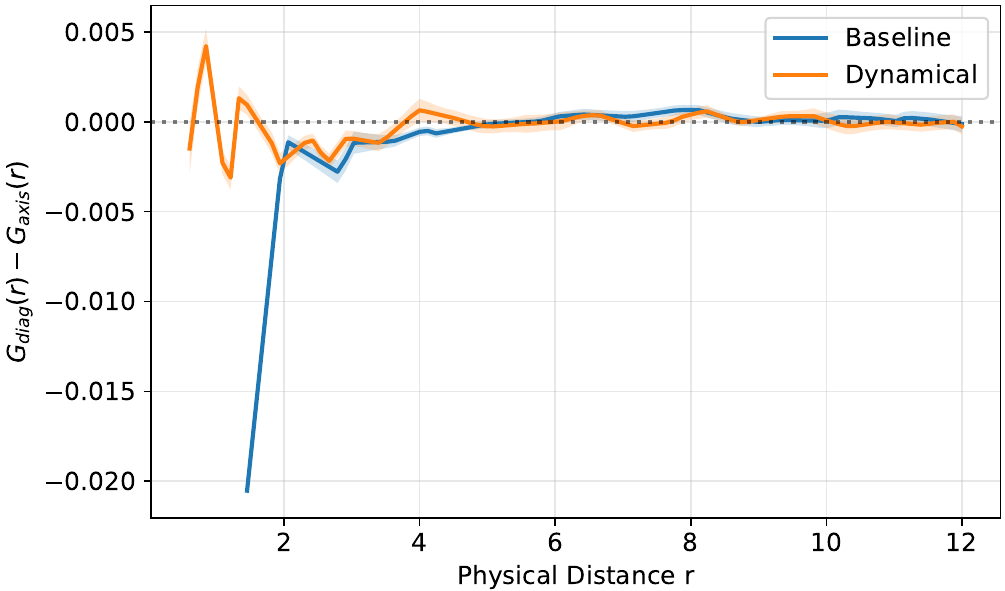}}  &
\subcaptionbox{Directional correlators $G_{\rm axis},G_{\rm diag}$: Baseline (left) and dynamical (right).\label{fig:corr_directional}}{%
  \includegraphics[width=\linewidth,height=0.27\textheight,keepaspectratio]{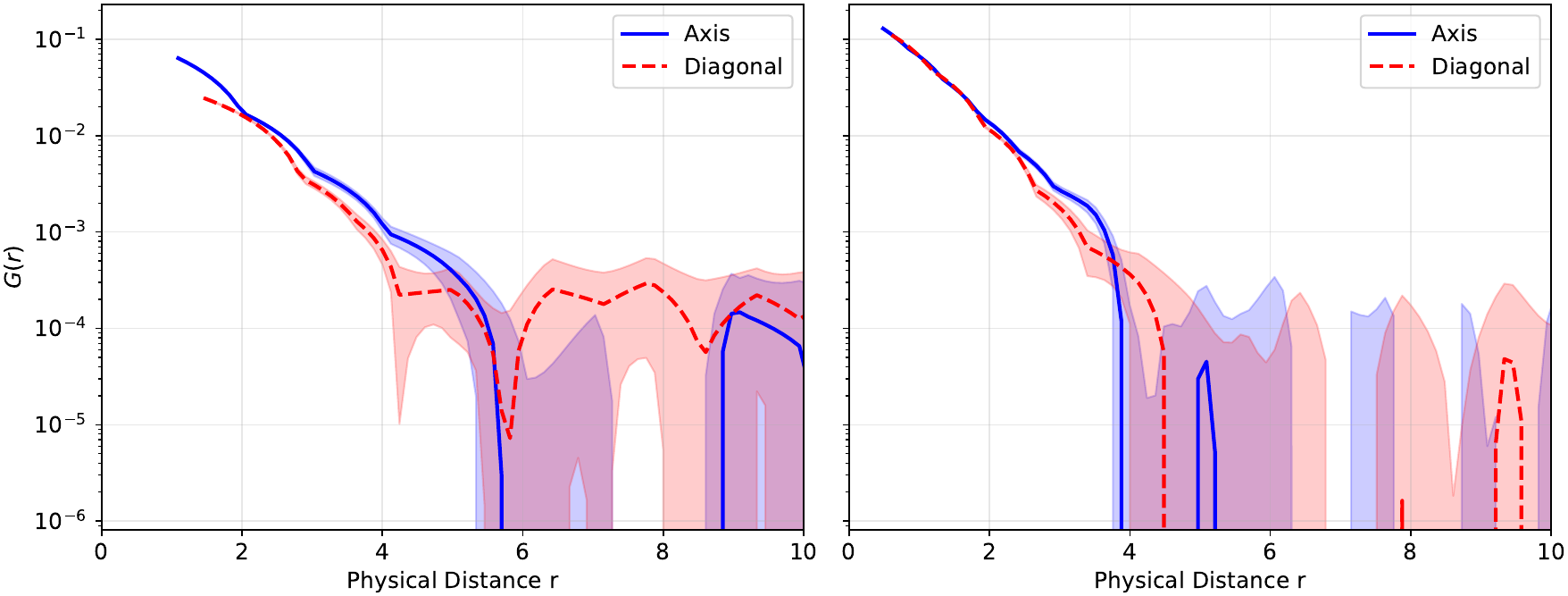}} \\[8mm]

\subcaptionbox{Angle-binned gradient profile $H(\theta)$.\label{fig:anisotropy_profile}}{%
  \includegraphics[width=\linewidth,height=0.27\textheight,keepaspectratio]{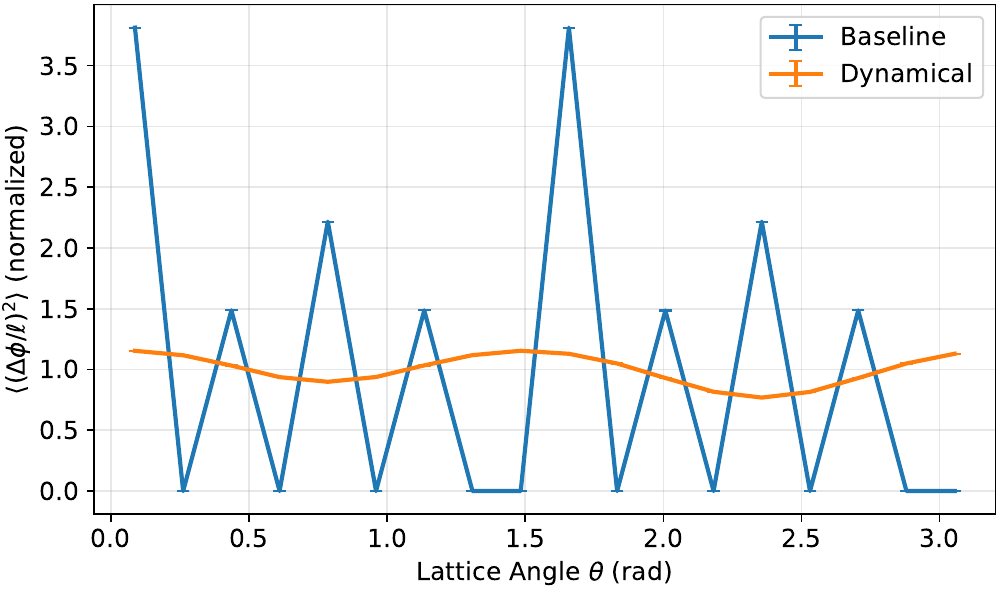}} &
\subcaptionbox{Universality checks: $U_4$ and $\xi/L$.\label{fig:universality}}{%
  \includegraphics[width=0.49\linewidth,height=0.27\textheight,keepaspectratio]{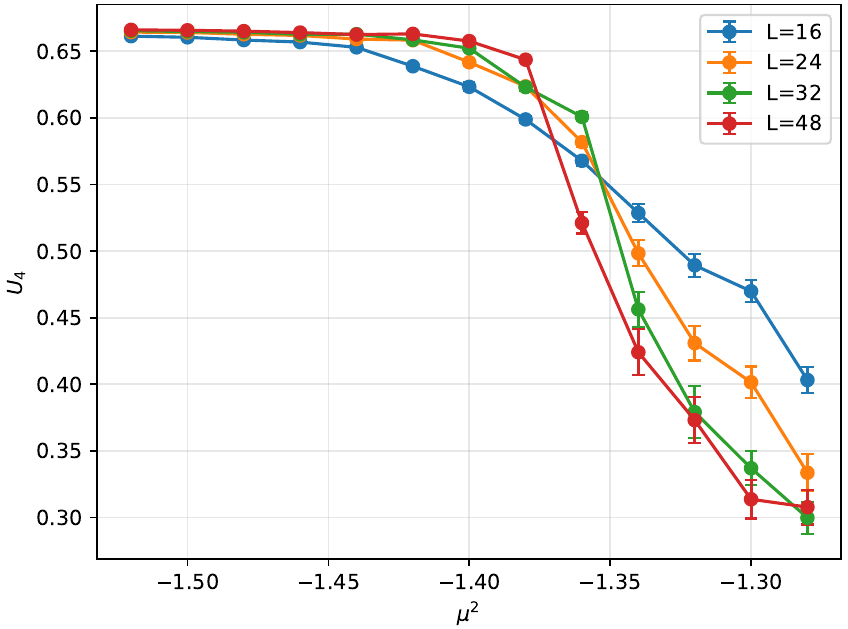}\hfill
  \includegraphics[width=0.49\linewidth,height=0.27\textheight,keepaspectratio]{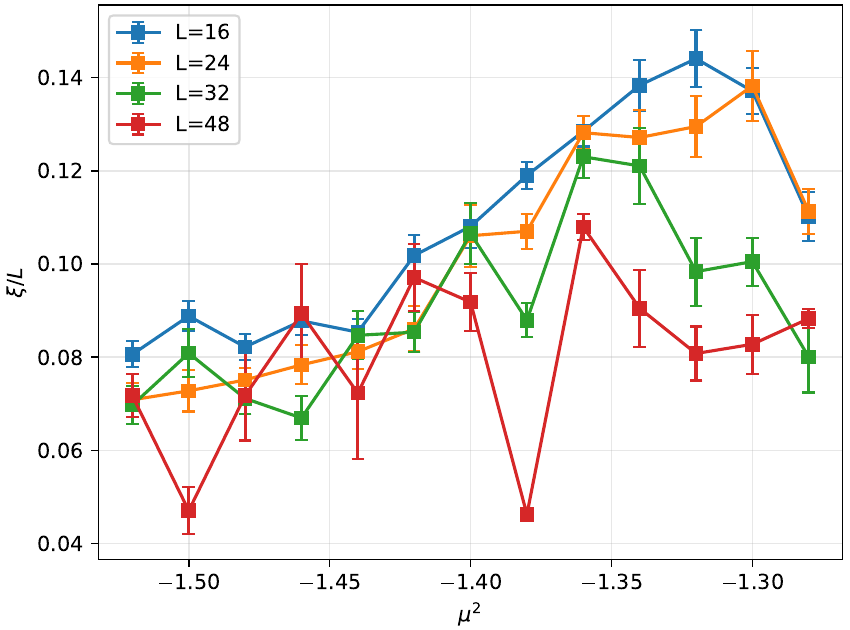}} \\

\end{tabular}

\caption{Matter, rotational-symmetry, and stability diagnostics comparing baseline and dynamical ensembles.}
\label{fig:plate_matter}
\end{figure}
\end{landscape}
\clearpage

\section{Conclusions and outlook}
\label{sec:conclusions}

In this paper we established that the dynamical--lattice regulator (DLR) provides
a local, gauge-invariant cutoff that is exactly \(\SEd\)-covariant at finite lattice
spacing, and admits Osterwalder--Schrader reflection positivity (hence a standard
transfer-matrix reconstruction) for a broad class of coupled geometry--field actions.
Assuming the short-range geometry hypothesis~\hyperref[hyp:SR]{(SR)}, integrating
out \(x\) at fixed \(a\) yields a local Symanzik effective action in which geometry
fluctuations generate only \(\SOd\)-invariant irrelevant operators and finite
renormalisations, a mechanism checked explicitly at one loop in scalar \(\phi^4\).
Our \(d=2\) Monte Carlo tests provide proof-of-concept evidence for stable
\(O(1)\)-scale geometry correlations near matter criticality, short-range decorrelation
of the local frame connection, reduced rotational artefacts at finite cutoff, and
matching critical scaling with the baseline theory.
These numerical tests do not establish the existence of a useful short-range geometry
phase in \(d=4\); that remains a separate geometry-phase-diagram problem.

From the lattice--QCD viewpoint, DLR may be regarded as an \emph{annealed geometric averaging in
coordinate space} that preserves the transfer--matrix/reflection--positivity framework, complementary to
field--space smoothing techniques (smearing, gradient flow) that improve operators/observables but do
not on their own define a new reflection--positive cutoff.  Structurally, DLR also occupies a
conservative middle ground between rigid hypercubic regulators and fluctuating--topology approaches
such as random lattices and dynamical triangulations: the abstract connectivity is fixed (so reflection
positivity and standard lattice algorithms remain accessible), while the embedding fluctuates with a
local action \emph{inside} the same path integral as the gauge and matter fields.

\subsection{Future directions}

There are several natural directions in which the present work can be extended.
\begin{itemize}
  \item \emph{Systematic numerical studies.}
  The exploratory scalar simulations in Section~\ref{sec:numerics} should be developed into a systematic
  study of rotational symmetry restoration, autocorrelations and finite--size effects, and extended to
  gauge theories in $d=3,4$.  In the gauge sector this should include standard benchmarks such as
  plaquette distributions, gradient--flow scales, the static quark potential and string tension
  (including the L\"uscher term), as well as diagnostics of topological objects and topological--charge
  autocorrelations (topology freezing) and their sensitivity to grid artefacts.  A quantitative
  comparison of anisotropy diagnostics and scaling violations between static and dynamical meshes would
  clarify the practical gains of the regulator.

  \item \emph{Geometry phase structure in $4d$ and the short--range hypothesis.}
  The viability of DLR as a regulator hinges on an operating window in which the geometry sector remains
	short-range~\hyperref[hyp:SR]{(SR)} after fixing the global $\SEd$ zero modes: connectivity is fixed and nondegenerate,
	yet local orientations remain disordered. While the
  $d=2$ scalar experiments provide evidence consistent with such a regime even for a minimal geometry
  action, in $4d$ this is not guaranteed: competing entropic regimes (e.g.\ crystallisation/orientational
  ordering or, at the opposite extreme, crumpling--type instabilities familiar from random geometry
  models) may appear unless suitable stiffness terms and admissibility thresholds are chosen.  A
  priority for $4d$ applications is therefore to map the geometry phase
  diagram as a function of these parameters, using geometry correlators and autocorrelation times as
  diagnostics.

  \item \emph{Fermions and chiral symmetry.}
  In this first work we have not addressed the question of fermion doubling in detail.  Since the
  dynamical mesh restores exact $\SEd$ symmetry but does not change the local Dirac operator on the
  abstract lattice, one expects the usual no--go theorems to continue to apply.  Nonetheless, it would
  be worthwhile to investigate systematically how standard fermion formulations (Wilson, staggered,
  overlap) behave in the presence of a fluctuating geometry, and to quantify whether averaging over the
  mesh reduces rotational artefacts such as taste splitting in staggered fermions, effectively
  isotropising the doubler spectrum even if their number is preserved.

  \item \emph{Quenched geometry and ``quantum adaptivity''.}
  Besides the annealed formulation studied here, it would be interesting to consider a \emph{quenched}
  variant in which the abstract hypercubic connectivity is fixed and the embedding $x(n)=a\,n+\eta(n)$ is
  sampled independently from an $\SEd$--invariant short--range law, after which gauge and matter fields
  are simulated conditionally on $x$ and observables are averaged over the geometry ensemble.  This
  quenched perspective removes geometric backreaction and may therefore be more amenable to analytic
  control.  Conversely, in the annealed theory the fields can bias the geometry measure; it is an
  interesting possibility that this backreaction acts as a kind of \emph{quantum adaptivity}, with the
  auxiliary embedding degrees of freedom tending to adjust locally in response to field configurations,
  potentially reducing certain cutoff artefacts.

  \item \emph{Hamiltonian formulation.}
  An important direction is to develop a Kogut--Susskind--type Hamiltonian version of DLR.  Given
  Osterwalder--Schrader reflection positivity and the transfer matrix interpretation of the Euclidean
  theory, one expects a canonical description in which geometry variables appear as additional degrees
  of freedom on time slices and modify the electric and magnetic terms only via local
  geometry--dependent couplings.

  \item \emph{Other topologies and simplicial variants.}
  While we have focused on a hypercubic abstract lattice for constructive and algorithmic reasons, a
  simplicial version of DLR would be natural and would connect more directly to Regge calculus.  One
  could also consider other topologies (e.g.\ manifolds with boundaries, or nontrivial spatial topology)
  as long as a suitable notion of admissible embeddings and OS reflection exists.

  \item \emph{Coupling to dynamical gravity.}
  Finally, one may contemplate turning the auxiliary geometry sector into a genuine gravitational degree
  of freedom by adding a Regge--type curvature term to $S_x[x]$ and allowing the geometry to fluctuate on
  large scales.  This would blur the distinction between ``regulator geometry'' and ``physical
  geometry'', and might provide a technically conservative route towards combining lattice gauge theory
  with discrete approaches to quantum gravity.  Such questions are, however, well beyond the scope of
  the present paper.
\end{itemize}

We view the present work as a starting point, and we invite further numerical and analytical tests of
DLR, in particular in $d=4$ where establishing (SR) and benchmarking rotational artefacts against
standard improvement and flow--based techniques are decisive.

\section*{Acknowledgements}

I am grateful to the Mongolian physicists who took part in an online QFT--GR seminar several years ago.
Our discussions, aimed at revisiting the foundations of quantum field theory and general relativity from complementary viewpoints, encouraged me to think more deeply about quantum field theory and helped shape my understanding of the material underlying this work.
This work was supported by NSERC Discovery Grants Program.


\begin{thebibliography}{99}

\bibitem{Wilson74}
K.~G.~Wilson,
``Confinement of quarks,''
Phys.\ Rev.\ D {\bf 10} (1974), 2445--2459.

\bibitem{Kogut79}
J.~B.~Kogut,
``An introduction to lattice gauge theory and spin systems,''
Rev.\ Mod.\ Phys.\ {\bf 51} (1979), 659--713.

\bibitem{RotheBook}
H.~J.~Rothe,
\emph{Lattice Gauge Theories: An Introduction}, 4th ed.,
World Scientific, Singapore, 2012.

\bibitem{ChristFriedbergLeeRandom}
N.~H.~Christ, R.~Friedberg, and T.~D.~Lee,
``Random lattice field theory: General formulation,''
Nucl.\ Phys.\ B {\bf 202} (1982), 89--125.

\bibitem{ChristFriedbergLeeWeights}
N.~H.~Christ, R.~Friedberg, and T.~D.~Lee,
``Weights of links and plaquettes in a random lattice,''
Nucl.\ Phys.\ B {\bf 210} (1982), 337--346.

\bibitem{ColangeloScrimieriPseudo}
P.~Colangelo and E.~Scrimieri,
``Gauge theories on a pseudorandom lattice,''
Phys.\ Rev.\ D {\bf 35} (1987), 3193.

\bibitem{ColangeloCosmaiScrimieriPseudo}
P.~Colangelo, L.~Cosmai, and E.~Scrimieri,
``Computer simulation of gauge theories on a lattice with improved rotational symmetry,''
Comput.\ Phys.\ Commun.\ {\bf 54} (1989), 235--237.

\bibitem{Regge1961}
T.~Regge,
``General relativity without coordinates,''
Nuovo Cimento {\bf 19} (1961), 558--571.

\bibitem{HamberReview}
H.~W.~Hamber,
``Quantum gravity on the lattice,''
Gen.\ Relativ.\ Gravit.\ {\bf 41} (2009), 817--876.

\bibitem{AmbjornDurhuusJonssonBook}
J.~Ambj{\o}rn, B.~Durhuus, and T.~Jonsson,
\emph{Quantum Geometry: A Statistical Field Theory Approach},
Cambridge University Press, Cambridge, 1997.

\bibitem{AmbjornJurkiewiczLoll}
J.~Ambj{\o}rn, J.~Jurkiewicz, and R.~Loll,
``Dynamically triangulating Lorentzian quantum gravity,''
Nucl.\ Phys.\ B {\bf 610} (2001), 347--382.

\bibitem{AlbaneseAPE}
M.~Albanese \emph{et al.} (APE Collaboration),
``Glueball masses and string tension in lattice QCD,''
Phys.\ Lett.\ B {\bf 192} (1987), 163--169.

\bibitem{MorningstarPeardonStout}
C.~Morningstar and M.~Peardon,
``Analytic smearing of SU(3) link variables in lattice QCD,''
Phys.\ Rev.\ D {\bf 69} (2004), 054501.

\bibitem{HasenfratzKnechtliHYP}
A.~Hasenfratz and F.~Knechtli,
``Flavor symmetry and the static potential with hypercubic blocking,''
Phys.\ Rev.\ D {\bf 64} (2001), 034504.

\bibitem{LuscherFlowJHEP}
M.~L{\"u}scher,
``Properties and uses of the Wilson flow in lattice QCD,''
JHEP {\bf 08} (2010), 071.

\bibitem{SymanzikI}
K.~Symanzik,
``Continuum limit and improved action in lattice theories: (I). Principles and $\phi^4$ theory,''
Nucl.\ Phys.\ B {\bf 226} (1983), 187--204.

\bibitem{SymanzikII}
K.~Symanzik,
``Continuum limit and improved action in lattice theories: (II). O($N$) nonlinear sigma model in perturbation theory,''
Nucl.\ Phys.\ B {\bf 226} (1983), 205--227.

\bibitem{BuddHuangRussellActa}
C.~J.~Budd, W.~Huang, and R.~D.~Russell,
``Adaptivity with moving grids,''
Acta Numer.\ {\bf 18} (2009), 111--241.

\bibitem{HuangRussellBook}
W.~Huang and R.~D.~Russell,
\emph{Adaptive Moving Mesh Methods},
Applied Mathematical Sciences Vol.~174,
Springer, New York, 2011.

\bibitem{OS1}
K.~Osterwalder and R.~Schrader,
``Axioms for Euclidean Green's functions,''
Commun.\ Math.\ Phys.\ {\bf 31} (1973), 83--112.

\bibitem{OS2}
K.~Osterwalder and R.~Schrader,
``Axioms for Euclidean Green's functions II,''
Commun.\ Math.\ Phys.\ {\bf 42} (1975), 281--305.

\bibitem{LuscherTransfer}
M.~L{\"u}scher,
``Construction of a selfadjoint, strictly positive transfer matrix
for Euclidean lattice gauge theories,''
Commun.\ Math.\ Phys.\ {\bf 54} (1977), 283--292.

\bibitem{OsterwalderSeiler}
K.~Osterwalder and E.~Seiler,
``Gauge field theories on the lattice,''
Ann.\ Phys.\ {\bf 110} (1978), 440--471.

\bibitem{NielsenNinomiyaNoGo}
H.~B.~Nielsen and M.~Ninomiya,
``No-go theorem for regularizing chiral fermions,''
Phys.\ Lett.\ B {\bf 105} (1981), 219--223.

\bibitem{WilsonFermions}
K.~G.~Wilson,
``Quarks and strings on a lattice,''
in \emph{New Phenomena in Subnuclear Physics},
ed.\ A.~Zichichi, Plenum, New York, 1977, pp.\ 69--142.



\end{thebibliography}
\end{document}